\newtheorem{theorem}{Theorem}[section]
\newtheorem{assumption}[theorem]{Assumption}
\newtheorem{corollary}[theorem]{Corollary}
\newtheorem{definition}[theorem]{Definition}
\newtheorem{example}[theorem]{Example}
\newtheorem{lemma}[theorem]{Lemma}
\newtheorem{proposition}[theorem]{Proposition}
\newtheorem{remark}[theorem]{Remark}
\numberwithin{equation}{section}
\DeclareMathOperator{\essinf}{essinf}
\DeclareMathOperator*{\supp}{supp}
\newcommand{\me}{\mathrm{e}}
\newcommand{\md}{\mathrm{d}}
\newcommand{\mR}{\mathbb{R}}
\newcommand{\mS}{\mathbb{S}}
\newcommand{\mE}{\mathbb{E}}
\newcommand{\mEt}{\tilde{\mathbb{E}}}
\newcommand{\mF}{\mathbb{F}}
\renewcommand{\epsilon}{\varepsilon}
\newcommand{\F}{\mathcal{F}}
\newcommand{\barpi}{\bar{\pi}}
\renewcommand{\baselinestretch}{1.2}
\title{Equilibrium Investment with Random Risk Aversion:  (Non-)uniqueness, Optimality, and Comparative Statics
}
\author{Weilun Cheng\thanks{Department of Mathematical Sciences, Tsinghua University,  Beijing 100084, China (\url{ chengwl25@mails.tsinghua.edu.cn}).}, \ 
Zongxia Liang\thanks{Department of Mathematical Sciences, and Center for Insurance and Risk Management, School of Economics and Management, Tsinghua University， Beijing 100084, China
  (\url{liangzongxia@tsinghua.edu.cn}).}, \ 
    Sheng Wang\thanks{Department of Statistics and Actuarial Science, The University of Hong Kong, Pokfulam Road, Hong Kong, China (\url{sheng-wa15@tsinghua.org.cn}).}, \  Jianming Xia\thanks{State Key Laboratory of Mathematical Sciences, Academy of Mathematics and Systems Science, Chinese
Academy of Sciences, Beijing 100190, China (\url{xia@amss.ac.cn})}
}
\begin{document}
	
\maketitle
	
\maketitle

\begin{abstract}
 This paper studies a continuous-time portfolio selection problem under a general distribution of random risk aversion (RRA). We provide a complete characterization of all deterministic equilibrium strategies in closed form. Our results show that the structure of the solution depends crucially on the distribution of RRA: the equilibrium is unique (if exits) when the expectation of RRA is finite, whereas an infinite expectation leads either to infinitely many equilibria or to a unique trivial one (i.e. risk-free investment).
To resolve this multiplicity of equilibria, we select, among all deterministic equilibria, the one that maximizes the objective functional at the initial time. We establish a necessary and sufficient condition for the existence of such an optimal equilibrium, which is then shown to be unique and uniformly optimal.
Finally, we conduct a comparative statics. Using counterexamples based on two-point distributed RRA, we demonstrate that a larger risk aversion in the sense of first-order stochastic dominance does not necessarily lead to less risky investment. Within the two-point distribution framework, we further examine the single-crossing property of equilibrium strategies and the monotonicity of the crossing time. We show that a larger risk aversion under a stronger stochastic order---the reverse hazard rate order---always leads to less risky investment. In addition, we analyze how the convex combination of independent and identically distributed RRAs influences investment. 
\end{abstract}

\textbf{Keywords:} {Random Risk Aversion; Time-Inconsistency;  Portfolio Selection;  Intra-Personal Equilibrium; Multiple Equilibria; Comparative Statics}

\section{Introduction}\label{intro}

The classical framework for portfolio selection, pioneered by \cite{merton1969lifetime,merton1971} and \cite{samuelson1969lifetime}, is based on the assumption that an investor's risk aversion can be captured by a single known constant. In practice, however, this coefficient is very difficult for an agent to specify with certainty, and a growing body of empirical evidence suggests that it varies dynamically with market conditions, often increasing during financial crises and bear markets, as shown in studies by \cite{gordon2000preference}, \cite{chetty2006new}, and \cite{guiso2018time}. These challenges have motivated a move towards models that incorporate random risk aversion (RRA). Within this paradigm, a more economically coherent approach, as advanced by \cite{Desmettre2023}, is to aggregate the certainty equivalents over the distribution of the risk aversion parameter. 
While the aggregation in \cite{Desmettre2023} provides a consistent performance measure in monetary units, it introduces a nonlinearity in expectation, which naturally leads to the time inconsistency of the preference.

The first systematic treatment of time inconsistency was given by \cite{Strotz1955}, who proposed the notion of consistent planning, laying the foundation for the intra-personal game-theoretic framework. Building on this idea, \cite{ekeland2006being} later developed a rigorous continuous-time formalization, which in turn inspired extensive research on continuous-time control with time-inconsistent preferences. For example,  \cite{bjork2010general} (published version \cite{Bjork2017}) established a general theoretical framework by deriving an extended HJB equation and \cite*{hu2012time,Hu2017} investigated a time-inconsistent stochastic linear–quadratic control problem. For further discussions and related developments， see， e.g., \cite{yan2019time}, \cite{he2021equilibrium}, \cite{Hernandez2023}, and the references therein.

Within the Black-Scholes market model, \cite{Desmettre2023} provide a verification theorem on the intra-personal equilibrium strategies for a general RRA and, in the case of a two-point distributed RRA, characterize the equilibrium by a three-dimensional ODE system without establishing the existence and uniqueness of the solution.\footnote{This work has spurred further research extending the model to more complex settings. For example, \cite*{chen2025equilibrium} incorporate a regime-switching framework in which market dynamics and preferences co-evolve; \cite{aquino2025equilibrium} study a continuous-time portfolio choice problem where state-dependent preferences are determined by an exogenous factor evolving as an Itô diffusion; \cite{WANG2025103140} study the management of the DC pension plan, providing semi-explicit solutions for a two-point distributed RRA; \cite*{he2025dynamic} establish the existence and uniqueness of the solution to an infinite-dimensional Riccati equation which characterizes the equilibrium strategies under Heston's SV model for a general bounded
RRA. For an investigation on the corresponding pre-commitment problem, see, e.g., \citet{xia2024optimal}.} 
This technical gap is filled by \cite*{liang2025short}.   For a general RRA, \cite*{liang2025short} characterize 
intra-personal equilibrium strategies using an integral equation. They establish the existence and uniqueness of the solution to the integral equation under the assumptions that the variance of the RRA is finite and that the function $h$ defined in (\ref{eq:h}) below is bounded. 

This paper aims to provide deeper insight into the portfolio selection problem with RRA within the simplest modeling framework, in line with \cite{Desmettre2023} and \cite*{liang2025short}. Our main contributions are summarized below.
\begin{itemize}
\item \textbf{Complete characterization of all deterministic equilibria in closed form.} 
We relax the assumptions in the existing literature and derive, in closed form, all deterministic equilibrium strategies. An important finding is that there may be infinitely many equilibria when the expectation of the RRA is infinite. The uniqueness of the equilibrium strategy for the continuous-time time-inconsistent control problem is a very challenging problem (see, e.g.， \citet*[Section 7]{hjz2021}). This paper sheds light on this issue by demonstrating that equilibrium strategies, even within the class of deterministic strategies, may lack uniqueness. Furthermore, we provide a necessary and sufficient condition for the uniqueness of the deterministic equilibrium.

\item \textbf{Resolution of equilibrium multiplicity via the optimality criterion.}
A question then arises: which one should the agent choose from these infinitely many equilibrium strategies? 
We suggest that the agent should maximize the objective functional at the initial time over all available equilibria. This leads to a discussion of the so-called \emph{optimal} equilibria. 
To our best knowledge, there is only one existing paper, \cite*{wei2026time}, on optimal equilibria for a time-inconsistent \emph{control} problem, and there are three papers, \citet{hz2019,hz2020mf} and \citet{hw2021}, on optimal equilibria for time-inconsistent \emph{stopping} problems. 
We give a necessary and sufficient condition for the existence of optimal equilibria and observe that the optimal equilibrium, if it exists, is unique and is also \emph{uniformly} optimal. Moreover, if the market price of risk remains non-zero near the terminal time, the optimal equilibrium is also \emph{uniformly strictly} optimal.

\item \textbf{Comparative statics on equilibrium strategies.}
We investigate how the distribution of RRA affects the equilibrium strategies.  Intuitively, one would expect a “comparative property”: a “larger” RRA should lead to less risky investment. However, through counterexamples involving two-point distributed RRA, we demonstrate that “largeness” in the sense of first-order stochastic dominance is \emph{insufficient} to guarantee this property. Within the same two-point distribution framework, we further examine the single-crossing property of equilibrium strategies and the monotonicity of the crossing time with respect to (w.r.t.) the probability parameter. In contrast, a stronger stochastic order—reverse hazard rate dominance—is shown to be \emph{sufficient} for the comparative property to hold. This yields an analogy with \cite{Borell2007} and \cite{xia2011risk}, who conducted comparative statics on optimal portfolios w.r.t. risk aversion in the context of expected utility maximization with general utility functions under the Black–Scholes market model. (The corresponding analysis for a static model with one risk-free asset and one risky asset dates back to \cite{Arrow1970} and \cite{Pratt1964:RiskAversion}.) Finally, we show that a convex combination of independent and identically distributed (i.i.d.) RRAs results in less risky investment.
\end{itemize}

The rest of the paper is organized as follows. Section \ref{sec:time-inconsistent} introduces the model and derives the integral equation that characterizes the equilibrium. Section \ref{Equilibrium Analysis and Characterization of Solutions} analyzes both finite- and infinite-expectation cases, offering closed-form solutions for all equilibria. Section \ref{Optimal Equilibrium Selection} resolves equilibrium multiplicity by proposing an optimality-based selection criterion, and establishes necessary and sufficient conditions for the existence, uniqueness, and uniform (strict) optimality of optimal equilibria. Section \ref{Comparative Statics of Equilibrium Strategies} performs a comparative statics of equilibrium strategies w.r.t. the distribution and the aggregation of RRA, identifying conditions under which “larger” risk aversion leads to less risky investment.

\section{Problem Formulation}\label{sec:time-inconsistent}

Let $T>0$ be a finite time horizon and $(\Omega, \mathcal{F}, \mathbb{F}, \mathbb{P})$ a filtered complete probability space, where $\mathbb{F}=\{\mathcal{F}_{t}:\ 0\le t\le T\}$ is the augmented natural filtration generated by a $d$-dimensional standard Brownian motion $\{B(t):\ 0\le t\le T\}$ and $\mathcal{F}=\mathcal{F}_{T}$. The market consists of one risk-free asset (bond) and $d$ risky assets (stocks). For simplicity, we assume that the interest rate of the bond is zero. The dynamics of the stock price processes $S_{i}$, $i=1,\cdots,d$, is given by
$$ dS_{i}(t)=S_{i}(t)[\mu_{i}(t)dt+\sigma_{i}(t)dB(t)],\quad t\in[0,T], $$
where the market coefficients $\mu:[0,T]\rightarrow\mathbb{R}^{d}$ and $\sigma:[0,T]\rightarrow\mathbb{R}^{d\times d}$ are right-continuous and deterministic (each $\sigma_{i}$ denotes the $i$-th row of $\sigma$). We always assume that $$\int_{0}^{T}|\mu(t)|dt+\sum_{i=1}^{d}\int_{0}^{T}|\sigma_{i}(t)|^{2}dt<\infty$$ and that $\sigma(t)$ is invertible for every $t\in[0,T]$. The market price of risk is $\lambda(t)\triangleq(\sigma(t))^{-1}\mu(t)$, $t\in[0,T]$. We also assume that $\int_0^T|\lambda(t)|^2\md t<\infty$.

For any $m\geq 1$ and $\mS \subset \mR^m$, $L^0(\mF,\mS)$ is the space consisting of $\mS$-valued, $\mF$-progressively measurable processes. For each $t\in[0,T]$, $p\in[1,\infty]$, $L^p(\F_t,\mS)$ is the set of all $L^p$-integrable, $\mS$-valued, and $\F_t$-measurable random variables. 

A trading strategy is a process $\pi=\{\pi_t:\ t\in[0,T)\}\in L^0(\mF,\mR^d)$ such that $$\int_0^T|\pi^\top_t\mu(t)|\md t+\int_0^T|\sigma^\top(t)\pi_t|^2\md t<\infty\text{ a.s.,}$$ where $\pi_t$ is the vector of portfolio weights at time $t$. 
For any $t\in[0,T)$ and any initial wealth level $w>0$, let $W^{t,w,\pi}=\{W^{t,w,\pi}_s:\ t\le s\le T\}$ denote the self-financing wealth process starting from $w$ at time $t$ and driven by a trading strategy $\pi$. It evolves according to the stochastic differential equation (SDE):
\begin{equation}\label{wealthdynamic}
	\left\{
	\begin{aligned}
		&\md W^{t,w,\pi}_s=W^{t,w,\pi}_s\pi^{\top}_s\mu(s)\md s+W^{t,w,\pi}_s \pi^{\top}_s\sigma(s)\md B(s),\quad s\in[t,T],\\	
		&W^{t,w,\pi}_t=w.
	\end{aligned}
	\right.
\end{equation}

Consider the constant relative risk aversion (CRRA) utility functions:
\begin{align*}
u^{\gamma}(x) = 
\begin{cases} 
    \frac{x^{1-\gamma}}{1-\gamma}, & \gamma \ge 0, \gamma \neq 1, x > 0, \\
    \log x, & \gamma = 1, x > 0.
\end{cases}
\end{align*}
Under the expected utility theory, for each CRRA utility function $u^\gamma$, at time $t$, the certainty equivalent of $W^{t,w,\pi}_T$ is
\(\left(u^\gamma\right)^{-1}\left(\mE_{t}\left[u^\gamma\left(W^{t,w,\pi}_T\right)\right]\right)\),
where $\mE_t$ denotes the conditional expectation w.r.t. $\mathcal{F}_t$.
Following \cite{Desmettre2023}, we assume that the objective functional of the agent at time $t$ is  
the weighted average of the certainty equivalents: 
\begin{equation}\label{J:rra}
J\left(\pi;t,w\right)=\int_{[0,\infty)}\left(u^\gamma\right)^{-1}\left(\mE_{t}\left[u^\gamma\left(W^{t,w,\pi}_T\right)\right]\right) \md F_{\boldsymbol{R}}(\gamma),
\end{equation}
where $F_{\boldsymbol{R}}$ is the probability distribution function of the RRA $\boldsymbol{R}$, a non-negative random variable defined on another independent probability space $(\tilde{\Omega}, \tilde{\mathcal{F}}, \tilde{\mathbb{P}})$. We always assume $F_{\boldsymbol{R}}(0)<1$, which amounts to $\tilde{\mathbb{P}}(\boldsymbol{R}>0)>0$.\footnote{We allow $\tilde{\mathbb{P}}(\boldsymbol{R}=0)>0$, which accommodates some common distributions, such as the Poisson distribution discussed in Example~\ref{ex:h_calculation}. }
 We denote by $\mEt[\cdot]$ the expectation operator under $\tilde{\mathbb{P}}$.

A trading strategy $\pi$ is called \emph{admissible}  if for all $t\in[0,T)$ and $w>0$,  
 \begin{align*}
 \begin{cases}
\mE_{t}\left[\left|u^\gamma\left(W^{t,w,\pi}_T\right)\right|\right]<\infty\, \ a.s.\quad \text{ for all } \gamma \in\supp(F_{\boldsymbol{R}}),\\
\displaystyle\int_{[0,\infty)}\left|\left(u^\gamma\right)^{-1}\left(\mE_{t}\left[u^\gamma\left(W^{t,w,\pi}_T\right)\right]\right) \right|\md F_{\boldsymbol{R}}(\gamma)<\infty\,\ \  a.s.
\end{cases}
 \end{align*}
Let $\Pi$ denote the set of all admissible strategies. 

Hereafter, we  consider a fixed $\barpi\in\Pi$, which is a candidate equilibrium strategy. 
For any $t\in[0,T)$, $\epsilon\in(0,T-t)$, and $k\in  L^\infty(\F_t,\mathbb{R}^d)$,   let $\pi^{t,\epsilon,k}$  be defined by
\[\pi^{t,\epsilon,k}_s\triangleq
\left\{
\begin{aligned}
	&{\barpi_s+k}, &s\in [t,t+\epsilon),\\
	&\barpi_s, &s\notin [t,t+\epsilon).
\end{aligned}
\right.
\]
The strategy $\pi^{t,\epsilon,k}$ serves as a perturbation of $\barpi$. 
Following \cite*{hu2012time,Hu2017} and
\cite*{yan2019time}, we introduce the definition of equilibrium strategies as follows. 

\begin{definition}\label{def:equilibrium}
A strategy $\barpi$ is called an equilibrium strategy if, for any $t \in [0,T)$, $w>0$,  $k \in L^{\infty}(\mathcal{F}_t, \mathbb{R}^d)$ with $\pi^{t, \epsilon, k} \in \Pi$ for all sufficiently small $\epsilon > 0$, and  any positive sequence $\{\epsilon_n, n \ge 1\}$ with $\lim\limits_{n \to \infty} \epsilon_n = 0$, we have
\begin{align}\label{limsup:J}
\limsup_{n \to \infty} \frac{J( \pi^{t, \epsilon_n, k};t,w) - J( \barpi;t,w)}{\epsilon_n} \le 0 \quad \text{a.s.}
\end{align}
\end{definition}

\begin{remark}
In literature, the following inequality 
\begin{align*}
\limsup_{\epsilon \downarrow 0} \frac{J( \pi^{t, \epsilon, k};t,w) - J(\barpi;t,w)}{\epsilon} \le 0 \quad \text{a.s.}
\end{align*}
is usually used to define equilibrium strategies. However, the left-hand side of the inequality might be unmeasurable. To guarantee the measurability, we use the upper limit of a sequence instead of the upper limit of an uncountable net.
\end{remark}
\begin{remark}\label{rmk:j_0}
    Note that
    \begin{align*}
        J(\pi;t,w)&=w\int_{[0,\infty)}\left(u^\gamma\right)^{-1}\left(\mE_{t}\left[u^\gamma\left({W^{t,w,\pi}_T}/ w\right)\right]\right) \md F_{\boldsymbol{R}}(\gamma)=wJ_0(t,\pi),
\end{align*}
where 
\begin{align*} J_0(t,\pi)\triangleq\int_{[0,\infty)}\left(u^\gamma\right)^{-1}\left(\mE_{t}\left[u^\gamma\left({W^\pi_T/ W^\pi_t}\right)\right]\right) \md F_{\boldsymbol{R}}(\gamma)
\end{align*}
is the objective functional used in \cite*{liang2025short} and $W^\pi$ evolves according to
\begin{align*}
    \left\{
    \begin{aligned}
        &\md W^\pi_t=W^\pi_t \pi^{\top}_t\mu(t)\md t+W^\pi_t \pi^{\top}_t\sigma(t)\md B(t),\quad	t\in[0,T],\\
		&W^\pi_0=w_0>0.   
    \end{aligned}
    \right.
\end{align*}
Thus, under Definition \ref{def:equilibrium}, an equilibrium strategy for $J$ is also an equilibrium strategy for $J_0$, and vice versa. We will frequently use $J_0$ in the subsequent analysis.
\end{remark}

In our analysis, we follow the approach of \cite*{liang2025short} and focus on the equilibrium strategies given by $\bar{\pi}=(\sigma^\top)^{-1}a$, where $a$, referred to as the \emph{risk exposure vector}, is a deterministic and right-continuous function in the $L^2$-space. Let $\Pi_{d} \subset \Pi$ denote the set of all such strategies. Because $k\in L^{\infty}(\mathcal{F}_t, \mathbb{R}^d)$, the perturbed strategy $\pi^{t,\epsilon,k}$ can be random, even though $\barpi\in\Pi_d$ itself is deterministic. 
The wealth process $W^{t,w,\barpi}$ of the portfolio $\barpi$ satisfies the following SDE
\begin{align*}
    \left\{
    \begin{aligned}
        &\md W^{t,w,\barpi}_s = W^{t,w,\barpi}_s[a(s)^\top\lambda(s)\md s+a(s)^\top\md B(s)],\quad s\in[t,T],\\
        &W^{t,w,\barpi}_t=w,   
    \end{aligned}
    \right.
\end{align*}
and $|a|$ is referred to as the \emph{risk exposure magnitude} (or local volatility).

For any $\bar{\pi} = (\sigma^\top)^{-1} a \in \Pi_{d}$, we introduce the following notations for ease of analysis:
$$
v_{a}(t) \triangleq \int_{t}^{T} |a(s)|^2 ds, \quad y_{a}(t) \triangleq \int_{t}^{T} a^{\top}(s)\lambda(s)ds,\quad t\in[0,T].
$$
Because the strategy is deterministic, the conditional distribution of the relative wealth $W^{t,w,\barpi}_T/w$ depends only on the deterministic quantities $v_{a}(t)$ and $y_{a}(t)$ and we have
\begin{align}\label{the definition of J}
J_0(t,\barpi)=\tilde{\mE}\left[\me^{-\frac{1}{2}\left(\boldsymbol{R}v_a(t)-2y_a(t)\right) }\right].
\end{align}
 
The following theorem shows that $\barpi\equiv \mathbf{0}$ is an equilibrium if $\tilde{\mE}[\boldsymbol{R}]=\infty$. (This result is natural: if the RRA is very large, then the investor puts no money into the risky asset. However, in Theorem \ref{theorem3.3}(2) we will see that $\barpi\equiv \mathbf{0}$ is not necessarily the unique equilibrium.)

\begin{theorem}\label{thm:0:equi}
The risk-free investment ($\barpi\equiv \mathbf{0}$) is an equilibrium if $\tilde{\mE}[\boldsymbol{R}]=\infty$.
\end{theorem}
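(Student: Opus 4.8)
The plan is to verify Definition~\ref{def:equilibrium} directly. For $\barpi\equiv\mathbf{0}$ the wealth process is constant, $W^{t,w,\barpi}_s\equiv w$, so $J(\barpi;t,w)=w$, i.e.\ $J_0(t,\barpi)=1$. Fix $t\in[0,T)$, $w>0$, and $k\in L^\infty(\F_t,\mR^d)$. Along the perturbation one trades on $[t,t+\epsilon)$ with the (conditionally on $\F_t$) constant exposure $k$ and then holds, so $W^{t,w,\pi^{t,\epsilon,k}}_T=W^{t,w,\pi^{t,\epsilon,k}}_{t+\epsilon}$ is conditionally log-normal given $\F_t$. Conditioning on $\F_t$ turns $\pi^{t,\epsilon,k}$ into a deterministic strategy over $[t,T)$ whose risk-exposure vector is $\sigma^\top(s)k$ on $[t,t+\epsilon)$ and $\mathbf{0}$ afterwards, so the computation leading to \eqref{the definition of J} applies verbatim and yields
\begin{equation*}
J_0\bigl(t,\pi^{t,\epsilon,k}\bigr)=\tilde{\mE}\!\left[\me^{-\frac12(\boldsymbol{R}\,\Sigma_\epsilon-2y_\epsilon)}\right]=\me^{y_\epsilon}\,\phi(\Sigma_\epsilon),\qquad \phi(u)\defeq\tilde{\mE}\!\left[\me^{-\frac12\boldsymbol{R} u}\right],
\end{equation*}
where $\Sigma_\epsilon\defeq\int_t^{t+\epsilon}|\sigma^\top(s)k|^2\,\md s$ and $y_\epsilon\defeq\int_t^{t+\epsilon}k^\top\mu(s)\,\md s$. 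A short check (using $|k|\le\|k\|_\infty$, $\mu\in L^1$, $\sigma\in L^2$, so the relevant conditional moment generating functions are finite) shows $\pi^{t,\epsilon,k}\in\Pi$ for \emph{every} such $k$; hence no admissibility restriction on $k$ is needed.

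Next I would rewrite the difference quotient. By Remark~\ref{rmk:j_0}, $J=wJ_0$, so on the $\F_t$-event $\{k\neq\mathbf{0}\}$ — where $\Sigma_\epsilon>0$ for all small $\epsilon>0$, by invertibility and right-continuity of $\sigma$ at $t$ —
\begin{equation*}
\frac{J(\pi^{t,\epsilon,k};t,w)-J(\barpi;t,w)}{\epsilon}=w\left(\me^{y_\epsilon}\,\frac{\phi(\Sigma_\epsilon)-1}{\Sigma_\epsilon}\cdot\frac{\Sigma_\epsilon}{\epsilon}+\frac{\me^{y_\epsilon}-1}{\epsilon}\right),
\end{equation*}
while on $\{k=\mathbf{0}\}$ the perturbation coincides with $\barpi$ and the quotient is $0$. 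As $\epsilon\downarrow0$ on $\{k\neq\mathbf{0}\}$: right-continuity of $\mu$ and $\sigma$ and boundedness of $k$ give $y_\epsilon\to0$, $\me^{y_\epsilon}\to1$, $\me^{y_\epsilon}-1=O(\epsilon)$, and $\Sigma_\epsilon/\epsilon\to|\sigma^\top(t)k|^2\in(0,\infty)$. The key fact is the slope of $\phi$ at $0$: since $u\mapsto(1-\me^{-\frac12\boldsymbol{R} u})/u$ increases to $\tfrac12\boldsymbol{R}$ as $u\downarrow0$, monotone convergence gives $(1-\phi(u))/u\uparrow\tfrac12\tilde{\mE}[\boldsymbol{R}]=+\infty$, i.e.\ $(\phi(u)-1)/u\to-\infty$. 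Hence the bracketed term tends to $1\cdot(-\infty)\cdot|\sigma^\top(t)k|^2+O(1)=-\infty$, so the difference quotient tends to $-\infty$.

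Therefore, for every sequence $\{\epsilon_n\}$ with $\epsilon_n\downarrow0$, the $\limsup$ in \eqref{limsup:J} equals $-\infty$ on $\{k\neq\mathbf{0}\}$ and $0$ on $\{k=\mathbf{0}\}$, hence is $\le0$ a.s.; since $t,w,k$ were arbitrary and all perturbations are admissible, $\barpi\equiv\mathbf{0}$ satisfies Definition~\ref{def:equilibrium}.

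I expect the main obstacle to be twofold and essentially technical: (i) justifying that the closed form \eqref{the definition of J}, derived for deterministic strategies, applies to the $\F_t$-random perturbation, which is handled by conditioning on $\F_t$ and using the independence of $\boldsymbol{R}$; and (ii) controlling $\Sigma_\epsilon/\epsilon$ so that it stays bounded away from $0$, which is precisely where right-continuity and invertibility of $\sigma$ at $t$ enter. The conceptual core is the single identity $\phi'(0^+)=-\tfrac12\tilde{\mE}[\boldsymbol{R}]=-\infty$, which is where the hypothesis $\tilde{\mE}[\boldsymbol{R}]=\infty$ is used: it forces the first-order cost of any deviation with $k\neq\mathbf{0}$ to be $-\infty$, so no such deviation can be profitable.
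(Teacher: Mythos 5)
Your proposal is correct and follows essentially the same route as the paper: both compute $J_0(t,\pi^{t,\epsilon,k})$ in the product form $\me^{\int_t^{t+\epsilon}k^\top\mu(s)\,\md s}\,\tilde{\mE}\bigl[\me^{-\frac{1}{2}\boldsymbol{R}\int_t^{t+\epsilon}|\sigma^\top(s)k|^2\md s}\bigr]$ and use monotone convergence to identify the first-order term $-\tfrac12|\sigma^\top(t)k|^2\,\tilde{\mE}[\boldsymbol{R}]=-\infty$ for $k\neq\mathbf{0}$. Your explicit splitting of the events $\{k=\mathbf{0}\}$ and $\{k\neq\mathbf{0}\}$ and the factorization through $(\phi(u)-1)/u$ are minor presentational variants of the paper's mean-value-theorem step, not a different argument.
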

\begin{proof}
    Let $\barpi=\mathbf{0}$. For any $t\in[0,T)$, $\epsilon\in(0,T-t)$, and $k\in  L^\infty(\F_t,\mathbb{R}^d)$, we have $J_0(t,\barpi)=1$ and 
        \begin{align*} J_0(t,\pi^{t,\epsilon,k})=\me^{\int_t^{t+\epsilon}\mu^{\top}(s)k\md s}\int_{[0,\infty)}\me^{-\frac{\gamma}{2}\int_t^{t+\epsilon}|\sigma^{\top}(s)k|^2\md s}\md F_{\boldsymbol{R}}(\gamma)\triangleq \phi(\epsilon) \psi (\epsilon).
        \end{align*}
       Noting that
       \begin{align*}
           \lim_{\epsilon\to 0}\frac{\psi(\epsilon)-1}{\epsilon}
           &=-\frac{1}{2}|\sigma^{\top}(t)k|^2\lim_{\epsilon\to 0}\int_{[0,\infty)}\gamma\me^{-\frac{\gamma}{2}\int_t^{t+\epsilon}|\sigma^{\top}(s)k|^2\md s}\md F_{\boldsymbol{R}}(\gamma)\\
           &=-\frac{1}{2}|\sigma^{\top}(t)k|^2\int_{[0,\infty)}\gamma\md F_{\boldsymbol{R}}(\gamma)=-\frac{1}{2}|\sigma^{\top}(t)k|^2\tilde{\mE}[\boldsymbol{R}],
       \end{align*}
       where the first equality uses  the mean value theorem
       and the second the monotone convergence theorem. Thus,  for any positive sequence $\{\epsilon_n, n \ge 1\}$ satisfying $\lim\limits_{n \to \infty} \epsilon_n = 0$, we have 
       \begin{align*}
           \lim_{n\to \infty} \frac{J_0(t, \pi^{t, \epsilon_n, k}) - J_0(t, \barpi)}{\epsilon_n}&= \lim_{n\to \infty}\frac{\phi(\epsilon_n)\psi(\epsilon_n)-1}{\epsilon_n}=\lim_{n\to \infty}\frac{(\phi(\epsilon_n)-1)\psi(\epsilon_n)+\psi(\epsilon_n)-1}{\epsilon_n}\\
           &=\left(\mu^{\top}(t)k-\frac{1}{2}|\sigma^{\top}(t)k|^2\tilde{\mE}[\boldsymbol{R}]\right).
       \end{align*}   
       Therefore, $\mathbf{0}$ is an equilibrium if $\tilde{\mE}[\boldsymbol{R}]=\infty$. 
\end{proof}

The next theorem provides a necessary and sufficient condition for 
$\bar{\pi} = (\sigma^\top)^{-1} a \in \Pi_{d}$ to be an equilibrium.

\begin{theorem} \label{thm:integral:eq}
A strategy $\bar{\pi} = (\sigma^\top)^{-1} a \in \Pi_{d}$ is an equilibrium  if and only if
\begin{align}\label{eq:equilibrium:g}
a(t)=h(v_a(t))\lambda(t)
		\end{align}
holds for any $t\in[0,T)$, where\footnote{This function $h$ is slightly different from that one in \cite*{liang2025short}:  $h(x)$ there amounts to $h(x^2)$ here. Nevertheless, the relevant properties, such as the boundedness of $h$, remain equivalent under the two formulations.}
 \begin{align}\label{eq:h}
		h(x)\triangleq
        {\tilde{\mE}\left[\me^{-\frac{1}{2}\boldsymbol{R} x}\right]}\left/{\tilde{\mE}\left[\boldsymbol{R}\me^{-\frac{1}{2}\boldsymbol{R} x}\right]}\right., \quad x\in[0,\infty).
	\end{align}
(Occasionally, we write $h_{\boldsymbol{R}}$ for $h$ to emphasize its dependence on $\boldsymbol{R}$.)
    \end{theorem}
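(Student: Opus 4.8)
The plan is to work with $J_0$ rather than $J$ (Remark \ref{rmk:j_0}), to obtain a closed form for $J_0(t,\pi^{t,\epsilon,k})$, to differentiate the perturbation at $\epsilon=0^+$, and to translate the resulting first-order inequality into (\ref{eq:equilibrium:g}). Fix $t\in[0,T)$, $w>0$, $k\in L^\infty(\F_t,\mR^d)$ and set $\tilde a(s):=\sigma^\top(s)\pi^{t,\epsilon,k}_s$, so $\tilde a(s)=a(s)$ for $s\notin[t,t+\epsilon)$ and $\tilde a(s)=a(s)+\sigma^\top(s)k$ for $s\in[t,t+\epsilon)$. Since $k$ is $\F_t$-measurable, $\tilde a$ is $\F_t$-measurable on $[t,T]$, so conditionally on $\F_t$ the relative wealth $W^{t,w,\pi^{t,\epsilon,k}}_T/w$ is lognormal (the certainty equivalent under $u^\gamma$ being $\me^{\tilde y(t)-\tfrac\gamma2\tilde v(t)}$ for every $\gamma\ge0$), and the computation behind (\ref{the definition of J}) gives
\begin{align*}
J_0(t,\pi^{t,\epsilon,k})=\mEt\!\left[\me^{\tilde y(t)-\frac12\boldsymbol{R}\,\tilde v(t)}\right],\qquad \tilde v(t):=\int_t^T|\tilde a(s)|^2\md s,\quad \tilde y(t):=\int_t^T\tilde a^\top(s)\lambda(s)\md s .
\end{align*}
This quantity is $\le\me^{\tilde y(t)}<\infty$ and the conditional utilities are integrable, so $\pi^{t,\epsilon,k}\in\Pi$ for all small $\epsilon$ and Definition \ref{def:equilibrium} applies non-vacuously. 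Using $\sigma(s)\lambda(s)=\mu(s)$ we obtain $\tilde y(t)=y_a(t)+\delta_y(\epsilon)$ and $\tilde v(t)=v_a(t)+\delta_v(\epsilon)$, where $\delta_y(\epsilon)=\int_t^{t+\epsilon}k^\top\mu(s)\md s$ and $\delta_v(\epsilon)=\int_t^{t+\epsilon}\big(2a^\top(s)\sigma^\top(s)k+|\sigma^\top(s)k|^2\big)\md s$ are both $O(\epsilon)$; by right-continuity of $\mu,\sigma,a$, $\delta_y(\epsilon)/\epsilon\to k^\top\mu(t)$ and $\delta_v(\epsilon)/\epsilon\to 2a^\top(t)\sigma^\top(t)k+|\sigma^\top(t)k|^2$ as $\epsilon\downarrow0$.

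Next I would compute the one-sided derivative. Writing the difference quotient as $\mEt\big[\me^{y_a(t)-\frac12\boldsymbol{R}v_a(t)}(\me^{\delta_y(\epsilon)-\frac12\boldsymbol{R}\delta_v(\epsilon)}-1)/\epsilon\big]$ and setting $b:=\sigma^\top(t)k$ (so $k^\top\mu(t)=b^\top\lambda(t)$), I claim that when either $v_a(t)>0$ or $\mEt[\boldsymbol{R}]<\infty$ the limit $\lim_{\epsilon\downarrow0}\big(J_0(t,\pi^{t,\epsilon,k})-J_0(t,\barpi)\big)/\epsilon$ exists and equals
\begin{align*}
G_t(b):=\mEt\!\left[\me^{y_a(t)-\frac12\boldsymbol{R}v_a(t)}\big(b^\top\lambda(t)-\boldsymbol{R}\,a^\top(t)b-\tfrac12\boldsymbol{R}|b|^2\big)\right]=b^\top\big(P\lambda(t)-Q\,a(t)\big)-\tfrac12 Q|b|^2,
\end{align*}
with $P:=\me^{y_a(t)}\mEt[\me^{-\frac12\boldsymbol{R}v_a(t)}]\in(0,\infty)$ and $Q:=\me^{y_a(t)}\mEt[\boldsymbol{R}\me^{-\frac12\boldsymbol{R}v_a(t)}]$, where $Q>0$ because $\boldsymbol{R}>0$ with positive probability and $Q<\infty$ because $v_a(t)>0$ or $\mEt[\boldsymbol{R}]<\infty$. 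The pointwise (in $\boldsymbol{R}$) limit of the integrand is immediate; the interchange of limit and $\mEt$ is the crux, and it rests on the non-negativity of $\tilde v(t)$, which yields $\me^{y_a(t)-\frac12\boldsymbol{R}v_a(t)}\me^{\delta_y(\epsilon)-\frac12\boldsymbol{R}\delta_v(\epsilon)}=\me^{\tilde y(t)-\frac12\boldsymbol{R}\tilde v(t)}\le\me^{\tilde y(t)}$. When $v_a(t)>0$ a mean-value estimate dominates the integrand by $C(1+\boldsymbol{R})\me^{-\frac14\boldsymbol{R}v_a(t)}$ for all small $\epsilon$, a bounded function of $\boldsymbol{R}$; when $v_a(t)=0$ one has $a\equiv\mathbf{0}$ on $[t,T]$ and $\delta_v(\epsilon)\ge0$, and splitting $\me^{\delta_y-\frac12\boldsymbol{R}\delta_v}-1=\me^{\delta_y}(\me^{-\frac12\boldsymbol{R}\delta_v}-1)+(\me^{\delta_y}-1)$ dominates by $C(1+\boldsymbol{R})$, integrable since $\mEt[\boldsymbol{R}]<\infty$. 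In the remaining degenerate case, $v_a(t)=0$ and $\mEt[\boldsymbol{R}]=\infty$, the integral equation at $t$ reads $a(t)=h(0)\lambda(t)=\mathbf{0}$, which holds trivially, while reverse Fatou (using $\me^{-\frac12\boldsymbol{R}\delta_v(\epsilon)}\le1$) gives $\limsup_{\epsilon\downarrow0}\big(J_0(t,\pi^{t,\epsilon,k})-1\big)/\epsilon\le\mEt[k^\top\mu(t)-\tfrac12\boldsymbol{R}|b|^2]\le0$, exactly as in the proof of Theorem \ref{thm:0:equi}, so the equilibrium inequality at $t$ also holds; hence this case contributes no constraint.

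Finally I would extract the first-order condition. Because the sequence $\{\epsilon_n\}$ in Definition \ref{def:equilibrium} is arbitrary and the one-sided derivative exists (in the non-degenerate cases), the a.s.\ $\limsup$ in (\ref{limsup:J}) equals $G_t(\sigma^\top(t)k)$; dividing by $w$ via Remark \ref{rmk:j_0}, $\barpi$ is an equilibrium iff $G_t(b)\le0$ for every $t\in[0,T)$ and every $b\in\mR^d$ (already allowing $k$ to be a constant vector forces this). For fixed $t$, the quadratic term $-\tfrac12 Q|b|^2$ is $\le0$, so $G_t\le0$ on $\mR^d$ forces the linear coefficient to vanish: if $c:=P\lambda(t)-Q\,a(t)\ne\mathbf{0}$, then along $b=\eta c$ with small $\eta>0$ the linear term dominates and $G_t(b)=\eta|c|^2-\tfrac12 Q\eta^2|c|^2>0$, a contradiction. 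Dividing $Q\,a(t)=P\lambda(t)$ by $Q>0$ and using $P/Q=h(v_a(t))$ (recall (\ref{eq:h})) gives (\ref{eq:equilibrium:g}); conversely, (\ref{eq:equilibrium:g}) makes $P\lambda(t)=Q\,a(t)$, so $G_t(b)=-\tfrac12 Q|b|^2\le0$. Combined with the degenerate case above, this establishes the equivalence.

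The step I expect to be the main obstacle is the differentiation under $\mEt$: when $\boldsymbol{R}$ has infinite mean and the perturbation can simultaneously decrease $\tilde v(t)$ (so that $\delta_v(\epsilon)$ is sign-indefinite), neither a naive dominated-convergence bound nor a naive monotonicity argument suffices. The plan above circumvents this by exploiting $\tilde v(t)\ge0$ together with the dichotomy $v_a(t)>0$ versus $v_a(t)=0$, the latter reconnecting precisely with the trivial equilibrium of Theorem \ref{thm:0:equi}.
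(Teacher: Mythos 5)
Your proposal is correct and follows essentially the same route as the paper: the paper's proof for the case $\tilde{\mE}[\boldsymbol{R}]<\infty$ defers to the perturbation/first-order-condition argument of \cite{liang2025short}, which is exactly the computation you carry out explicitly (lognormal certainty equivalents, one-sided derivative $G_t(b)=b^\top(P\lambda(t)-Qa(t))-\tfrac12 Q|b|^2$, and vanishing of the linear term), and your treatment of the degenerate case $v_a(t)=0$ with $\tilde{\mE}[\boldsymbol{R}]=\infty$ via Theorem \ref{thm:0:equi} and $h(0)=0$ mirrors the paper's splitting at $T_0$. The only difference is that your write-up is self-contained where the paper cites, and your domination bounds for the interchange of limit and $\mEt$ are sound.
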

    
\begin{proof}
It suffices to show that for any $t\in[0,T)$, (\ref{limsup:J}) is equivalent to \eqref{eq:equilibrium:g}.

When $\tilde{\mE}[\boldsymbol{R}]<\infty$, we have $h(0) = {1}/{\tilde{\mE}[\boldsymbol{R}]}\in(0,\infty)$ and  $h\in C([0,\infty))$. In this case, an argument similar to the proof of \citet*[Theorem 3.4]{liang2025short} yields the equivalence.

When $\tilde{\mE}[\boldsymbol{R}] = \infty$,  we have $h(0)=0$. In this case,  similar to the case  $\tilde{\mE}[\boldsymbol{R}]<\infty$, \eqref{limsup:J} and \eqref{eq:equilibrium:g} are equivalent for \(t\in [0, T_0)\), where
$T_0 \triangleq \inf\{t\in[0,T]:\ v_a(t) = 0\}$.
Moreover, they are automatically satisfied for $t\in[T_0,T)$ based on Theorem~\ref{thm:0:equi} and $h(0)=0$. 
\end{proof}

Now we introduce a function $l:[0,\infty) \to (0,\infty)$, which is defined by 
$$l(y) \triangleq \tilde{\mE}[e^{-\boldsymbol{R}y}],\quad y\ge0.$$
Its derivative $l'(0)$ exists and is finite for all $y>0$.  In the case $\tilde{\mE}[\boldsymbol{R}]<\infty$, we have $l'(0)=-\tilde{\mE}[\boldsymbol{R}]$. 
In the case $\tilde{\mE}[\boldsymbol{R}]=\infty$, we extend the definition by setting $l'(0)=-\infty$. Then, for any $x\ge0$, the function $h$ is related to $l$ through $h(x) = -{l(x/2)}/{l'(x/2)}$.
The function $h$ defined in \eqref{eq:h} plays a central role in our analysis. Here, we present two examples for which $h$ can be explicitly worked out.

\begin{example} \label{ex:h_calculation}
     Consider the following two distributions for the RRA $\boldsymbol{R}$.
    
    \begin{enumerate}[label=(\arabic*), font=\upshape]
        \item Suppose that  $\boldsymbol{R}$ follows a Poisson distribution with parameter $\theta > 0$. In this case, $l(y) =  e^{\theta(e^{-y}-1)}$ for any $y\ge0$. According to (\ref{eq:h}), we obtain
        \begin{align*}
            h(x) = -{l({x}/{2})}/{l'({x}/{2})} = \frac{1}{\theta} e^{{x}/{2}},\quad x\ge0.
        \end{align*} 
        \item Suppose that $\boldsymbol{R}$ follows a Gamma distribution with shape parameter $\alpha > 0$ and scale parameter $\beta > 0$. Its probability density function is
        \begin{align*}
        f_{\boldsymbol{R}}(r) = \frac{r^{\alpha-1}e^{-{r}/{\beta}}}{\Gamma(\alpha)\beta^\alpha}, \quad r > 0.
        \end{align*}
        In this case, $l(y) = \tilde{\mE}[e^{-\boldsymbol{R}y}] = (1 + \beta y)^{-\alpha}$ ($y\ge0$), which gives
        \begin{align*}
            h(x) = -\frac{l(\frac{x}{2})}{l'(\frac{x}{2})} = \frac{1}{\alpha \beta} \left(1 + \frac{\beta}{2}x\right),\quad x\ge0.
        \end{align*}
    \end{enumerate}
\end{example}

In the two cases of the above example, the functions $h$ are unbounded, violating the boundedness assumption on $h$ made by \cite*{liang2025short}. Moreover, as we will see in 
Example~\ref{exam:poi:gamma} below, the equilibrium strategies can be explicitly obtained for both cases. To accommodate such interesting cases,  
we will conduct a thorough analysis on Condition \eqref{eq:equilibrium:g} and the equilibrium strategies under a general distribution of $\boldsymbol{R}$.

\section{Equilibrium Strategy}\label{Equilibrium Analysis and Characterization of Solutions}

 We first transform Condition \eqref{eq:equilibrium:g} into an ordinary differential equation (ODE) for $v_a$.
 Differentiating $\displaystyle v_a(t) = \int_t^T |a(s)|^2 \md s$ w.r.t. $t$,  we get from (\ref{eq:equilibrium:g}) that $v_a$ satisfies the following ODE:
\begin{align}\label{eq:firstode}
\begin{cases}
v'(t) = -h^2(v(t)) |\lambda(t)|^2, & t \in \left[0, T\right), \\
v(T) = 0.
\end{cases}
\end{align}
Conversely, assume that $v$ solves ODE \eqref{eq:firstode} and let $a(\cdot)\triangleq h(v(\cdot))\lambda(\cdot)$. Then $a$ satisfies \eqref{eq:equilibrium:g}. 

ODE \eqref{eq:firstode} is an equation with separated variables.  We can obtain its solutions in closed form.  

Now we introduce two functions. The first function $\Lambda$ is defined by
\begin{align*}
    \Lambda(t) \triangleq \int_t^T |\lambda(s)|^2 \md s, \quad t\in [0,T].
\end{align*}
The second function $H$ is defined by
\begin{align}\label{eq:H}
\displaystyle H(y) \triangleq \int_{0}^{y} \frac{1}{h^2(x)} \md x=\int_0^y \left({\tilde{\mE}\left[\boldsymbol{R}\me^{-\frac{1}{2}\boldsymbol{R} x}\right]}\left/{\tilde{\mE}\left[\me^{-\frac{1}{2}\boldsymbol{R} x}\right]}\right.\right)^2\md x,\quad y\in[0,\infty].
\end{align}
Occasionally, we write $H_{\boldsymbol{R}}$ for $H$ to emphasize its dependence on $\boldsymbol{R}$.

If $\Lambda(0) = 0$, then $\lambda(\cdot) \equiv \mathbf{0}$ on $[0,T)$ and, in view of \eqref{eq:equilibrium:g}, there is a unique equilibrium $\barpi \equiv \mathbf{0}$. This trivial case is excluded by making the following standing assumption.

\begin{assumption}\label{ass:Psi}  
$\Lambda(0)\in(0,\infty)$. 
\end{assumption}

The next theorem provides,  in closed form,  the solutions to ODE \eqref{eq:firstode} and the equilibrium strategies in $\Pi_d$. We will use $\Pi_d^e$ to denote all equilibrium strategies in $\Pi_d$, that is,
$$\Pi_d^e\triangleq\{\pi\in\Pi_d\,:\ \pi \text{ is an equilibrium strategy}\}.$$

\begin{theorem}\label{theorem3.3}
 Under Assumption \ref{ass:Psi}, we have the following assertions.
\begin{enumerate}[label=(\arabic*), font=\upshape]
    \item If $\mEt[\boldsymbol{R}] < \infty$, then ODE \((\ref{eq:firstode})\) admits a solution if and only if $H(\infty) > \Lambda(0)$. Under this condition, the solution is unique, given by $v(\cdot) = H^{-1}(\Lambda(\cdot))$, and the corresponding unique equilibrium  is  $(\sigma^\top(\cdot))^{-1}h(v(\cdot))\lambda(\cdot)$.

    \item If $\mEt[\boldsymbol{R}] = \infty$ and $H(\varepsilon) < \infty$ for all $\varepsilon > 0$, then ODE \eqref{eq:firstode} admits a one-parameter family of solutions, indexed by $T_0\in\mathcal{T}$:
     \begin{align}\label{solutions}
        v^{(T_0)}(t) \triangleq
        \begin{cases}
            H^{-1}\left(\int_{t}^{T_0} |\lambda(s)|^2 \md s\right), & 0 \le t < T_0, \\
            0, & T_0 \le t \le T,
        \end{cases}
    \end{align}
   where 
    \begin{align*}
    \mathcal{T} \triangleq \left\{ T_0 \in [0, T] : H(\infty) > \Lambda(0) - \Lambda(T_0) \text{ and } \Lambda(t) > \Lambda(T_0)\text{ for all } t \in [0, T_0) \right\}.
\end{align*}
   In this case, 
    \begin{align*}
        \Pi^e_d=\left\{\pi^{(T_0)}(\cdot)\triangleq(\sigma^\top(\cdot))^{-1}h(v^{(T_0)}(\cdot))\lambda(\cdot):\  T_0\in\mathcal{T}\right\}.
    \end{align*}

    \item If $\mEt[\boldsymbol{R}] = \infty$ and $H(\varepsilon) = \infty$ for some $\varepsilon > 0$,  then ODE \((\ref{eq:firstode})\) admits a unique solution $v \equiv 0$ and $\barpi \equiv \mathbf{0}$ is the unique equilibrium in $\Pi_d$.
\end{enumerate}
\end{theorem}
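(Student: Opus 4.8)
\textbf{Proof proposal for Theorem \ref{theorem3.3}.}

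The plan is to treat the ODE \eqref{eq:firstode} as a separable equation and to exploit the two defining properties of $H$: it is nondecreasing (since $1/h^2\ge 0$), and on any interval where it is finite it is strictly increasing and continuous, hence invertible there. The key qualitative input is the behaviour of $h$ near $0$: when $\mEt[\boldsymbol{R}]<\infty$ we have $h(0)=1/\mEt[\boldsymbol{R}]\in(0,\infty)$, so $1/h^2$ is bounded near $0$ and $H(\varepsilon)<\infty$ for small $\varepsilon$; when $\mEt[\boldsymbol{R}]=\infty$ we have $h(0)=0$, so $1/h^2$ blows up at $0$, and the question of whether $H(\varepsilon)<\infty$ or $=\infty$ becomes the fork between parts (2) and (3). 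Throughout I will use the correspondence established just before the theorem: $v$ solves \eqref{eq:firstode} iff $a(\cdot)=h(v(\cdot))\lambda(\cdot)$ satisfies \eqref{eq:equilibrium:g}, which by Theorem \ref{thm:integral:eq} is equivalent to $\barpi=(\sigma^\top)^{-1}a$ being an equilibrium in $\Pi_d$; so it suffices to enumerate solutions of the ODE.

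\emph{Part (1).} Suppose $v$ solves \eqref{eq:firstode}. Since $v'\le 0$ and $v(T)=0$, $v$ is nonnegative and nonincreasing. If $v(t_0)=0$ for some $t_0<T$ then $v\equiv 0$ on $[t_0,T]$, and since $h(0)>0$ the ODE forces $v'(t_0^-)=-h(0)^2|\lambda(t_0)|^2$; under Assumption \ref{ass:Psi} one checks $|\lambda|^2$ is not a.e.\ zero near $T$, so $v$ cannot vanish on a left-neighbourhood of $T$ unless it vanishes identically, which contradicts $v'\not\equiv 0$. Hence on $\{v>0\}$ we may divide: $-H'(v(t))v'(t)=|\lambda(t)|^2$, i.e.\ $\tfrac{d}{dt}H(v(t))=-|\lambda(t)|^2$. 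Integrating from $t$ to $T$ and using $H(v(T))=H(0)=0$ gives $H(v(t))=\Lambda(t)$. For this to have a solution $v(t)\ge 0$ for all $t$, including $t=0$, we need $\Lambda(0)$ to lie in the range of $H$, i.e.\ $\Lambda(0)<H(\infty)$ (equality is ruled out because $H(\infty)$, if finite, is not attained; if $H(\infty)=\infty$ the condition is automatic). Conversely, if $H(\infty)>\Lambda(0)$, then since $\Lambda$ is continuous, nonincreasing, with $\Lambda(T)=0$ and $\Lambda(0)<H(\infty)$, the function $v(\cdot)=H^{-1}(\Lambda(\cdot))$ is well-defined, absolutely continuous, and a direct differentiation shows it solves \eqref{eq:firstode}; uniqueness follows from the computation above, which shows every solution must equal $H^{-1}(\Lambda(\cdot))$. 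The equilibrium statement is then immediate from the ODE–equilibrium correspondence.

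\emph{Parts (2) and (3).} Now $h(0)=0$, so $v\equiv 0$ always solves \eqref{eq:firstode} (both sides vanish), which by Theorem \ref{thm:0:equi} corresponds to the equilibrium $\barpi\equiv\mathbf 0$. For a general solution $v$, let $T_0\triangleq\inf\{t\in[0,T]:v(t)=0\}$ (with $T_0=T$ if $v>0$ on $[0,T)$). On $[T_0,T]$ one shows $v\equiv 0$: indeed $v(T_0)=0$ by continuity and $v$ nonincreasing nonnegative forces $v\le 0$, hence $=0$, on $[T_0,T]$; the consistency of this with the ODE is automatic here because $h(0)=0$ makes $v'=0$ admissible (this is exactly the phenomenon flagged after Theorem \ref{thm:0:equi}). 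On $[0,T_0)$ we have $v>0$, so the same separation-of-variables argument as in Part (1) gives $\tfrac{d}{dt}H(v(t))=-|\lambda(t)|^2$; integrating from $t\in[0,T_0)$ up to $T_0$ and using $H(v(T_0^-))=H(0)=0$ yields $H(v(t))=\int_t^{T_0}|\lambda(s)|^2\md s=\Lambda(t)-\Lambda(T_0)$. This step uses $H(\varepsilon)<\infty$ for all $\varepsilon>0$ (the hypothesis of Part (2)) so that $H$ is a genuine increasing continuous bijection onto $[0,H(\infty))$; it then gives $v(t)=H^{-1}(\Lambda(t)-\Lambda(T_0))$, which is precisely $v^{(T_0)}$ in \eqref{solutions}. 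Finally I must identify which $T_0$ actually arise: well-definedness of $H^{-1}$ at $t=0$ requires $\Lambda(0)-\Lambda(T_0)<H(\infty)$, and the requirement $v>0$ on all of $[0,T_0)$ (so that $T_0$ really is the first zero) forces $\Lambda(t)-\Lambda(T_0)>0$, i.e.\ $\Lambda(t)>\Lambda(T_0)$, for $t\in[0,T_0)$ — these are exactly the two conditions defining $\mathcal T$. Conversely, for each $T_0\in\mathcal T$ one verifies directly that $v^{(T_0)}$ solves \eqref{eq:firstode} and is distinct for distinct $T_0$ (they have different supports), giving the stated one-parameter family $\Pi^e_d$. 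For Part (3), if $H(\varepsilon)=\infty$ for some $\varepsilon>0$, then $H(x)=\infty$ for all $x\ge\varepsilon$ and more importantly $H\equiv\infty$ on $(0,\infty)$ is impossible only if... — more precisely, $1/h^2$ fails to be integrable near $0$, so $H(x)=\infty$ for every $x>0$; then the relation $H(v(t))=\Lambda(t)-\Lambda(T_0)<\infty$ on $[0,T_0)$ is impossible unless $v\equiv 0$ there, forcing $T_0=0$ and $v\equiv 0$. Hence the only solution is $v\equiv 0$ and the only equilibrium in $\Pi_d$ is $\barpi\equiv\mathbf 0$.

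\emph{Main obstacle.} The routine part is the separation of variables; the delicate part is the boundary analysis at a zero of $v$. Specifically, in Part (1) I must rule out solutions that hit zero before $T$ — this is where Assumption \ref{ass:Psi} and the right-continuity of $\lambda$ enter, to ensure $|\lambda|^2$ does not vanish on a left-neighbourhood of the relevant point — and in Parts (2)–(3) I must justify passing to the limit $t\uparrow T_0$ in $H(v(t))$, i.e.\ that $H(v(t))\to H(0^+)=0$, which uses continuity of $H$ at $0$ from the right (valid exactly when $H(\varepsilon)<\infty$ for small $\varepsilon$, the Part (2) regime) versus its failure in the Part (3) regime. Getting the exact description of $\mathcal T$ — in particular the strict inequality $\Lambda(t)>\Lambda(T_0)$ rather than $\ge$, which encodes that $T_0$ is the \emph{first} zero and prevents double-counting solutions — is the one place where care is needed to match the statement precisely.
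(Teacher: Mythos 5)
Your overall strategy is the same as the paper's: separate variables, integrate to obtain $H(v(t))=\Lambda(t)$ (resp.\ $\Lambda(t)-\Lambda(T_0)$), invert $H$, and in the infinite-expectation case parametrize solutions by the first zero $T_0=\inf\{t:v(t)=0\}$, reading off the two conditions defining $\mathcal T$ exactly as the paper does. Parts (2) and (3) are correct and match the paper's argument essentially line by line.

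There is, however, one flawed step in your Part (1): you claim that ``under Assumption \ref{ass:Psi} one checks $|\lambda|^2$ is not a.e.\ zero near $T$,'' and you use this to rule out solutions that vanish on a left-neighbourhood of $T$. This is false: Assumption \ref{ass:Psi} only says $\Lambda(0)=\int_0^T|\lambda(s)|^2\,\md s>0$, which is perfectly compatible with $\lambda\equiv 0$ on, say, $[T/2,T]$ (indeed, Theorem \ref{thm:optimal}(3) treats ``$\Lambda(t)>0$ for all $t\in[0,T)$'' as an \emph{extra} hypothesis precisely because it is not implied by the standing assumptions). In that situation a legitimate solution does vanish identically on $[T/2,T]$ while being positive on $[0,T/2)$, so the dichotomy you set up (either $v>0$ on $[0,T)$ or $v\equiv 0$) is wrong. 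Fortunately the entire step is unnecessary: in Part (1) one has $h(0)=1/\mEt[\boldsymbol{R}]>0$, so $1/h^2(v(t))$ is finite even where $v(t)=0$, the identity $\frac{\md}{\md t}H(v(t))=-|\lambda(t)|^2$ holds on all of $[0,T)$ without restricting to $\{v>0\}$, and integrating from $t$ to $T$ directly yields $H(v(t))=\Lambda(t)$ — which is exactly how the paper proceeds. Deleting your boundary analysis and integrating over the whole interval repairs the argument; the rest of the proposal, including the necessity/sufficiency of $H(\infty)>\Lambda(0)$ and the identification of $\mathcal T$, is sound.
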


\begin{proof} \ 
\paragraph{(1)} Assume $\mEt[\boldsymbol{R}] < \infty$. In this case, $h(x) > 0$ for all $x \ge 0$.   Suppose that $v(\cdot)$ is a solution to ODE \eqref{eq:firstode}. Separating variables in \eqref{eq:firstode} yields
\begin{align}\label{eq:separate:ode}
 -|\lambda(t)|^2=\frac{v'(t)}{h^2(v(t))}, \quad t\in[0,T).
\end{align}
Integrating \eqref{eq:separate:ode} from $t$ to $T$ yields
\begin{align*}
-\Lambda(t)=\int_t^T \frac{v'(s)}{h^2(v(s))} \md s = \int_{v(t)}^{0} \frac{\md y}{h^2(y)} = - \int_{0}^{v(t)} \frac{\md y}{h^2(y)} = -H(v(t)),
\end{align*}
where the second equality follows from the change of variables $y = v(s)$.  Thus, $H(v(t)) = \Lambda(t)$ for all $t\in[0,T]$. Because \(\displaystyle h(0) = {1}/{\mEt[\boldsymbol{R}]} > 0\), it follows that $H(0)=0$ and $H(\cdot)\in {C}^1\left([0,\infty)\right)$ is strictly increasing. Thus, the inverse function $H^{-1}$  is well-defined, continuous, and strictly increasing in $\left[0, H(\infty)\right)$.
Thus, $H(\infty)>H(v(0))=\Lambda(0)$ and $v(t) = H^{-1}(\Lambda(t))$ for all $t\in[0,T]$, which yields the uniqueness of the solution. 
Conversely, if $H(\infty) > \Lambda(0)$, it is straightforward to verify that $v(\cdot) \triangleq H^{-1}(\Lambda(\cdot))$ indeed solves uniquely the ODE.

\paragraph{(2)} Assume that $\mEt[\boldsymbol{R}] = \infty$ and $H(\varepsilon) < \infty$ for all $\varepsilon > 0$.  We can see that $H(0)=0$, $\displaystyle H\in C^0([0,\infty])\cap C^{1}((0,\infty))$, and $H$ is strictly increasing. Suppose that $v(\cdot)$ is a solution to ODE \eqref{eq:firstode}. Let $T_0\triangleq \inf\{t \in [0, T] :\  v(t) = 0\}$.  
Then $v(t)=0$ for any $t\in [T_0, T]$ and $v(t)>0$ for any $t \in [0, T_0)$.   For any $t\in[0,T_0)$, integrating \eqref{eq:separate:ode} from  $t$ to $T_0$ yields
\begin{align*}
\Lambda(T_0) - \Lambda(t)=\int_t^{T_0} -|\lambda(s)|^2 \md s=\int_{v(t)}^{v(T_0)} \frac{\md y}{h^2(y)} =-H(v(t)).
\end{align*}
Thus, $H(v(t)) = \Lambda(t) - \Lambda(T_0)$ for all $t\in[0,T_0)$.  Therefore, $H(\infty)>H(v(0))=\Lambda(0)-\Lambda(T_0)$ and $\Lambda(t)-\Lambda(T_0)=H(v(t))>H(0)=0$ for all $t\in[0,T_0)$.  Consequently, $T_0\in\mathcal{T}$.
Moreover, for any $t<T_0$, we have 
\begin{align*}
v(t) = H^{-1}(\Lambda(t) - \Lambda(T_0))=H^{-1}\left(\int_{t}^{T_0} |\lambda(s)|^2 \md s\right),
\end{align*}
which implies $v=v^{(T_0)}$.
Conversely, a direct verification shows that \eqref{solutions} indeed solves the ODE.

\paragraph{(3)} Assume that $\mEt[\boldsymbol{R}] = \infty$ and $H(\varepsilon) = \infty$ for some $\varepsilon > 0$. This implies that $\displaystyle \int_0^y \frac{1}{h^2(x)} \md x=\infty$  for any $y > 0$. Clearly, \( v \equiv  0 \) is a solution (recall that \(\displaystyle h(0) = {1}/{\mEt[\boldsymbol{R}]} = 0\)). Suppose, on the contrary, that there exists another solution $v(\cdot)$ with $v(t) > 0$ for some $t < T$. Let $t' \triangleq \inf \{ s \in (t, T] :\  v(s) = 0 \}$. By continuity, we have $v(t')=0$ and $v(s)>0$ for any  $s\in [t, t')$. Integrating ODE \eqref {eq:separate:ode} from $t$ to $t'$ yields
\begin{align*}
\Lambda(t')-\Lambda(t)=\int_t^{t'} -|\lambda(s)|^2 \md s=\int_{v(t)}^{0} \frac{\md y}{h^2(y)} =-\infty,
\end{align*}
which is impossible. Therefore, ODE \eqref{eq:firstode} admits the unique solution  \( v \equiv 0 \).
\end{proof}

Assertion (2) of Theorem~\ref{theorem3.3} provides, in closed form, the solutions to ODE \eqref{eq:firstode} via the parameter $T_0\in\mathcal{T}$. The following lemma provides an alternative explicit characterization of the set $\mathcal{T}$ via an auxiliary mapping.

\begin{lemma}\label{lemma4.4}
Let $\mathcal{A}\triangleq[0,\Lambda(0)]\cap(\Lambda(0)-H(\infty),\Lambda(0)]$ and $$ \quad \varphi(\eta) \triangleq \min \{ t \in [0, T] :\, \Lambda(t) = \eta \},\quad \eta\in[0,\Lambda(0)].$$ 
Then we have
    $\mathcal{T} = \{\varphi(\eta) :\, \eta \in \mathcal{A}\}$. 

\begin{proof} 
From the definition of $\mathcal{T}$, for any $T_0\in\mathcal{T}$, we have $\Lambda(T_0)\in\mathcal{A}$ and
\begin{align*}
T_0 = \inf \{ t \in [0, T] : \Lambda(t) = \Lambda(T_0) \}=\varphi(\Lambda(T_0)).
\end{align*}
Thus, $\mathcal{T}\subset\{\varphi(\eta) :\, \eta \in \mathcal{A}\}$. Conversely, for any $\eta\in\mathcal{A}$,  the continuity of $\Lambda$ implies  $\Lambda(\varphi(\eta))=\eta\in\mathcal{A}$. Moreover, for any $t\in[0,\varphi(\eta))$, we have $\Lambda(t)>\Lambda(\varphi(\eta))$ by the definition of $\varphi$. Therefore, $\varphi(\eta)\in \mathcal{T}$ and hence $\{\varphi(\eta) :\,  \eta \in \mathcal{A}\}\subset \mathcal{T}$.
    \end{proof}
\end{lemma}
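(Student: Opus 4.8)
The plan is to prove the set equality $\mathcal{T}=\{\varphi(\eta)\mid\eta\in\mathcal{A}\}$ by a straightforward double inclusion, relying only on elementary properties of $\Lambda$: it is continuous (since $\int_0^T|\lambda(s)|^2\md s<\infty$), non-increasing (the integrand is non-negative), and satisfies $\Lambda(T)=0$ and $\Lambda(0)\in(0,\infty)$ by Assumption~\ref{ass:Psi}. Throughout, the key translation is that the first defining condition of $\mathcal{T}$, namely $H(\infty)>\Lambda(0)-\Lambda(T_0)$, is equivalent to $\Lambda(T_0)\in(\Lambda(0)-H(\infty),\Lambda(0)]$, i.e.\ to $\Lambda(T_0)\in\mathcal{A}$.

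For the inclusion $\mathcal{T}\subset\{\varphi(\eta)\mid\eta\in\mathcal{A}\}$, I would fix $T_0\in\mathcal{T}$ and set $\eta\triangleq\Lambda(T_0)$. Monotonicity of $\Lambda$ gives $0=\Lambda(T)\le\eta\le\Lambda(0)$, and the first condition in the definition of $\mathcal{T}$ gives $\eta>\Lambda(0)-H(\infty)$; hence $\eta\in\mathcal{A}$. It then remains to verify $T_0=\varphi(\eta)$. On one hand, $\Lambda(T_0)=\eta$ shows $T_0$ belongs to $\{t\in[0,T]:\Lambda(t)=\eta\}$, so $\varphi(\eta)\le T_0$; on the other hand, the second condition in the definition of $\mathcal{T}$ states $\Lambda(t)>\eta$ for all $t\in[0,T_0)$, so no point strictly below $T_0$ lies in that level set, whence $\varphi(\eta)\ge T_0$. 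Therefore $T_0=\varphi(\eta)$.

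For the reverse inclusion, I would fix $\eta\in\mathcal{A}$ and first check that $\varphi(\eta)$ is well defined: since $\mathcal{A}\subset[0,\Lambda(0)]$ we have $\Lambda(T)=0\le\eta\le\Lambda(0)$, so by continuity of $\Lambda$ and the intermediate value theorem the level set $\{t\in[0,T]:\Lambda(t)=\eta\}$ is nonempty, and it is closed as the preimage of a point; hence its minimum $T_0\triangleq\varphi(\eta)$ is attained and $\Lambda(T_0)=\eta$. Then $\eta\in\mathcal{A}$ yields $\Lambda(0)-\Lambda(T_0)=\Lambda(0)-\eta<H(\infty)$, which is the first condition for membership in $\mathcal{T}$. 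For the second, minimality in the definition of $\varphi$ forces $\Lambda(t)\neq\eta$ for every $t\in[0,T_0)$, while monotonicity gives $\Lambda(t)\ge\Lambda(T_0)=\eta$ on that interval; combining the two strict-versus-weak inequalities gives $\Lambda(t)>\Lambda(T_0)$ for all $t\in[0,T_0)$. Hence $T_0\in\mathcal{T}$, completing the argument.

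I do not expect a real obstacle; the only point deserving a moment of care is the well-definedness of $\varphi$ on all of $\mathcal{A}$, which is exactly where continuity of $\Lambda$ (via the intermediate value theorem) enters, and the observation that $\Lambda$ being merely non-increasing rather than strictly decreasing causes no trouble because $\varphi$ selects the \emph{left endpoint} of a possibly nondegenerate level interval — and that left endpoint is precisely the $T_0$ singled out by the ``$\Lambda(t)>\Lambda(T_0)$ on $[0,T_0)$'' clause built into $\mathcal{T}$.
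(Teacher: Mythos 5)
Your proof is correct and follows essentially the same double-inclusion argument as the paper, translating the first membership condition of $\mathcal{T}$ into $\Lambda(T_0)\in\mathcal{A}$ and using the minimality in the definition of $\varphi$ for the second. You are in fact slightly more careful than the paper on two points — the well-definedness of $\varphi$ on all of $\mathcal{A}$ via the intermediate value theorem, and the observation that concluding $\Lambda(t)>\Lambda(\varphi(\eta))$ on $[0,\varphi(\eta))$ requires combining the minimality of $\varphi(\eta)$ with the monotonicity of $\Lambda$, not the definition of $\varphi$ alone.
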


\begin{remark}\label{rmk:infinity}
The sets $\mathcal{T}$ and $\mathcal{A}$ are in one-to-one correspondence because $\varphi$ is strictly decreasing. It is easily seen that $ \mathcal{A} $ contains infinitely many elements. Consequently, under the conditions $\mEt[\boldsymbol{R}] = \infty$ and $H(\varepsilon) < \infty$ for all $\varepsilon > 0$, we obtain infinitely many equilibria.
\end{remark}

It should be noted that the existing studies, such as \cite{Desmettre2023} and \cite*{liang2025short}, focus mainly on the specific cases covered by Theorem~\ref{theorem3.3}(1), under the additional assumptions that $h(\cdot)$ is bounded and $\boldsymbol{R}$ is square-integrable. Our result covers any general RRA $\boldsymbol{R}$. In the following two examples, the functions $h$ are unbounded and the equilibrium strategies are explicitly worked out.

\begin{example}\label{exam:poi:gamma}
This example derives the explicit equilibrium strategies for the two distributions discussed in Example \ref{ex:h_calculation}.
\begin{enumerate}[label=(\arabic*), font=\upshape, wide, labelwidth=!, labelindent=0pt]
    \item \textbf{Poisson distribution:} Recall that $h(x) = \dfrac{1}{\theta} e^{\frac{x}{2}}$. Then $h$ is unbounded and
    \begin{align*}
        H(y) = \int_0^y \frac{1}{h^2(x)} dx = \theta^2 (1 - e^{-y}), \quad y\in[0,\infty),
    \end{align*}
    with $H(\infty) = \theta^2$. If $\Lambda(0) < \theta^2$, then $v(t) = H^{-1}(\Lambda(t))= -\ln\left(1 - \dfrac{\Lambda(t)}{\theta^2}\right)$. In this case, the unique equilibrium strategy is explicitly given by
    \begin{align*}
        \barpi(t)=(\sigma^\top(t))^{-1} a(t)= (\sigma^\top(t))^{-1}h(v(t))\lambda(t) = \dfrac{(\sigma^\top(t))^{-1}\lambda(t)}{\sqrt{\theta^2 - \Lambda(t)}},\quad t\in[0,T).
    \end{align*}

    \item \textbf{Gamma distribution:} Recall that $h(x) = \dfrac{1}{\alpha \beta} (1 + \dfrac{\beta}{2}x)$. Then $h$ is unbounded and
    \begin{align*}
        H(y) = \int_0^y \frac{\alpha^2 \beta^2}{(1 + \frac{\beta}{2}x)^2} dx = 2\alpha^2 \beta \left(1 - \dfrac{1}{1 + \frac{\beta}{2}y}\right),\quad y\in[0,\infty),
    \end{align*}
    with $H(\infty) = 2\alpha^2 \beta$. Provided that $\Lambda(0) < 2\alpha^2 \beta$, we obtain $v(t) = \dfrac{2}{\beta} \left[ \left( 1 - \dfrac{\Lambda(t)}{2\alpha^2\beta} \right)^{-1} - 1 \right]$. In this case, the unique equilibrium strategy is explicitly given by
    \begin{align*}
        \barpi(t)=(\sigma^\top(t))^{-1} a(t)= \dfrac{(\sigma^\top(t))^{-1}\lambda(t)}{\alpha \beta - \frac{\Lambda(t)}{2\alpha}},\quad t\in[0,T).
    \end{align*}
\end{enumerate}
\end{example}

\begin{example} 
This example is provided for Assertions (2) and (3) in Theorem \ref{theorem3.3}.  Let $\boldsymbol{R}_{\alpha}\sim G_\alpha$, where  $\alpha\in(0,1)$ is a fixed parameter and $G_\alpha$ denotes the  distribution function whose Laplace transform satisfies (see \cite{Pollard1946} and \citet[Section XIII.6]{Feller1971})
$$\int_0^\infty \me^{-vx}\md G_\alpha(x)=\me^{-v^\alpha},\quad v\ge0.$$
    Note that $\mEt[\boldsymbol{R}_{\alpha}] = \infty$ for $\alpha \in (0, 1)$. Moreover, $\displaystyle h(x) = \frac{1}{\alpha}\left(\frac{x}{2}\right)^{1-\alpha}$, $x\ge0$. Obviously, $h$ is unbounded. For simplicity,  we assume that $d=1$, $\sigma(\cdot)\equiv\sigma\neq0$ and \(\lambda(\cdot)\equiv\lambda \neq0\) are constants in \(\mR\).  
    \begin{itemize}
        \item Let $\alpha \in (0.5,1)$. By direct calculation, the solutions to ODE~\eqref{eq:firstode}  take the form
    \begin{align*}
        v^{(T_0)}(t)=
        \begin{cases}
            \left(2^{2\alpha-2}\dfrac{1}{\alpha^2}(2\alpha-1)\lambda^2(T_0-t)\right)^{\frac{1}{2\alpha-1}},&t\in[0,T_0),\\
            0, & t\in[T_0,T],
        \end{cases}
    \end{align*}
    where \(T_0\in[0,T]\). Consequently, the explicit expression for the equilibrium strategy $\pi^{(T_0)}(t)$ on $[0, T)$ is
    \begin{align*} 
    \pi^{(T_0)}(t)&= 
    \begin{cases} (\sigma)^{-1} \frac{1}{\alpha} \left[ \frac{1}{2} \left( 2^{2\alpha-2} \frac{2\alpha-1}{\alpha^2} \lambda^2 (T_0 - t) \right)^{\frac{1}{2\alpha-1}} \right]^{1-\alpha}\lambda, & 0 \le t < T_0, \\
    0, & T_0 \le t < T， 
    \end{cases} \\
    &=\begin{cases} (\sigma)^{-1} \left( \frac{\lambda}{\alpha} \right)^{\frac{1}{2\alpha-1}} \left[ \frac{2\alpha-1}{2}(T_0 - t) \right]^{\frac{1-\alpha}{2\alpha-1}}, & 0 \le t < T_0, \\ 0, & T_0 \le t < T,
    \end{cases}
    \end{align*}
    where \(T_0\in[0,T]\). 
    
    \item Let $\alpha\in(0,0.5]$. Direct calculation shows that  ODE \eqref{eq:firstode} admits only the trivial solution $v \equiv 0$ and therefore $\barpi=0$ is the unique equilibrium.
\end{itemize}
\end{example}

Assume $\mEt[\boldsymbol{R}] < \infty$ and $H(\infty)>\Lambda(0)$. Assertion (1) of Theorem~\ref{theorem3.3} shows that ODE \eqref{eq:firstode} admits a unique solution $v$ and  therefore the deterministic equilibrium strategy exists and is unique. To emphasize the dependence on $\boldsymbol{R}$, we occasionally write $v_{\boldsymbol{R}}$ and $a_{\boldsymbol{R}}$ for $v$ and $a$, respectively.

\begin{remark}
The closed form of equilibrium strategy in Theorem \ref{theorem3.3}(1) yields a clear structural decomposition of the risk exposure vector $a$, given by
\begin{align}\label{eq:localvola}
    a(t) = h(H^{-1}(\Lambda(t)))\lambda(t), \quad t\in[0,T).
\end{align}
Note that the functions $h(\cdot)$ and $H(\cdot)$ depend solely on the distribution of the RRA ${\boldsymbol{R}}$. The representation \eqref{eq:localvola} shows that, at time $t$, the effects of three distinct factors on the risk exposure vector are totally separated: the instantaneous market opportunity $\lambda(t)$, the aggregate future market opportunities $\Lambda(t)$ over the remaining interval $(t,T)$, and the RRA itself (represented by $h$ and $H$).

More specifically, at time $t$, if $\Lambda(t)$ and ${\boldsymbol{R}}$ are fixed, a larger current risk premium $|\lambda(t)|$ leads to a larger $|a(t)|$. If $|\lambda(t)|$ and ${\boldsymbol{R}}$ are fixed, then a larger future market opportunity $\Lambda(t)$ implies a larger value of $H^{-1}(\Lambda(t))$, and hence a larger value of $h\big(H^{-1}(\Lambda(t))\big)$ (see Proposition \ref{proposition3.10} for  a detailed discussion; see also Lemma \ref{h:mono} for the monotonicity of $h$). Finally, if both $\lambda(t)$ and $\Lambda(t)$ are fixed, one can study how ${\boldsymbol{R}}$ affects $h$ and $H$, and thereby its impact on $a(t)$; see the comparative statics analysis in Section~\ref{Comparative Statics of Equilibrium Strategies}.

We emphasize that, in contrast to the classical CRRA utility, which generates a “myopic” strategy reacting primarily to the instantaneous market price of risk $\lambda(t)$, the equilibrium under RRA is inherently forward-looking due to the dependence on $\Lambda(t)$. Finally, we refer to \cite{liang2024dynamic} and \cite*{liang2025dynamic} for similar structures of the equilibrium strategies under other preferences, although monotonicity properties and comparative statics are not investigated therein.
\end{remark}

Next, we study the effect of changes in future market opportunities on the risk exposure vector.
\begin{proposition}\label{proposition3.10}
Let $\boldsymbol{R}$ be a non-negative random variable with $\mEt[\boldsymbol{R}]\in(0,\infty)$. Fix a vector $\lambda_0 \in \mathbb{R}^d$. Consider a sequence of market settings indexed by $n \in \mathbb{N}$, defined by the pairs $\{(\lambda_n(\cdot), T_n)\}_{n \ge 1}$, where $T_n > 0$ and $\lambda_n: [0, T_n] \to \mathbb{R}^d$ satisfy $\lambda_n(0) = \lambda_0$. Let $\displaystyle\Lambda_n \triangleq \int_{0}^{T_n} |\lambda_n(s)|^2 ds$. Let $a_n(0)$ denote the risk exposure vector at time $0$ corresponding to the $n$-th market setting.
\begin{enumerate}[label=(\arabic*), font=\upshape]
    \item If the sequence $\{\Lambda_n\}_{n \ge 1}$ strictly increases to $H(\infty)$, then
   $\lim\limits_{n \to \infty} a_n(0) = \frac{\lambda_0}{r_0}$,
    where $r_0 \triangleq \operatorname{essinf} \boldsymbol{R}$. Moreover,   $|a_n(0)| \nearrow \frac{|\lambda_0|}{r_0}$ as $n\to \infty$.
    
    \item If the sequence $\{\Lambda_n\}_{n \ge 1}$ strictly decreases to  $0$, then
$\lim\limits_{n \to \infty} a_n(0) = \frac{\lambda_0}{\mEt[\boldsymbol{R}]}$.
    Moreover, 
    $|a_n(0)| \searrow \frac{|\lambda_0|}{\mEt[\boldsymbol{R}]}$ as $n\to \infty$.
\end{enumerate}
Here, we adopt the conventions $\frac{1}{0}=\infty$ and $0\cdot\infty=0$.
\begin{proof}
See Appendix \ref{proof of proposition3.10}
\end{proof}
\end{proposition}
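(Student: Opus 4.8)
The plan is to use the explicit formula for the equilibrium risk exposure vector from Theorem~\ref{theorem3.3}(1). Since each market setting has finite expected risk aversion and (implicitly) satisfies $\Lambda_n < H(\infty)$, the unique equilibrium gives $a_n(0) = h\big(H^{-1}(\Lambda_n)\big)\lambda_0$, because $\lambda_n(0)=\lambda_0$ and $v_{a_n}(0) = H^{-1}(\Lambda_n)$. Thus everything reduces to understanding the scalar quantity $g(\Lambda) \triangleq h\big(H^{-1}(\Lambda)\big)$ as $\Lambda$ ranges over $[0, H(\infty))$. Since $H$ is strictly increasing with $H(0)=0$, it suffices to analyze $h$ near $0$ and near $\infty$ and combine with monotonicity of $H^{-1}$.

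For part (2): as $\Lambda_n \searrow 0$ we have $H^{-1}(\Lambda_n)\searrow 0$, so by continuity of $h$ at $0$ (valid since $\mEt[\boldsymbol{R}]<\infty$) we get $h\big(H^{-1}(\Lambda_n)\big) \to h(0) = 1/\mEt[\boldsymbol{R}]$, giving the limit. The monotonicity claim $|a_n(0)|\searrow |\lambda_0|/\mEt[\boldsymbol{R}]$ follows once we show $h$ is increasing (this is referenced as Lemma~\ref{h:mono}, which I may invoke): then $g = h\circ H^{-1}$ is increasing in $\Lambda$, so a strictly decreasing $\{\Lambda_n\}$ yields decreasing $|a_n(0)| = g(\Lambda_n)|\lambda_0|$, and the limit is the infimum $h(0)|\lambda_0|$.

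For part (1): as $\Lambda_n \nearrow H(\infty)$ we have $H^{-1}(\Lambda_n)\nearrow H(\infty)$-preimage $= \sup\{x : H(x)<\infty\}$; since the hypothesis is framed with $\Lambda_n \nearrow H(\infty)$ and $H(\infty) = \int_0^\infty h^{-2}$, the relevant regime is $H^{-1}(\Lambda_n)\nearrow \infty$ when $H(\infty)<\infty$ forces $x\to\infty$ (or the analysis handles $H(\infty)=\infty$ via $x\to\infty$ too). So the key limit is $\lim_{x\to\infty} h(x)$. Writing $h(x) = \mEt[\me^{-\boldsymbol{R}x/2}]\big/\mEt[\boldsymbol{R}\me^{-\boldsymbol{R}x/2}]$, I would divide numerator and denominator by $\me^{-r_0 x/2}$ and apply dominated/monotone convergence: the tilted measure concentrates near $r_0 = \essinf\boldsymbol{R}$, so the ratio tends to $1/r_0$ (with the convention $1/0 = \infty$ when $r_0 = 0$, i.e.\ $\mEt[\boldsymbol{R}\me^{-\boldsymbol{R}x/2}] = o(\mEt[\me^{-\boldsymbol{R}x/2}])$). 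This gives $a_n(0) = h\big(H^{-1}(\Lambda_n)\big)\lambda_0 \to \lambda_0/r_0$, and monotone increase of $|a_n(0)|$ follows again from monotonicity of $h$ and $H^{-1}$ together with the fact that $1/r_0 = \sup_x h(x)$.

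The main obstacle is the asymptotic analysis $\lim_{x\to\infty}h(x) = 1/r_0$, particularly the degenerate case $r_0 = 0$ where one must show the denominator $\mEt[\boldsymbol{R}\me^{-\boldsymbol{R}x/2}]$ decays strictly faster than the numerator $\mEt[\me^{-\boldsymbol{R}x/2}]$; here one splits $\boldsymbol{R} = \boldsymbol{R}\ind_{\{\boldsymbol{R}\le\delta\}} + \boldsymbol{R}\ind_{\{\boldsymbol{R}>\delta\}}$, bounds the first term by $\delta$ times the numerator and the second by an exponentially smaller quantity, then lets $\delta\downarrow 0$ after $x\to\infty$. Care is also needed that the implicit hypothesis $\Lambda_n < H(\infty)$ holds so that the equilibrium exists for every $n$ (otherwise $a_n(0)$ is undefined); I would note this is built into the setup via Theorem~\ref{theorem3.3}(1). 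The monotonicity statements are then routine given Lemma~\ref{h:mono}. The detailed computations are deferred to Appendix~\ref{proof of proposition3.10}.
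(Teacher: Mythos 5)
Your proposal is correct and follows essentially the same route as the paper: reduce to $a_n(0)=h\bigl(H^{-1}(\Lambda_n)\bigr)\lambda_0$, use continuity of $h$ at $0$ for part (2), the asymptotic $\lim_{x\to\infty}h(x)=1/r_0$ (proved by the same truncation of $\boldsymbol{R}$ near its essential infimum, cf.\ Lemma \ref{lemmaA.3}) for part (1), and the monotonicity of $h$ and $H^{-1}$ for the monotone convergence claims. The only slip is your description of $H^{-1}(\Lambda_n)$'s limit as ``$\sup\{x:H(x)<\infty\}$'' --- since $h>0$ everywhere here, $H$ is finite on all of $[0,\infty)$ and the correct reason $H^{-1}(\Lambda_n)\to\infty$ is simply that $H$ maps $[0,\infty)$ onto $[0,H(\infty))$; this does not affect the argument.
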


\begin{remark}
The convergence results in Proposition \ref{proposition3.10} depend solely on the sequence of integral values $\{\Lambda_n\}_{n \ge 1}$, and no convergence of the parameter sequence $\{(\lambda_n(\cdot), T_n)\}_{n \ge 1}$ itself is required. Moreover, focusing on time $0$ does not cause a loss of generality. For any time $t$, analogous conclusions follow by a simple time translation, with $t$ treated as the initial time.
\end{remark}
 \begin{remark}
    In the single-stock case, assuming that $\lambda > 0$ and $\sigma > 0$ are constants, one can use $\widehat{\pi}(\cdot) = \frac{1}{\sigma} h(v(\cdot)) \lambda$, together with the monotonicity of $h$ given in  Lemma \ref{h:mono}, to establish a favorable property: $\pi(\cdot)$ is decreasing. See Section 4.3 of \cite{Desmettre2023} for a detailed discussion.    However, the results in \cite{Desmettre2023} do not yield Proposition~\ref{proposition3.10}. The proof of Proposition~\ref{proposition3.10} is crucially based on our closed-form expression for the risk exposure vector, given in \eqref{eq:localvola}, which is not available in their work.
\end{remark}

As the final part of this section, we use the closed form of the equilibria in Theorem \ref{theorem3.3} to investigate the continuity properties of $a(\cdot)$ w.r.t. the distribution of the random risk aversion. More specifically,
the following theorem shows that if a sequence of RRA $\{\boldsymbol{R}_n\}_{n\ge1}$ converges in distribution to $\boldsymbol{R}$, then the corresponding sequence of risk exposure vectors $\{a_n\}_{n\ge1}$  converges to $a$.

\begin{theorem}\label{thm:converge:distribution}
Let $\{\boldsymbol{R}_n\}_{n\ge 1}$ be a sequence of random risk aversions that converges in distribution to $\boldsymbol{R}$ and $\lim\limits_{n \to \infty} \mEt[\boldsymbol{R}_n] = \mEt[\boldsymbol{R}] \in (0, \infty)$. Assume that 
$H_n(\infty) > \Lambda(0)$ for all $n \geq 1$ and 
$H(\infty) > \Lambda(0)$.
Let $a_n=a_{\boldsymbol{R}_n}$ and $a=a_{\boldsymbol{R}}$.
Then $\lim\limits_{n \to \infty} a_n(t) = a(t)$ for all $t \in [0, T)$. Moreover, if $\lambda \in L^\infty([0,T))$, then the convergence is uniform on $[0,T)$.
\begin{proof}
See Appendix \ref{proof:thm:converge:distribution}.
\end{proof}
\end{theorem}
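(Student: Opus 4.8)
The plan is to exploit the explicit representation from Theorem~\ref{theorem3.3}(1), namely $a_{\boldsymbol{R}}(t)=h_{\boldsymbol{R}}\bigl(H_{\boldsymbol{R}}^{-1}(\Lambda(t))\bigr)\lambda(t)$, and to reduce the claimed convergence $a_n(t)\to a(t)$ to two facts: (i) pointwise convergence of the functions $h_n:=h_{\boldsymbol{R}_n}$ to $h:=h_{\boldsymbol{R}}$, and (ii) convergence of the inverse functions $H_n^{-1}\to H^{-1}$ on the relevant interval, evaluated at $\Lambda(t)$. Since $\lambda(t)$ is a fixed deterministic factor, once I control $h_n\bigl(H_n^{-1}(\Lambda(t))\bigr)\to h\bigl(H^{-1}(\Lambda(t))\bigr)$ the rest is immediate.

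First I would establish the convergence of $l_n(y):=\tilde{\mE}[\me^{-\boldsymbol{R}_n y}]$ and of $\tilde{\mE}[\boldsymbol{R}_n\me^{-\frac12\boldsymbol{R}_n y}]$. Convergence in distribution of $\boldsymbol{R}_n$ to $\boldsymbol{R}$ gives $\tilde{\mE}[\me^{-\frac12\boldsymbol{R}_n x}]\to\tilde{\mE}[\me^{-\frac12\boldsymbol{R} x}]$ for every $x\ge0$ because $r\mapsto\me^{-\frac12 r x}$ is bounded and continuous. For the numerator in the definition of $H$ (the denominator of $h$), the integrand $r\mapsto r\me^{-\frac12 r x}$ is bounded and continuous for each fixed $x>0$, so $\tilde{\mE}[\boldsymbol{R}_n\me^{-\frac12\boldsymbol{R}_n x}]\to\tilde{\mE}[\boldsymbol{R}\me^{-\frac12\boldsymbol{R} x}]$ for $x>0$; at $x=0$ this is exactly the hypothesis $\tilde{\mE}[\boldsymbol{R}_n]\to\tilde{\mE}[\boldsymbol{R}]\in(0,\infty)$. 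Hence $h_n(x)\to h(x)$ for all $x\ge0$, and consequently $1/h_n^2\to 1/h^2$ pointwise. To pass to $H_n(y)=\int_0^y h_n^{-2}(x)\,\md x\to H(y)$, I would produce a dominating function: since each $h_n$ is increasing (Lemma~\ref{h:mono}) with $h_n(0)=1/\tilde{\mE}[\boldsymbol{R}_n]$, and $\tilde{\mE}[\boldsymbol{R}_n]$ is a convergent hence bounded sequence bounded away from $0$ by the hypothesis, we get a uniform lower bound $h_n(x)\ge c>0$ on $[0,\infty)$, so $1/h_n^2\le 1/c^2$ and dominated convergence yields $H_n(y)\to H(y)$ on every finite interval, indeed locally uniformly since the functions are monotone and the limit is continuous (a Dini-type argument).

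Next I would upgrade this to convergence of inverses. Fix $t\in[0,T)$ and set $\eta:=\Lambda(t)\le\Lambda(0)<\min\{H(\infty),\inf_n H_n(\infty)\}$. Because $H_n$ is continuous, strictly increasing, and $H_n\to H$ locally uniformly with $H$ strictly increasing, standard arguments give $H_n^{-1}(\eta)\to H^{-1}(\eta)$: if $y_n:=H_n^{-1}(\eta)$ had a subsequence escaping to $+\infty$ or converging to some $y^\ast\ne H^{-1}(\eta)$, evaluating $H_n(y_n)=\eta$ and using local uniform convergence plus the uniform lower bound $h_n\ge c$ (which forces $H_n(y)\ge y/c^2\to\infty$, precluding blow-up of $y_n$) yields a contradiction. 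Then $h_n\bigl(H_n^{-1}(\eta)\bigr)\to h\bigl(H^{-1}(\eta)\bigr)$ follows by combining $H_n^{-1}(\eta)\to H^{-1}(\eta)$ with the pointwise convergence $h_n\to h$ and, to handle the moving argument, the local uniform (or equicontinuity) estimate on $\{h_n\}$; multiplying by $\lambda(t)$ gives $a_n(t)\to a(t)$.

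Finally, for the uniform statement when $\lambda\in L^\infty([0,T])$, note $\Lambda(t)$ ranges over the compact interval $[0,\Lambda(0)]$; since $h_n\circ H_n^{-1}\to h\circ H^{-1}$ locally uniformly on $[0,H(\infty))\supset[0,\Lambda(0)]$ (monotone functions converging pointwise to a continuous limit on a compact set converge uniformly), and $|\lambda(t)|$ is bounded, we get $\sup_{t\in[0,T)}|a_n(t)-a(t)|\le \|\lambda\|_\infty\cdot\sup_{\eta\in[0,\Lambda(0)]}\bigl|h_n(H_n^{-1}(\eta))-h(H^{-1}(\eta))\bigr|\to0$. I expect the main obstacle to be the careful handling of the ``moving argument'' $H_n^{-1}(\eta)$ inside the varying functions $h_n$ — i.e.\ justifying $h_n(y_n)\to h(y^\ast)$ when both the function and the point move — which is where the uniform lower bound $h_n\ge c>0$ (equivalently, $\sup_n\tilde{\mE}[\boldsymbol{R}_n]<\infty$) and the local equicontinuity of $\{h_n\}$ do the real work; verifying these uniform-in-$n$ bounds from the bare hypothesis of convergence in distribution plus convergence of first moments is the delicate point.
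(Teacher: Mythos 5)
Your overall strategy is the same as the paper's: write $a_{\boldsymbol{R}}(t)=h_{\boldsymbol{R}}\bigl(H_{\boldsymbol{R}}^{-1}(\Lambda(t))\bigr)\lambda(t)$, prove convergence of $h_n$ to $h$, deduce convergence of $H_n$ to $H$ and of the inverses $H_n^{-1}$ to $H^{-1}$ on $[0,\Lambda(0)]$, and then handle the composition $h_n\circ H_n^{-1}$. Where you differ is in how the convergence of $h_n$ is obtained. The paper's Lemma~\ref{lma:h_ntoh} proves \emph{uniform} convergence of $h_n$ on compacts up front, via equicontinuity of the numerators (derivative bounded by $\tfrac12\mEt[\boldsymbol{R}_n]$) and, for the denominators, uniform integrability of $\{\boldsymbol{R}_n\}$ plus a truncation at level $M$; once that is in hand the moving-argument step $h_n(v_n(t))\to h(v(t))$ is a one-line triangle inequality. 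You instead get \emph{pointwise} convergence cheaply — observing that $r\mapsto r\me^{-\frac12 rx}$ is bounded and continuous for each fixed $x>0$, so convergence in distribution alone suffices there, with the moment hypothesis only needed at $x=0$ — which is a genuinely more elementary route to the pointwise statement. The gap you correctly flag at the end (local uniform convergence or equicontinuity of $\{h_n\}$, needed to conclude $h_n(y_n)\to h(y^\ast)$ when the argument moves) has an easy resolution you do not quite state: each $h_n$ is increasing by Lemma~\ref{h:mono} and the limit $h$ is continuous, so pointwise convergence of monotone functions to a continuous limit is automatically locally uniform — the same P\'olya-type argument you invoke for $H_n$. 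Supplying that observation closes the acknowledged gap without reproducing the paper's uniform-integrability/truncation machinery.

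One step as written is wrong, though repairable. You claim that the uniform lower bound $h_n\ge c>0$ ``forces $H_n(y)\ge y/c^2\to\infty$, precluding blow-up of $y_n$.'' The inequality goes the other way: $h_n\ge c$ gives $h_n^{-2}\le c^{-2}$ and hence $H_n(y)\le y/c^2$, which is an upper bound on $H_n$ and says nothing about $y_n=H_n^{-1}(\eta)$ staying bounded. A uniform \emph{upper} bound on $h_n$ is not available in general (e.g.\ if $\operatorname{essinf}\boldsymbol{R}_n=0$ then $h_n$ is unbounded). The correct way to preclude $y_n\to\infty$, and the one the paper's Lemma~\ref{lemmaB.1} uses, is already within your reach: since $H(\infty)>\Lambda(0)$, pick $M$ with $H(M)>\Lambda(0)\ge\eta$; local uniform convergence $H_n\to H$ on $[0,M]$ gives $H_n(M)>\eta$ for all large $n$, and strict monotonicity of $H_n$ then forces $y_n\le M$. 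With that substitution, and the monotonicity upgrade for $h_n$ noted above, your argument is complete and delivers both the pointwise and the uniform conclusions exactly as in the paper.
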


\section{Optimal Equilibrium}\label{Optimal Equilibrium Selection}

Remark \ref{rmk:infinity} shows that, if $\mEt[\boldsymbol{R}] = \infty$ and $H(\varepsilon) < \infty$ for all $\varepsilon > 0$, then the set $\Pi^e_d$ of the equilibria contains infinitely many elements. 
This multiplicity of equilibria raises the natural question: which element of $\Pi^e_d$ should the agent choose?
In this section, we propose that the agent  choose the equilibrium $\pi$ that maximizes objective functional $J(\pi;0,w)$ at the initial time over all $\pi\in\Pi^e_d$. This leads to the following definition.

\begin{definition} An equilibrium
\( \widehat \pi \in\Pi^e_d \) is called optimal if
\(J(\widehat \pi;0,w) \ge J( \pi;0,w)\) for all \(\pi\in\Pi^e_d 
\) and for all $w>0$.
\end{definition}

Recalling the relationship $J(\pi;t,w)=wJ_0(t,\pi)$ in Remark \ref{rmk:j_0}, finding an optimal equilibrium in $\Pi^e_d$ is equivalent to finding a $\widehat{\pi}\in\Pi^e_d$ that maximizes $J_0(t,\pi)$. 
 
 Let $\pi\in\Pi^e_d$.  
 We begin by computing $J_0(t,\pi)$.  Using Theorem~\ref{theorem3.3}(2), we have $\pi=\pi^{(T_0)}$ for some $T_0\in\mathcal{T}$. Let $a^{(T_0)}\triangleq \sigma^{\top}\pi^{(T_0)}$ and $
    y^{(T_0)}(t) \triangleq \int\limits_{t}^{T} (a_s^{(T_0)})^\top \lambda(s) \md s $. Then
\begin{align*}
-(\boldsymbol{R} v^{(T_0)}(t) - 2 y^{(T_0)}(t)) &=-\boldsymbol{R} v^{(T_0)}(t) + 2\int_{t}^{T} (a_s^{(T_0)})^\top \lambda(s) \md s\\
&=-\boldsymbol{R} v^{(T_0)}(t) + 2 \int_{t}^{T} h(v^{(T_0)}(s)) |\lambda(s)|^2 \md s \\&= -\boldsymbol{R} v^{(T_0)}(t) + 2 \int_{0}^{v^{(T_0)}(t)} \frac{1}{h(y)} \md y,
\end{align*}
where the third equality arises from the change of variable \( y = v^{(T_0)}(s) \).
Thus, we obtain 
\begin{align*}
J_0(t, \pi^{(T_0)}) = \exp\left(\int_{0}^{v^{(T_0)}(t)} \frac{1}{h(y)} \md y\right) l\left( \frac{v^{(T_0)}(t)}{2} \right).
\end{align*}
Let $\mathcal{L}(z):[0,\infty)\to \mR$ be defined by
\(\mathcal{L}(z) =\int_{0}^{z} \frac{1}{h(y)} \md y + \log l\left( \frac{z}{2} \right).\)
Then we have 
\begin{align}\label{form:J}
    J_0(t, \pi^{(T_0)}) = \exp\left( \mathcal{L}(v^{(T_0)}(t)) \right).
\end{align}

According to  \eqref{form:J}, in order to study the
maximization of $J_0 (t, \pi^{(T_0)})$ over $T_0 \in \mathcal{T}$, we need to analyze
the monotonicity of $\mathcal{L}$ and the monotonicity of $v^{(T_0)}$ w.r.t. 
$T_0$. These properties are provided in Lemma 
\ref{lemma4.2} and Lemma \ref{lemma4.3}, respectively.
\begin{lemma}\label{lemma4.2}
    $\mathcal{L}(z)$ is strictly increasing in $z$.
\begin{proof}
By direct differentiation, we have
    \begin{align*}
\frac{d\mathcal{L}}{dz} (z)= \frac{1}{h(z)} - \frac{1}{2} \frac{\mEt \left[ \boldsymbol{R} e^{-\frac{1}{2} \boldsymbol{R} z} \right]}{\mEt \left[ e^{-\frac{1}{2} \boldsymbol{R} z} \right]}=\frac{1}{2h(z)}>0 \quad \text{for all } z>0.
\end{align*}
As such, $\mathcal{L}(z)$ is strictly increasing in $z$.
\end{proof}
\end{lemma}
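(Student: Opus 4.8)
The final statement to prove is Lemma~\ref{lemma4.2}: that $\mathcal{L}(z) = \int_0^z \frac{1}{h(y)}\md y + \log l(z/2)$ is strictly increasing on $[0,\infty)$.

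The plan is to differentiate $\mathcal{L}$ directly and show the derivative is strictly positive for $z>0$ (continuity at $z=0$ then upgrades this to strict monotonicity on the closed interval). The first term contributes $1/h(z)$ by the fundamental theorem of calculus, since $1/h$ is continuous wherever $h>0$ (and $h(z)>0$ for $z>0$ regardless of whether $\tilde{\mE}[\boldsymbol{R}]$ is finite). For the second term, recall $l(y)=\tilde{\mE}[\me^{-\boldsymbol{R} y}]$, so $l(z/2)=\tilde{\mE}[\me^{-\boldsymbol{R} z/2}]$ and
\[
\frac{\md}{\md z}\log l(z/2) = \frac{\frac{1}{2}\cdot(-1)\,\tilde{\mE}[\boldsymbol{R}\me^{-\boldsymbol{R} z/2}]}{\tilde{\mE}[\me^{-\boldsymbol{R} z/2}]} = -\frac{1}{2}\cdot\frac{\tilde{\mE}[\boldsymbol{R}\me^{-\frac12\boldsymbol{R} z}]}{\tilde{\mE}[\me^{-\frac12\boldsymbol{R} z}]}.
\]
But by the definition \eqref{eq:h} of $h$, the ratio $\tilde{\mE}[\boldsymbol{R}\me^{-\frac12\boldsymbol{R} z}]/\tilde{\mE}[\me^{-\frac12\boldsymbol{R} z}]$ is exactly $1/h(z)$. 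Hence $\frac{\md\mathcal{L}}{\md z} = \frac{1}{h(z)} - \frac{1}{2h(z)} = \frac{1}{2h(z)} > 0$ for all $z>0$, which is precisely the one-line computation in the displayed equation of the lemma statement's proof.

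The only points that deserve a sentence of care are: (i) justifying differentiation under the expectation when writing $l'(y) = -\tilde{\mE}[\boldsymbol{R}\me^{-\boldsymbol{R} y}]$ — this is standard dominated-convergence since for $y>0$ the map $r\mapsto r\me^{-ry}$ is bounded on $[0,\infty)$, and $\boldsymbol{R}$ is only assumed non-negative with $\tilde P(\boldsymbol{R}>0)>0$, which already guarantees $\tilde{\mE}[\boldsymbol{R}\me^{-\frac12\boldsymbol{R} z}]\in(0,\infty)$ for each $z>0$ so the denominator never vanishes and $h(z)\in(0,\infty)$; (ii) noting $1/h$ is locally integrable near $0$ so $\mathcal{L}$ is well-defined and continuous there — though even if $H(\varepsilon)=\infty$, $\mathcal{L}$ as written need only be finite where it is used, and on $(0,\infty)$ the argument is unaffected. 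I would remark that there is essentially no obstacle here; the content of the lemma is the algebraic identity $l'(z/2)/l(z/2) = -1/h(z)$ coming straight from the definition of $h$, after which positivity is immediate. The main thing to get right is simply citing \eqref{eq:h} correctly and observing the cancellation leaves $\frac{1}{2h(z)}$.
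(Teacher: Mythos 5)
Your proposal is correct and follows essentially the same route as the paper: differentiate $\mathcal{L}$, observe from the definition \eqref{eq:h} that $\frac{\md}{\md z}\log l(z/2)=-\frac{1}{2h(z)}$, and conclude $\mathcal{L}'(z)=\frac{1}{2h(z)}>0$. Your additional remarks on differentiating under the expectation and local integrability of $1/h$ near $0$ are sensible housekeeping the paper leaves implicit, but they do not change the argument.
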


\begin{lemma}\label{lemma4.3}
Under the conditions of Assertion (2) of 
Theorem \ref{theorem3.3}, 
let $T_{0,1}, T_{0,2} \in \mathcal{T}$ with $T_{0,1} < T_{0,2}$. Then 
\begin{align}\label{ineq:vT1T2}
v^{(T_{0,1})}(t)
\begin{cases}
< v^{(T_{0,2})}(t),\quad t\in[0,T_{0,2}),\\
= v^{(T_{0,2})}(t),\quad t\in[T_{0,2},T].
\end{cases}
\end{align}
In particular, if $H(\infty) > \Lambda(0)$ and $\Lambda(t) > 0$ for all $t \in [0, T)$, then $T\in\mathcal{T}$ and $v^{(T)}(t)>  v^{(T_0)}(t)$ for any $T_0 \in \mathcal{T} \setminus \{T\}$ and any $t \in [0, T)$.
\end{lemma}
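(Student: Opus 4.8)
The plan is to argue directly from the closed-form expression \eqref{solutions}. Writing $\int_t^{T_0}|\lambda(s)|^2\md s=\Lambda(t)-\Lambda(T_0)$, we have $v^{(T_0)}(t)=H^{-1}(\Lambda(t)-\Lambda(T_0))$ on $[0,T_0)$ and $v^{(T_0)}(t)=0$ on $[T_0,T]$. Recall from the proof of Theorem~\ref{theorem3.3}(2) that, under its hypotheses, $H$ is continuous and strictly increasing with $H(0)=0$, so $H^{-1}$ is well-defined, strictly increasing, and satisfies $H^{-1}(0)=0$ on $[0,H(\infty))$; also $\Lambda$ is continuous and (weakly) decreasing. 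The arguments $\Lambda(t)-\Lambda(T_0)$ always lie in $[0,H(\infty))$, since $T_0\in\mathcal{T}$ gives $H(\infty)>\Lambda(0)-\Lambda(T_0)\ge\Lambda(t)-\Lambda(T_0)\ge0$ for $t\in[0,T_0)$, so all the expressions below make sense.

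The one step that is not purely mechanical is the strict inequality $\Lambda(T_{0,1})>\Lambda(T_{0,2})$. It does \emph{not} follow from monotonicity of $\Lambda$ alone (which is only weak), but it does follow from $T_{0,2}\in\mathcal{T}$: the defining property ``$\Lambda(t)>\Lambda(T_{0,2})$ for all $t\in[0,T_{0,2})$'' evaluated at $t=T_{0,1}\in[0,T_{0,2})$ gives exactly $\Lambda(T_{0,1})>\Lambda(T_{0,2})$.

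With this in hand, I would split $[0,T]$ into $[0,T_{0,1})$, $[T_{0,1},T_{0,2})$, and $[T_{0,2},T]$. On $[T_{0,2},T]$ both functions vanish, giving equality. On $[T_{0,1},T_{0,2})$ one has $v^{(T_{0,1})}\equiv0$, while $v^{(T_{0,2})}(t)=H^{-1}(\Lambda(t)-\Lambda(T_{0,2}))>0$ because $\Lambda(t)-\Lambda(T_{0,2})>0$ (again by $T_{0,2}\in\mathcal{T}$) and $H^{-1}$ is strictly increasing with $H^{-1}(0)=0$. On $[0,T_{0,1})$, both functions equal $H^{-1}(\Lambda(t)-\Lambda(T_{0,i}))$, and since $\Lambda(t)-\Lambda(T_{0,1})<\Lambda(t)-\Lambda(T_{0,2})$ by the previous step, strict monotonicity of $H^{-1}$ yields $v^{(T_{0,1})}(t)<v^{(T_{0,2})}(t)$. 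This establishes \eqref{ineq:vT1T2}.

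For the ``in particular'' clause, I would first check $T\in\mathcal{T}$: since $\Lambda(T)=0$, the condition $H(\infty)>\Lambda(0)-\Lambda(T)$ is just the hypothesis $H(\infty)>\Lambda(0)$, and $\Lambda(t)>\Lambda(T)=0$ for all $t\in[0,T)$ is the other hypothesis. Then, for any $T_0\in\mathcal{T}\setminus\{T\}$ we have $T_0<T$, and applying the inequality just proved with $T_{0,1}=T_0$ and $T_{0,2}=T$ gives $v^{(T_0)}(t)<v^{(T)}(t)$ for all $t\in[0,T)$. The main (and essentially only) subtlety throughout is the observation that the required strict comparison of $\Lambda$-values must be drawn from the definition of $\mathcal{T}$ rather than from monotonicity of $\Lambda$; the rest is a routine case analysis using the explicit solution and the strict monotonicity of $H^{-1}$.
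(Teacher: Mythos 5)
Your proposal is correct and follows essentially the same route as the paper's proof: the same three-interval case split, the same derivation of the strict inequality $\Lambda(T_{0,1})>\Lambda(T_{0,2})$ from the defining property of $T_{0,2}\in\mathcal{T}$ (rather than from mere monotonicity of $\Lambda$), and the same use of the strict monotonicity of $H^{-1}$ for both the main inequality and the ``in particular'' clause. Your additional checks that the arguments of $H^{-1}$ lie in $[0,H(\infty))$ and that $v^{(T_{0,2})}>0$ on $[T_{0,1},T_{0,2})$ are fine but only make explicit what the paper leaves implicit.
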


\begin{proof}
    Let $T_{0,1}, T_{0,2} \in \mathcal{T}$ with $T_{0,1} < T_{0,2}$. 
For $t\in[T_{0,2},T]$, we have $v^{(T_{0,1})}(t) = v^{(T_{0,2})}(t) = 0$ based on  the definition of $ \mathcal{T} $. 
For $t\in[T_{0,1}, T_{0,2})$, we have $v^{(T_{0,2})}(t) > 0=v^{(T_{0,1})}(t)$.
For $t\in[0, T_{0,1})$, we have
\begin{align*}
v^{(T_{0,i})}(t) = H^{-1}\left(\int_t^{T_{0,i}} |\lambda(s)|^2 \md s\right) = H^{-1}(\Lambda(t) - \Lambda(T_{0,i})), \quad i=1,2.
\end{align*}
By the definition of $\mathcal{T}$, $\Lambda(T_{0,1}) > \Lambda(T_{0,2})$. 
Because $H^{-1}$ is strictly increasing, we have $v^{(T_{0,1})}(t) < v^{(T_{0,2})}(t)$ for $t\in[0, T_{0,1})$. Thus, \eqref{ineq:vT1T2} is proved.

Assume $H(\infty) > \Lambda(0)$ and $\Lambda(t) > 0$ for all $t \in [0, T)$. 
By the definition of $\mathcal{T}$, $T\in\mathcal{T}$. 
Plugging $T_{0,2}=T$ and $T_{0,1}=T_0$ into \eqref{ineq:vT1T2} yields $v^{(T)}(t)>  v^{(T_0)}(t)$ for any $T_0 \in \mathcal{T} \setminus \{T\}$ and any $t \in [0, T)$.
\end{proof}

We are now ready to present the main result on the optimal equilibria.

\begin{theorem}\label{thm:optimal}
Under the conditions of Assertion (2) of Theorem \ref{theorem3.3}, we have the following assertions.
\begin{enumerate}[label=(\arabic*), font=\upshape, leftmargin=*]
\item The optimal strategy exists if and only if  \( H(\infty) > \Lambda(0) \). If it exists, it is unique and given by \( \pi^{(\varphi(0))}(\cdot)=(\sigma^\top(\cdot))^{-1}h( v^{(\varphi(0))}(\cdot))\lambda(\cdot)\).

 \item $\pi^{(\varphi(0))}$ is also uniformly optimal:
 \begin{align*}
J( \pi^{(\varphi(0))};t,w) \ge J( \pi;t,w) \text{ for all }t\in[0,T), w>0,  \text{ and } \pi\in\Pi^e_d. 
\end{align*}
 
\item \(\pi^{(\varphi(0))}\) is uniformly strictly optimal, that is,
 \begin{align*}
J( \pi^{(\varphi(0))};t,w) > J( \pi;t,w) \text{ for all }t\in[0,T), w>0,  \text{ and } \pi\in\Pi^e_d\setminus\{\pi^{(\varphi(0))}\},
\end{align*}
if and only if 
\[H(\infty) > \Lambda(0) \text{ and } \Lambda(t) > 0\text{ for all } t \in [0, T).\] If it is the case, then $\varphi(0) = T$.
\end{enumerate}
\begin{proof}\
\paragraph{(1)}  
Combining \eqref{form:J} with Lemma \ref{lemma4.2}, we see that $J_0(0, \pi^{(T_0)})$ attains its maximum precisely when  $v^{(T_0)}(0)$ is maximal. Moreover, by Lemma \ref{lemma4.3},  $\displaystyle v^{(T_0)}(0)$ is strictly increasing in $T_0\in\mathcal{T}$. Thus, finding the optimal equilibrium strategy is equivalent to identifying the maximal parameter  $T_0\in \mathcal{T}$.  By Lemma \ref{lemma4.4}, $\displaystyle\mathcal{T} = \{\varphi(\eta) :\,  \eta \in \mathcal{A}\}$. The existence of a maximal element in $\mathcal{T}$ is equivalent to the existence of a minimal element in $\mathcal{A}$, because $\varphi$ is strictly decreasing on $\mathcal{A}$.

If $H(\infty) > \Lambda(0)$, then $\mathcal{A} = [0, \Lambda(0)]$ includes its minimum at $0$, and hence a optimal strategy exists. Conversely, if $H(\infty) \le \Lambda(0)$, then $\mathcal{A} = (\Lambda(0) - H(\infty), \Lambda(0)]$ is left-open and admits no minimum. Thus, $\mathcal{T}$ has a maximal element if and only if $H(\infty) > \Lambda(0)$.  
In conclusion, the optimal strategy exists if and only if $H(\infty) > \Lambda(0)$, and when it exists, it is given by $\pi^{(\varphi(0))}$.
\paragraph{(2)}
Following the same line of reasoning as in Assertion (1), we immediately obtain that $\pi^{(\varphi(0))}$
 is also uniformly optimal as \eqref{form:J} remains valid for every $t\in[0,T]$.

\paragraph{(3)} 
Assume that $\Lambda(t) = 0$ for some $t \in [0, T)$. Then $v^{(T_0)}(t) \equiv 0$ for any $T_0\in\mathcal{T}$, which implies $J(t, \pi^{(T_0)}) \equiv \exp (\mathcal{L}(0))= 1$ for all $\pi^{(T_0)} \in \Pi_d^e$. Hence, there is no uniformly strictly optimal equilibrium in this case.  Conversely, assume that $\Lambda(t) > 0$ for all $t \in [0, T)$ and $H(\infty) > \Lambda(0)$.  We have $\varphi(0) = T$ because $\Lambda(t) > 0$ for all $t < T$. Let $T_0\in\mathcal{T}\setminus \{T\}$. By Lemma \ref{lemma4.3}, for any $t\in[0,T)$, it holds that 
\begin{align*}
    v^{(T)}(t) > v^{(T_0)}(t).
\end{align*}
We obtain $J_0(t, \pi^{(T)}) > J_0(t, \pi^{(T_0)})$ as $\mathcal{L}(z)$ is strictly increasing in $z$. Therefore, $\pi^{(\varphi(0))}=\pi^{(T)}$ is uniformly strictly optimal.
\end{proof}
\end{theorem}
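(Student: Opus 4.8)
The plan is to prove Theorem~\ref{thm:optimal} by reducing everything to the two monotonicity lemmas already established: Lemma~\ref{lemma4.2} (strict monotonicity of $\mathcal{L}$) and Lemma~\ref{lemma4.3} (strict monotonicity of $v^{(T_0)}(0)$, and more generally $v^{(T_0)}(t)$, in the parameter $T_0\in\mathcal{T}$), together with the explicit description $\mathcal{T} = \{\varphi(\eta)\mid \eta\in\mathcal{A}\}$ from Lemma~\ref{lemma4.4}, where $\mathcal{A} = [0,\Lambda(0)]\cap(\Lambda(0)-H(\infty),\Lambda(0)]$ and $\varphi$ is strictly decreasing. The key observation is formula~\eqref{form:J}: $J_0(t,\pi^{(T_0)}) = \exp(\mathcal{L}(v^{(T_0)}(t)))$, so that since $\mathcal{L}$ is strictly increasing, maximizing $J_0(t,\pi^{(T_0)})$ over $T_0\in\mathcal{T}$ is equivalent to maximizing $v^{(T_0)}(t)$, and $J$ and $J_0$ differ only by the positive multiplicative factor $w$.

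For assertion~(1), I would argue that $v^{(T_0)}(0)$ is strictly increasing in $T_0$ (Lemma~\ref{lemma4.3}), so an optimal equilibrium exists iff $\mathcal{T}$ has a maximal element; translating through the decreasing bijection $\varphi$, this is equivalent to $\mathcal{A}$ having a minimal element. When $H(\infty) > \Lambda(0)$, $\mathcal{A} = [0,\Lambda(0)]$ is closed and contains $0$; when $H(\infty)\le\Lambda(0)$, $\mathcal{A} = (\Lambda(0)-H(\infty),\Lambda(0)]$ is left-open with no minimum. Hence existence holds iff $H(\infty) > \Lambda(0)$, and in that case the optimum is $\pi^{(\varphi(0))}$; uniqueness follows from the \emph{strict} monotonicity. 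For assertion~(2), I note that \eqref{form:J} holds at every $t\in[0,T]$, not just $t = 0$, and Lemma~\ref{lemma4.3} gives $v^{(\varphi(0))}(t)\ge v^{(T_0)}(t)$ for every $t$ simultaneously (it is the parameter-wise maximum), so the same reasoning upgrades optimality at time $0$ to uniform optimality across all $t$ and $w$.

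For assertion~(3), the direction ``uniformly strictly optimal $\Rightarrow$ $H(\infty)>\Lambda(0)$ and $\Lambda(t)>0$ for all $t\in[0,T)$'' is handled by contraposition: $H(\infty)\le\Lambda(0)$ already kills existence by~(1); and if $\Lambda(t_0) = 0$ for some $t_0\in[0,T)$ then $v^{(T_0)}(t_0) = 0$ for all $T_0\in\mathcal{T}$ (because $v^{(T_0)}$ is nonnegative, vanishes on $[T_0,T]$, and $\Lambda$ is decreasing so $\Lambda\equiv 0$ on $[t_0,T]$ forces $v^{(T_0)}(t_0)=H^{-1}(\Lambda(t_0)-\Lambda(T_0))$ or $0$, all equal to $0$), whence $J_0(t_0,\pi^{(T_0)}) = \exp(\mathcal{L}(0)) = 1$ for every equilibrium, contradicting strictness at $t_0$. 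For the converse, assuming $H(\infty)>\Lambda(0)$ and $\Lambda(t)>0$ on $[0,T)$, the definition of $\varphi$ gives $\varphi(0) = \min\{t: \Lambda(t) = 0\} = T$, and the ``in particular'' clause of Lemma~\ref{lemma4.3} gives $v^{(T)}(t) > v^{(T_0)}(t)$ for every $t\in[0,T)$ and every $T_0\in\mathcal{T}\setminus\{T\}$; strict monotonicity of $\mathcal{L}$ then yields $J_0(t,\pi^{(T)}) > J_0(t,\pi^{(T_0)})$ for all such $t, T_0$, which is exactly uniform strict optimality.

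The main obstacle is not any single hard estimate but rather the bookkeeping of the order-reversing correspondence between $\mathcal{T}$ and $\mathcal{A}$ and the careful treatment of endpoint (openness/closedness) behavior of $\mathcal{A}$ in assertion~(1), together with the degenerate subcase $\Lambda(t_0)=0$ in assertion~(3), where one must check that \emph{every} equilibrium collapses to value $1$ at $t_0$ rather than just the candidate optimum. I would be careful to confirm that $\varphi(0)=T$ is genuinely forced (not merely possible) under the hypotheses of~(3), which follows from $\Lambda$ being continuous, strictly positive on $[0,T)$, and $\Lambda(T)=0$.
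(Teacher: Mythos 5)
Your proposal is correct and follows essentially the same route as the paper's proof: reduce via \eqref{form:J} to maximizing $v^{(T_0)}(t)$, use the strict monotonicity from Lemmas \ref{lemma4.2} and \ref{lemma4.3}, and translate existence of a maximal $T_0\in\mathcal{T}$ into existence of a minimum of $\mathcal{A}$ via Lemma \ref{lemma4.4}. Your extra care in assertion (3) — spelling out why $\Lambda(t_0)=0$ forces $v^{(T_0)}(t_0)=0$ for \emph{every} $T_0\in\mathcal{T}$, and covering the $H(\infty)\le\Lambda(0)$ subcase of the ``only if'' direction — is a minor but welcome tightening of details the paper leaves implicit.
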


In analogy with Proposition \ref{proposition3.10}, we now analyze the effect of future market opportunities on the risk exposure vector for the case $\mEt[\boldsymbol{R}]=\infty$. Given the multiplicity of equilibria in this setting, we focus on the optimal equilibrium characterized in Theorem \ref{thm:optimal}. The following proposition presents the corresponding results. We omit the proof as it follows a similar line of reasoning to that of Proposition \ref{proposition3.10}.

\begin{proposition}
    Let $\boldsymbol{R}$ be a non-negative random variable with $\mEt[\boldsymbol{R}] = \infty$. Fix a vector $\lambda_0 \in \mathbb{R}^d$. Consider a sequence of market settings indexed by $n \in \mathbb{N}$, defined by the pairs $\{(\lambda_n(\cdot), T_n)\}_{n \ge 1}$, where $T_n > 0$ and $\lambda_n: [0, T_n] \to \mathbb{R}^d$ satisfy $\lambda_n(0) = \lambda_0$. Let $\Lambda_n \triangleq \int_{0}^{T_n} |\lambda_n(s)|^2 ds$ and assume  $H(\infty) > \Lambda_n$ for all $n$. Let $a_n(0)$ denote the risk exposure vector at time $0$ corresponding to the optimal equilibrium of the $n$-th market setting.
\begin{enumerate}[label=(\arabic*), font=\upshape]
    \item If the sequence $\{\Lambda_n\}_{n \ge 1}$ is strictly increasing and $\Lambda_n \nearrow H(\infty)$, then
    $\lim\limits_{n \to \infty} a_n(0) = \frac{\lambda_0}{r_0}$,
    where $r_0 \triangleq \operatorname{essinf} \boldsymbol{R}$. Moreover, 
    $|a_n(0)| \nearrow \frac{|\lambda_0|}{r_0}$ as $n\to \infty$.
    
    \item If the sequence $\{\Lambda_n\}_{n \ge 1}$ is strictly decreasing and $\Lambda_n \searrow 0$, then
 $\lim\limits_{n \to \infty} a_n(0) = \mathbf{0}$.
    Moreover,      $|a_n(0)| \searrow 0$ as $n\to \infty$.
\end{enumerate}
Here we adopt the convention $\frac{1}{0}=\infty$.
\end{proposition}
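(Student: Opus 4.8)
The plan is to mimic the structure of the proof of Proposition \ref{proposition3.10}, replacing its reliance on Theorem \ref{theorem3.3}(1) with Theorem \ref{thm:optimal}, which identifies the optimal equilibrium in the regime $\tilde{\mathbb{E}}[\boldsymbol{R}]=\infty$. First I would note that, by Theorem \ref{thm:optimal}, under the standing assumption $H(\infty)>\Lambda_n$ the optimal equilibrium of the $n$-th market setting is $\pi^{(\varphi_n(0))}$ with $\varphi_n(0)=T_n$; that is, $T_0=T_n$ is the maximal element of $\mathcal{T}_n$, and hence $v^{(T_n)}_n(0)=H^{-1}(\Lambda_n)$. By the structural formula \eqref{eq:equilibrium:g}, the risk exposure vector at time $0$ of this optimal equilibrium is $a_n(0)=h\big(H^{-1}(\Lambda_n)\big)\lambda_0$, using $\lambda_n(0)=\lambda_0$. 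So the whole problem reduces to understanding $h(H^{-1}(\Lambda_n))$ as $\Lambda_n$ varies, which is exactly the one-dimensional analysis carried out in the proof of Proposition \ref{proposition3.10}.

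Next I would record the two elementary facts about $h$ and $H$ in the regime $\tilde{\mathbb{E}}[\boldsymbol{R}]=\infty$: (i) $H$ is continuous and strictly increasing on $[0,\infty)$ with $H(0)=0$ (from Theorem \ref{theorem3.3}(2)), so $H^{-1}$ is continuous, strictly increasing, and maps $[0,H(\infty))$ onto $[0,\infty)$; and (ii) $h$ is increasing (by Lemma \ref{h:mono}) with $h(0)=1/\tilde{\mathbb{E}}[\boldsymbol{R}]=0$, and the behaviour of $h$ at infinity is governed by $\operatorname{essinf}\boldsymbol{R}$: a dominated/monotone convergence argument (identical to the one in the proof of Proposition \ref{proposition3.10}) gives $\lim_{x\to\infty}h(x)=1/r_0$ with the convention $1/0=\infty$. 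For part (1), since $\{\Lambda_n\}$ is strictly increasing with $\Lambda_n\nearrow H(\infty)$, strict monotonicity of $H^{-1}$ gives $H^{-1}(\Lambda_n)\nearrow\infty$, and then monotonicity of $h$ gives $h(H^{-1}(\Lambda_n))\nearrow 1/r_0$; multiplying by the fixed vector $\lambda_0$ yields $a_n(0)=h(H^{-1}(\Lambda_n))\lambda_0\to\lambda_0/r_0$ and $|a_n(0)|\nearrow|\lambda_0|/r_0$. For part (2), since $\Lambda_n\searrow 0$, continuity of $H^{-1}$ at $0$ together with $H^{-1}(0)=0$ gives $H^{-1}(\Lambda_n)\searrow 0$, and continuity of $h$ at $0$ with $h(0)=0$ gives $h(H^{-1}(\Lambda_n))\searrow 0$; hence $a_n(0)\to\mathbf{0}$ and $|a_n(0)|\searrow 0$. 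The monotonicity of the modulus in each case is immediate once the monotonicity of the scalar $h(H^{-1}(\Lambda_n))$ is established, since $|a_n(0)|=h(H^{-1}(\Lambda_n))\,|\lambda_0|$.

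I do not expect any serious obstacle here, precisely because the hard analytic content — the limit $h(x)\to 1/r_0$ as $x\to\infty$ and the continuity/monotonicity bookkeeping — is already contained in the proof of Proposition \ref{proposition3.10} in Appendix \ref{proof of proposition3.10}, which the paper explicitly says we may reuse. The only genuinely new point is the identification of the optimal equilibrium's parameter as $T_0=T_n$ (equivalently $\varphi_n(0)=T_n$), which follows from Theorem \ref{thm:optimal}(1) once one checks that the condition $H(\infty)>\Lambda_n$ is exactly the stated existence condition $H(\infty)>\Lambda_n(0)=\Lambda_n$ for that market setting; this also guarantees $\Lambda_n\in[0,H(\infty))$ so that $H^{-1}(\Lambda_n)$ is well defined and finite. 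The mildest care is needed with the convention $1/0=\infty$ when $r_0=0$: in that case the statement $|a_n(0)|\nearrow|\lambda_0|/r_0=\infty$ should be read as $|a_n(0)|\to\infty$ monotonically, which still follows from $h(H^{-1}(\Lambda_n))\nearrow\infty$. Accordingly, I would simply write: "The proof is a verbatim repetition of the proof of Proposition \ref{proposition3.10}, with Theorem \ref{thm:optimal}(1) in place of Theorem \ref{theorem3.3}(1) to identify the optimal equilibrium as $\pi^{(T_n)}$ and hence $a_n(0)=h(H^{-1}(\Lambda_n))\lambda_0$; the conclusions then follow from the continuity and monotonicity of $h$ and $H^{-1}$ together with $h(0)=0$ and $\lim_{x\to\infty}h(x)=1/r_0$." — which is essentially the "we omit the proof" sentence already present in the excerpt, fleshed out to the level of a short remark.
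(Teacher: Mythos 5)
Your proposal is correct and follows exactly the route the paper intends (the paper omits this proof, pointing back to Proposition \ref{proposition3.10}): identify the optimal equilibrium via Theorem \ref{thm:optimal}, reduce to $a_n(0)=h\big(H^{-1}(\Lambda_n)\big)\lambda_0$, and invoke the monotonicity and asymptotics of $h$ and $H^{-1}$ together with $h(0)=1/\tilde{\mathbb{E}}[\boldsymbol{R}]=0$. One tiny correction: $\varphi_n(0)$ need not equal $T_n$ (it is the \emph{first} zero of $\Lambda_n(\cdot)$, which can be strictly less than $T_n$ if $\lambda_n$ vanishes near the horizon), but since $\Lambda_n(\varphi_n(0))=0$ you still get $v^{(\varphi_n(0))}(0)=H^{-1}(\Lambda_n)$, so nothing downstream changes.
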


\section{Comparative Statics}\label{Comparative Statics of Equilibrium Strategies}

In the previous sections, we have derived, in closed form, the equilibrium strategies associated with a given distribution of the RRA $\boldsymbol{R}$ and discussed how to select the optimal equilibrium. We now turn to comparative statics and consider a natural question: how does a shift in the distribution of the RRA affect the resulting equilibrium investment behavior?  

Throughout this section, we write $H_i$ for $H_{{\boldsymbol{R}_i}}$ and $a_i$ for $a_{{\boldsymbol{R}_i}}$, $i=1,2$.

\subsection{Two-Point Distributed RRA and Single-Crossing}
We begin by considering the first-order stochastic dominance, which should be a natural way to rank the risk aversion distributions.

\begin{definition}
Let ${\boldsymbol{R}_1}$ and ${\boldsymbol{R}_2}$ be two random variables. We say that ${\boldsymbol{R}_1}$ dominates ${\boldsymbol{R}_2}$ in the sense of first-order stochastic dominance, denoted ${\boldsymbol{R}_1} \succeq_{1} {\boldsymbol{R}_2}$, if
$\tilde{\mathbb{P}}({\boldsymbol{R}_1} \ge x) \ge \tilde{\mathbb{P}}({\boldsymbol{R}_2} \ge x)$ for all $x \in \mathbb{R}$.
\end{definition}
Intuitively, one might expect that if  ${\boldsymbol{R}_1} \succeq_{1} {\boldsymbol{R}_2}$, then the corresponding risk exposure magnitudes $|a_i(\cdot)|$ ($i=1,2$) should satisfy $|a_1(t)| \le |a_2(t)|$ for any $t\in[0,T)$. Surprisingly, this intuition turns out to be false, as  the next example with two-point distributed RRA shows.

 \begin{example} \label{exm:fod}
 Assume that there is only one stock, whose market price of risk is constant, $\lambda(\cdot)\equiv\lambda = 0.4$. The time horizon $T = 20$. 
There are two investors with distributions of  RRA:
\begin{itemize}
\item Investor 1: $\tilde{\mathbb{P}}(\boldsymbol{R}_1 = 1)=0.9$ and $\tilde{\mathbb{P}}(\boldsymbol{R}_1 = 3) = 0.1$;
    
\item Investor 2: $\tilde{\mathbb{P}}(\boldsymbol{R}_2 = 1)=0.9$ and $\tilde{\mathbb{P}}(\boldsymbol{R}_2 = 2) = 0.1$. 
\end{itemize} 
Obviously, $\boldsymbol{R}_1\succeq_{1} \boldsymbol{R}_2$.
Figure \ref{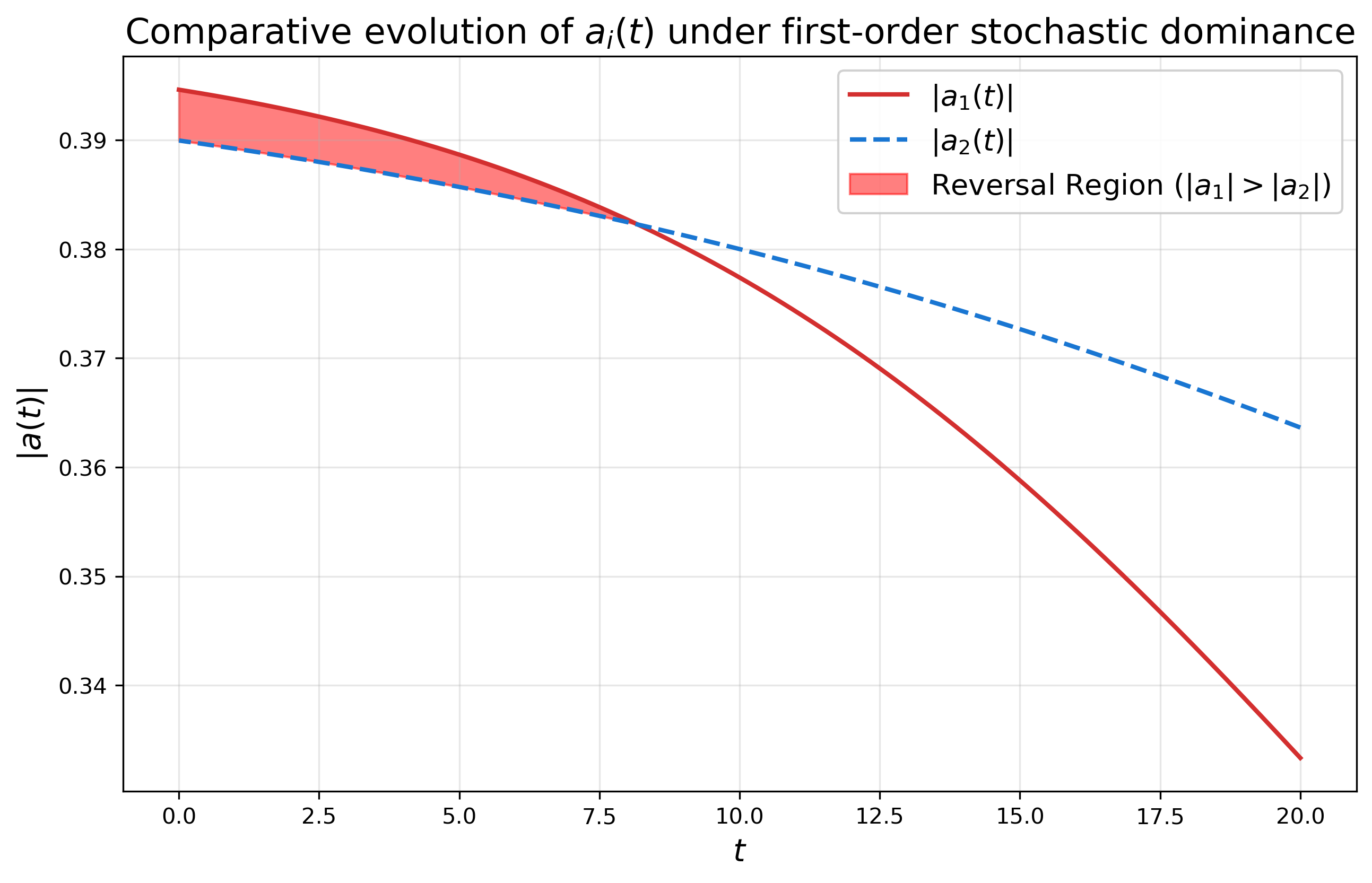} displays the comparative evolution of $|a_i(\cdot)|$ ($i=1,2$).\footnote{Because $\barpi = (\sigma^\top)^{-1} a$, in the one-dimensional case comparing $|a_i(\cdot)|$ is equivalent to comparing $|\pi_i(\cdot)|$.  More generally, $|a|$ represents the local volatility of the investor's wealth process and thus captures the level of risk borne by the investor. Therefore, in what follows, we focus conceptually on comparing the risk exposure magnitude (local volatility) instead of the trading strategy.} 
The red-shaded Reversal Region highlights that Investor 1 adopts a strictly riskier position than Investor 2 when the time to maturity is sufficiently long. This counterexample shows that a ``larger” RRA in the sense of first-order stochastic dominance does not necessarily yield a less risky investment.
\end{example}

\begin{figure}[ht]
    \centering
    \includegraphics[width=0.7\textwidth]{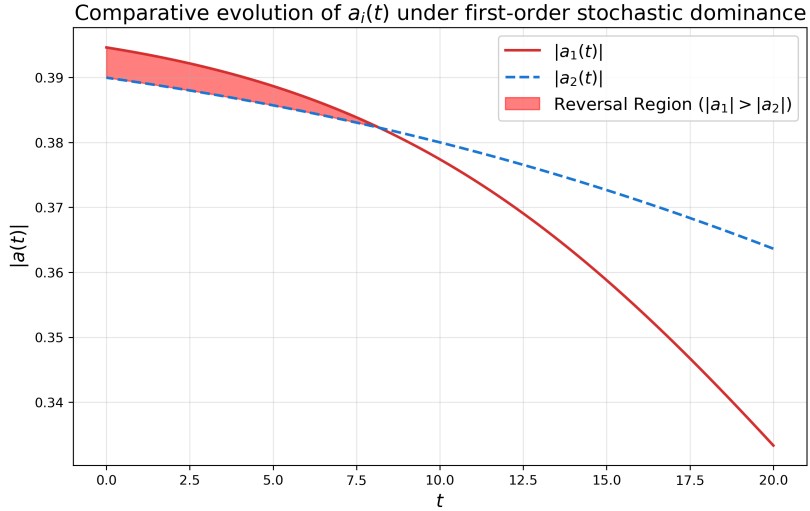}
    \caption{\small Comparative evolution of $|a_i(\cdot)|$ ($i=1,2$) under the first-order stochastic dominance. The red solid line corresponds to Investor 1  with  $\tilde{\mathbb{P}}(\boldsymbol{R}_1=1)=0.9$ and $\tilde{\mathbb{P}}(\boldsymbol{R}_1=3)=0.1$; the blue dashed line corresponds to Investor 2  with  $\tilde{\mathbb{P}}(\boldsymbol{R}_2=1)=0.9$ and $\tilde{\mathbb{P}}(\boldsymbol{R}_2=2)=0.1$. The red shaded region indicates where $|a_1(\cdot)| > |a_2(\cdot)|$. Parameters: $\lambda = 0.4$,  $T = 20$.}
    \label{counterexample_plot.png}
\end{figure}

The appearance of Example \ref{exm:fod} might suggest that establishing a general comparative result under the first-order stochastic dominance is fundamentally obstructed. In fact, the observed reversal is not merely an artifact of a particular parameter choice. 

In the following proposition, we focus on the case where the risk aversions $\boldsymbol{R}_i$ ($i=1,2$) satisfy $\essinf \boldsymbol{R}_1 =\essinf \boldsymbol{R}_2= r_0 > 0$.
In this case, $H_i(\infty)=\infty$ ($i=1,2$) (see Lemma \ref{lemmaA.3}). Consequently, according to Theorem \ref{theorem3.3}(1), for any deterministic and right-continuous market price of risk $\lambda(\cdot)$ in $L^2([0,T))$, the equilibrium strategy exists and is unique.

The next proposition shows that, under suitable conditions on $\boldsymbol{R}_i$, in particular when 
$\boldsymbol{R}_1$ and $\boldsymbol{R}_2$ share the same ``smallest'' value but $\boldsymbol{R}_1$ has a 
larger ``second-smallest'' value than $\boldsymbol{R}_2$, there exists $\overline{\Lambda}>0$ such that, for any 
$\lambda(\cdot)$ with $\lambda(0)\neq 0$  and 
$\int_0^T |\lambda(t)|^2 \, \md t > \overline{\Lambda}$, the corresponding risk exposure magnitudes satisfy 
$|a_1(0)|>|a_2(0)|$. Moreover, if $\mEt[\boldsymbol{R}_1] > \mEt[\boldsymbol{R}_2]$ and $\lambda(T-)\neq 0$, then  $|a_1(T-)| < |a_2(T-)|$. 

\begin{proposition}\label{proposition5.2}
Let $\boldsymbol{R}_1$ and $\boldsymbol{R}_2$ be two non-negative random variables with $\mEt[\boldsymbol{R}_i] < \infty$ $(i=1, 2)$ satisfying the following two conditions.
\begin{enumerate}[label=(\arabic*), font=\upshape]
    \item[(c1)] Let $r_i \triangleq\essinf \boldsymbol{R}_i$, $i=1,2$. Assume that $r_0\triangleq r_1=r_2>0$ and $p_i \triangleq \tilde{\mathbb{P}}(\boldsymbol{R}_i = r_0) \in (0, 1)$ for $i = 1, 2$.
    \item[(c2)] Let $\tilde{r}_i \triangleq \inf\{r \in \supp(\boldsymbol{R}_i) : r > r_0\}$ and $\delta_i \triangleq \tilde{r}_i - r_0$, $i=1,2$. Assume that $\delta_1 > \delta_2>0$.
\end{enumerate}
Then, there exists $\overline{\Lambda}>0$ such that, for any 
$\lambda(\cdot)$ with $\lambda(0)\neq 0$ and 
$\int_0^T |\lambda(t)|^2 \, \md t > \overline{\Lambda}$, the corresponding risk exposure magnitudes satisfy 
$|a_1(0)|>|a_2(0)|$. In addition, if $\mEt[\boldsymbol{R}_1] > \mEt[\boldsymbol{R}_2]$ and $\lambda(T-)\neq 0$, then  $|a_1(T-)| < |a_2(T-)|$.
\begin{proof}
See Appendix \ref{proof of proposition5.2}.
\end{proof}
\end{proposition}

Next, under the conditions of Proposition~\ref{proposition5.2}, we further investigate the crossing properties of the risk exposure magnitudes $|a_i(\cdot)|$ ($i=1,2$) corresponding to the two distributions. This includes the single-crossing property (Proposition~\ref{proposition5.3}) as well as the monotonicity of the crossing time w.r.t. the distribution parameters (Propositions~\ref{proposition5.5} and \ref{proposition5.6}). To this end, we focus on the two-point distributed RRA, which, as investigated by \cite{Desmettre2023}, is  the closest RRA model to the standard expected utility framework. Specifically, we assume that the risk aversion $\boldsymbol{R}_i$ ($i=1,2$) satisfies
\begin{align} \label{eq:star}
\begin{split}
&\tilde{\mathbb{P}}(\boldsymbol{R}_i=r_0)=p_i=1- \tilde{\mathbb{P}}(\boldsymbol{R}_i=r_0+\delta_i) \quad\text{with } r_0 > 0, \, p_i \in (0,1), \delta_1 > \delta_2 > 0.
\end{split}
\end{align}
Below, we begin by establishing the single-crossing property of the risk exposure magnitudes; that is, under certain conditions, there exists a unique time $t^*$ such that $|a_1(t^*)| = |a_2(t^*)|$.
\begin{proposition}\label{proposition5.3}
Suppose that the risk aversions $\boldsymbol{R}_1$ and $\boldsymbol{R}_2$ satisfy \eqref{eq:star} and $\mEt[\boldsymbol{R}_1] > \mEt[\boldsymbol{R}_2]$.\footnote{The condition $\mEt[\boldsymbol{R}_1] > \mEt[\boldsymbol{R}_2]$ is equivalent to $(1-p_1)\delta_1 > (1-p_2)\delta_2$.}
 Let $\widehat{\Lambda} \in (0, \infty)$ be the unique solution to the equation
\begin{align*}
    h_1\left(H_1^{-1}(\widehat{\Lambda})\right) = h_2\left(H_2^{-1}(\widehat{\Lambda})\right).
\end{align*}
Then, for any $\lambda(\cdot)$ satisfying $\lambda(t)\ne0$ for all $t\in[0,T)$ and $\lambda(T-)\ne0$, we have the following assertions:
\begin{enumerate}[label=(\arabic*), font=\upshape]
\item If $\displaystyle\int_0^T|\lambda(t)|^2\md t < \widehat{\Lambda}$, then $|a_1(t)| < |a_2(t)|$ for all $t\in[0,T)$.
    
\item If $\displaystyle\int_0^T|\lambda(t)|^2\md t > \widehat{\Lambda}$, then 
    there exists a unique $t^* \in (0,T)$ such that 
    $|a_1(t)| < |a_2(t)|$ for all $t\in(t^*,T)$ and $|a_1(t)| > |a_2(t)|$ for all $t\in[0,t^*)$. 

\item If $\displaystyle\int_0^T |\lambda(t)|^2 \md t = \widehat{\Lambda}$, then $|a_1(0)| = |a_2(0)|$ and $|a_1(t)| < |a_2(t)|$ for all $t\in (0,T)$.
\end{enumerate}
\begin{proof}
See Appendix \ref{sec:proof:croos}.
\end{proof}
\end{proposition}

\begin{remark}
    It is important to note that the conditions in Propositions \ref{proposition5.2} and \ref{proposition5.3} are compatible with first-order stochastic dominance. One can construct risk aversion distributions such that $\boldsymbol{R}_1 \succeq_{1} \boldsymbol{R}_2$ holds while the conditions in Propositions \ref{proposition5.2} and \ref{proposition5.3} are simultaneously satisfied: letting $p_1=p_2$ in \eqref{eq:star}. For instance, the parameters in Example \ref{exm:fod} satisfy these conditions with $r_0 = 1$, $\delta_1 = 2$, $\delta_2 = 1$, $\mEt[\boldsymbol{R}_1] =1.2>1.1= \mEt[\boldsymbol{R}_2]$ and $\lambda\equiv0.4\neq 0$.
    In particular, Figure \ref{counterexample_plot.png} illustrates both the reversal in risk exposure magnitude and the single-crossing property.
\end{remark}

Having established the single-crossing property, we next investigate the monotonicity of $t^*$ w.r.t. $p \triangleq p_1 = p_2$ under the conditions of Proposition~\ref{proposition5.3}.

\begin{proposition}\label{proposition5.5}
Let $T > 0$ be fixed. 
Assume that the risk aversions $\boldsymbol{R}_1$ and $\boldsymbol{R}_2$ satisfy \eqref{eq:star} with $p_1 = p_2 = p \in (0,1)$ and that $\lambda(t)\ne0$ for all $t\in[0,T)$ with $\lambda(T-)\ne0$. Let $a_{i,p}$ denote the risk exposure vector corresponding to $\boldsymbol{R}_i$ with $p_i=p$ for $i=1,2$. Let $t^*(p)\in(0,T)$ be the crossing time, where $|a_{1,p}(t^*(p))| = |a_{2,p}(t^*(p))|$. Then the set
\begin{align*}
    I \triangleq \left\{ p \in (0,1) : t^*(p) \in (0,T) \right\}
\end{align*}
is open and the crossing time $t^*(\cdot)$ is continuously differentiable in $I$.\footnote{{The set $I$ is non-empty if $\Lambda(0)$ is sufficiently large. Indeed, for any fixed $p \in (0,1)$, Proposition \ref{proposition5.3} ensures that the crossing time $t^*(p)$ exists in $(0,T)$ provided that $\Lambda(0)$ exceeds the threshold $\widehat{\Lambda}$ defined therein, which depends on $p$.}} Moreover, its derivative satisfies 
\begin{align*}
    \frac{\md t^*}{\md p}(p) > 0 \quad \text{for all } p \in I.
\end{align*}
\begin{proof}
See Appendix \ref{append:t^*P}.
\end{proof}
\end{proposition}
 The condition $\dfrac{\md t^*}{\md p} > 0$ indicates that  the crossing time to maturity is strictly decreasing with respect to the  probability parameter $p$. This confirms that a higher common probability of the smallest risk aversion value  brings the crossing closer to maturity.

Building on the analysis in Proposition \ref{proposition5.5}, we now relax the assumption $p_1=p_2$ and investigate the individual effects of the probability parameters $p_i$ ($i=1,2$). The following proposition establishes the comparative statics of the crossing time w.r.t. each investor's probability parameter.
\begin{proposition}\label{proposition5.6}
Let $T > 0$ be fixed. 
Assume that the risk aversions $\boldsymbol{R}_1$ and $\boldsymbol{R}_2$ satisfy \eqref{eq:star} with $\mEt[\boldsymbol{R}_1] > \mEt[\boldsymbol{R}_2]$ and that $\lambda(t)\ne0$ for all $t\in[0,T)$ with $\lambda(T-)\ne0$.
 Let $a_{i,p_i}$ denote the risk exposure vector corresponding to $\boldsymbol{R}_i$ with $\tilde{\mathbb{P}}(\boldsymbol{R}_i=r_0)=p_i$ for $i=1,2$.
Let $t^*(p_1,p_2)\in(0,T)$ denote the crossing time, where $|a_{1,p_1}(t^*(p_1,p_2))| = |a_{2,p_2}(t^*(p_1,p_2))|$. Let
\begin{align*}
    O \triangleq \left\{ (p_1, p_2) \in (0,1)^2 : (1-p_1)\delta_1 > (1-p_2)\delta_2 \text{ and } t^*(p_1, p_2) \in (0,T) \right\}.
\end{align*}
Then, the set $O$ is open\footnote{{The set $O$ is nonempty provided that $\Lambda(0)$ is sufficiently large. The argument is analogous to that in Proposition~\ref{proposition5.5}.}} and the function $t^*(\cdot, \cdot)$ is continuously differentiable in $O$. Moreover, for any $(p_1, p_2) \in O$, the partial derivatives satisfy
\begin{align*}
    \frac{\partial t^*}{\partial p_1}(p_1, p_2) > 0 \quad \text{and} \quad \frac{\partial t^*}{\partial p_2}(p_1, p_2) < 0.
\end{align*}

\begin{proof}
See Appendix \ref{append:t_1*p}.
\end{proof}
\end{proposition}
 Similar to Proposition \ref{proposition5.5}, the differentiability of $t^*$ established in Proposition \ref{proposition5.6} implies that the crossing time varies smoothly with respect to the probability parameters. Specifically, $t^*$ is strictly increasing in $p_1$ and strictly decreasing in $p_2$.  This indicates that a higher probability of the smallest risk aversion for the first investor brings the crossing closer to maturity, whereas for the second investor, it pushes the crossing farther away from maturity.

\subsection{Reverse Hazard Rate Order}

While a higher RRA does not guarantee a less risky investment under first-order stochastic dominance, we demonstrate that it does guarantee such a reduction under the stronger reverse hazard rate order.

We first recall the definition of the reverse hazard rate order (see \cite{caperaa1988tail} and \cite{SS2007}).

\begin{definition}
Let ${\boldsymbol{R}_1}$ and ${\boldsymbol{R}_2}$ be two random variables with cumulative distribution functions $F_{{\boldsymbol{R}_1}}$ and $F_{{\boldsymbol{R}_2}}$, respectively. We say that ${\boldsymbol{R}_1}$ dominates ${\boldsymbol{R}_2}$ in the reverse hazard rate order, denoted ${\boldsymbol{R}_1} \succeq_{rh} {\boldsymbol{R}_2}$, if the ratio $F_{{\boldsymbol{R}_2}}/F_{{\boldsymbol{R}_1}}$
is decreasing\footnote{ Herein, ``increasing" means
``non-decreasing" and ``decreasing" means ``non-increasing."} in  $\{\gamma : F_{{\boldsymbol{R}_1}}(\gamma) > 0\}\cup \{\gamma : F_{{\boldsymbol{R}_2}}(\gamma) > 0\}$. Here, we use the convention $\frac{p}{0}=\infty$ for $p>0$.
\end{definition}

The following proposition establishes a comparison of the risk exposure magnitudes under the reverse hazard rate order. 

\begin{proposition}\label{proposition5.8}
Let ${\boldsymbol{R}_1}$ and ${\boldsymbol{R}_2}$ be two non-negative random variables with $\mEt[{\boldsymbol{R}_i}]<\infty$ ($i=1,2$) and $F_{\boldsymbol{R}_1}\neq F_{\boldsymbol{R}_2}$.
If ${\boldsymbol{R}_1}\succeq_{rh}{\boldsymbol{R}_2}$ and $H_2(\infty) > \Lambda(0)$, then $|a_{1}(t)|\le|a_{2}(t)|$ for any $t\in[0,T)$, where the strict inequality holds if $\lambda(t)\ne0$.
\begin{proof}
See Appendix \ref{Append:rh}. 
\end{proof}
\end{proposition}

We now consider the case where the expectation of risk aversion is infinite. Assume that the optimal equilibria exist for both $\boldsymbol{R}_1$ and $\boldsymbol{R}_2$.  Note that $\displaystyle \varphi(0) = \min\{t \in [0,T] :\,  \Lambda(t)=0\}$ is determined solely by the market parameters and is therefore identical for both investors. Consequently, the comparative result can be established by examining the magnitudes of $a_i(t)$ on the intervals $[0, \varphi(0))$ and $[\varphi(0), T]$ separately.
Specifically, on the interval $[0, \varphi(0))$, the analysis parallels the finite expectation case derived in Proposition \ref{proposition5.8}, and thus the order of the risk exposure magnitude is preserved.
On the interval $[\varphi(0), T)$, the definition of $\varphi(0)$ implies that $\lambda\equiv\textbf{0}$. As a result, the corresponding risk exposure magnitude also satisfies the desired order.
Therefore, we obtain the following corollary.

\begin{corollary}
Let \(\boldsymbol{R}_1\) and \(\boldsymbol{R}_2\) be two non-negative random variables with \(\mEt[\boldsymbol{R}_i] = \infty\) (\(i=1,2\)) and $F_{\boldsymbol{R}_1}\neq F_{\boldsymbol{R}_2}$. Assume that \(\boldsymbol{R}_1 \succeq_{rh} \boldsymbol{R}_2\) and  the optimal equilibria exist for both investors (i.e., \(H_1(\infty) \ge H_2(\infty)> \Lambda(0)\)). Let $a_{i}(\cdot)$ be the risk exposure vector corresponding to the optimal equilibrium strategies. Then 
\(|a_1(t)| \leq |a_2(t)|\) for any $t\in[0,T)$,
where the strict inequality holds  if $\lambda(t) \neq 0$.
\end{corollary}

\subsection{Aggregation of Risk Aversions}
The preceding results focus on comparisons between investments for distinct risk-aversion distributions ordered by stochastic dominance relations. We next study the effect of preference aggregation on the risk exposure magnitude. 
The following proposition shows that  a convex combination of i.i.d. risk aversions leads to a reduction in the risk exposure magnitude.

\begin{proposition}\label{proposition5.10}
Let $\boldsymbol{R}_1, \cdots, \boldsymbol{R}_n$ be i.i.d. non-negative random variables with $\mEt\left[{\boldsymbol{R}_1}\right]<\infty$ and $\displaystyle \boldsymbol{R} \triangleq \sum_{i=1}^n w_i \boldsymbol{R}_i$ be a convex combination with $w_i\in(0,1)$ ($1\leq i\leq n$) and   $\displaystyle \sum_{i=1}^n w_i = 1$. Let $H_1=H_{\boldsymbol{R}_1}$  and $H=H_{\boldsymbol{R}}$. Assume that $H_1(\infty) > \Lambda(0)$.\footnote{Under this assumption, the equilibrium strategy associated with each $\boldsymbol{R}_i$ exists and is unique by Theorem~\ref{theorem3.3}(1). Moreover, the proof in Appendix~\ref{append:convex} shows that $H(\infty)>\Lambda(0)$ also holds.} Let $a(\cdot)$ and $a_1(\cdot)$ denote the risk exposure vector associated with the risk aversions $\boldsymbol{R}$ and $\boldsymbol{R}_1$, respectively. Then
\(|a(t)| \leq |a_1(t)|\)  for all $t \in [0, T)$, where the strict inequality holds 
if $\boldsymbol{R}_1$ is non-degenerate and $\lambda(t)\neq 0$.
\end{proposition}
\begin{proof}
See Appendix \ref{append:convex}.
\end{proof}

It is important to note that the inequality established above relies critically on the i.i.d. assumption. The following example demonstrates that, if independence is retained but the identical distribution assumption is relaxed, the conclusion of Proposition \ref{proposition5.10} may no longer hold.
\begin{example}
Assume that there is only one stock, whose market price of risk is constant, $\lambda(\cdot)\equiv\lambda = 0.5$. The time horizon $T = 50$. 
The risk-aversion distributions of the two investors are:
\begin{itemize}
\item Investor 1: 
$\tilde{\mathbb{P}}(\boldsymbol{R}_1 = 0.1) = 0.2$ and $\tilde{\mathbb{P}}(\boldsymbol{R}_1 = 8) = 0.8$;

\item Investor 2: $\tilde{\mathbb{P}}(\boldsymbol{R}_2 = 1.5) = 1$.
\end{itemize}
Obviously, $\boldsymbol{R}_1$ and $\boldsymbol{R}_2$ are independent. Consider an aggregated investor whose risk aversion is given by $\boldsymbol{R} = 0.5\boldsymbol{R}_1 + 0.5\boldsymbol{R}_2$. 
Figure 2 illustrates the comparative evolution of the risk exposure magnitudes $|a(\cdot)|$,  $|a_1(\cdot)|$, and $|a_2(\cdot)|$. The red-shaded Reversal Region highlights that the investor with the aggregated risk aversion $\boldsymbol{R}$ adopts a strictly more aggressive position than both Investor 1 and Investor 2 when the time to maturity is sufficiently long. 

\begin{figure}[ht]
    \centering
    \includegraphics[width=0.7\textwidth]{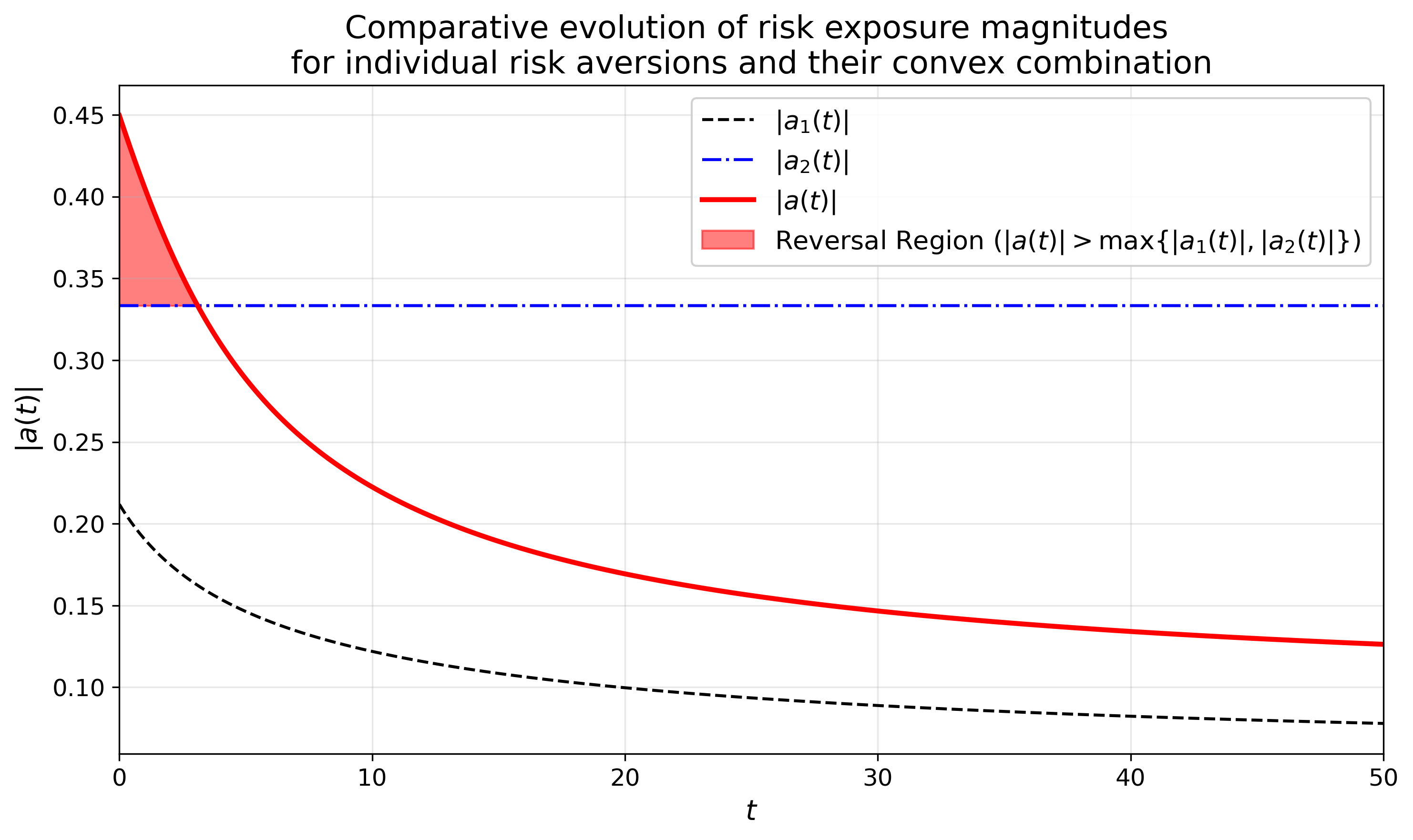}
    \caption{\small Comparative evolution of risk exposure magnitudes for individual risk aversions and their convex combination. 
The black dashed line corresponds to Investor 1 with $\tilde{\mathbb{P}}(\boldsymbol{R}_1 = 0.1) = 0.2$ and $\tilde{\mathbb{P}}(\boldsymbol{R}_1 = 8) = 0.8$; 
the blue dash-dotted line corresponds to Investor 2 with degenerate distribution $\tilde{\mathbb{P}}(\boldsymbol{R}_2 = 1.5) = 1$; 
the red solid line corresponds to the aggregated investor with risk aversion $\boldsymbol{R} = 0.5\boldsymbol{R}_1 + 0.5\boldsymbol{R}_2$. The red-shaded Reversal Region indicates where $|a(t)|>\max\{|a_1(t)|,|a_2(t)|\}$. Parameters: $\lambda = 0.5$,  $T = 50$.}
    \label{counterexample_plot_2.png}
\end{figure}

\end{example}

Proposition~\ref{proposition5.10} establishes a comparison result for convex combinations of i.i.d. risk aversions with a fixed population size, but does not address the asymptotic behavior as the number of samples increases. We therefore focus on the special case of the sample mean risk aversion. The following proposition examines the limit of the risk exposure vector as the sample size 
$n$ tends to infinity and establishes monotonic convergence by exploiting the structure of the sample mean.

\begin{proposition}\label{prop:lln}
Suppose that $\{\boldsymbol{R}_i\}_{i=1}^\infty$ is a sequence of i.i.d. non-negative random variables with $\mEt\left[{\boldsymbol{R}_1}\right]=\mu \in (0,\infty)$. Let $\displaystyle \boldsymbol{Y}_n \triangleq \frac{1}{n} \sum_{i=1}^n \boldsymbol{R}_i$ denote the sample mean risk aversion. Let $H_1$ (resp. $\tilde H_n$) denote  the function defined in \eqref{eq:H} corresponding to $\boldsymbol{R}_1$(resp. $\boldsymbol{Y}_n$). Assume that $H_1(\infty) > \Lambda(0)$. Then the deterministic equilibrium strategies associated with each ${\boldsymbol{R}_i}$ and $\boldsymbol{Y}_n$ exist   and are unique. Let $\tilde a_n(\cdot)$ be the risk exposure vector associated to the risk aversion $\boldsymbol{Y}_n$. Then, as $n \to \infty$, we have
\begin{align}\label{conv:an}
 |\tilde a_n(t)| \searrow \frac{|\lambda(t)|}{\mu} \quad \text{for all }t \in [0, T).
\end{align}
\begin{proof}
See Appendix \ref{proof:prop:lln}.
\end{proof}
\end{proposition}

\section*{Acknowledgement}
   Zongxia Liang acknowledges the financial support from the National Natural Science Foundation of
China (grant 12271290).
Sheng Wang acknowledges Professor Ka Chun Cheung and the financial supports as a postdoctoral fellow from the Department of Statistics and Actuarial Science, School of Computing and Data Science, the University of Hong Kong. 
Jianming Xia acknowledges the financial support from the
National Natural Science Foundation of China (grants  12431017 and 12471447).

\begin{appendices}
\section{Properties of the function $h$}\label{appendix A}

In this appendix, we present some properties of the function $h$ defined in \eqref{eq:h}.

\begin{lemma} [Monotonicity]\label{h:mono}
The function $h$ is increasing in $[0,\infty)$. The monotonicity is strict unless $\boldsymbol{R}$ is a constant.
\end{lemma}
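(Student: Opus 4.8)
The plan is to prove that the reciprocal $g(x)\triangleq 1/h(x)=\mEt[\boldsymbol{R}\me^{-\frac12\boldsymbol{R}x}]\big/\mEt[\me^{-\frac12\boldsymbol{R}x}]$ is decreasing on $[0,\infty)$ — strictly so unless $\boldsymbol{R}$ is a.s.\ constant — which gives the lemma at once, since $g$ is finite and positive on $(0,\infty)$ (the denominator and numerator are respectively bounded by $1$ and finite because $r\mapsto r\me^{-\frac12 rx}$ is bounded). Conceptually, $g(x)$ is the mean of $\boldsymbol{R}$ under the Esscher-tilted law $\md\tilde{\mathbb{P}}_x\triangleq \me^{-\frac12\boldsymbol{R}x}\,\md\tilde{\mathbb{P}}\big/\mEt[\me^{-\frac12\boldsymbol{R}x}]$, which is equivalent to $\tilde{\mathbb{P}}$, so that $g(x)=\mEt_x[\boldsymbol{R}]$ decreases precisely because stronger downweighting of large $\boldsymbol{R}$ lowers the tilted mean.

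First I would work on $x\in(0,\infty)$, where for every $k\ge0$ the map $r\mapsto r^k\me^{-\frac12 rx}$ is bounded on $[0,\infty)$, uniformly for $x$ in compact subintervals of $(0,\infty)$, so all moments $\mEt[\boldsymbol{R}^k\me^{-\frac12\boldsymbol{R}x}]$ are finite and differentiation under the expectation is legitimate. Using $\frac{\md}{\md x}\me^{-\frac12\boldsymbol{R}x}=-\frac12\boldsymbol{R}\me^{-\frac12\boldsymbol{R}x}$ and the quotient rule,
\begin{align*}
g'(x)=-\frac12\left(\frac{\mEt[\boldsymbol{R}^2\me^{-\frac12\boldsymbol{R}x}]}{\mEt[\me^{-\frac12\boldsymbol{R}x}]}-\left(\frac{\mEt[\boldsymbol{R}\me^{-\frac12\boldsymbol{R}x}]}{\mEt[\me^{-\frac12\boldsymbol{R}x}]}\right)^{2}\right)=-\frac12\,\mathrm{Var}_{\tilde{\mathbb{P}}_x}(\boldsymbol{R})\le0,
\end{align*}
so $g$ is decreasing and $h=1/g$ increasing on $(0,\infty)$. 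To include $x=0$: if $\mEt[\boldsymbol{R}]=\infty$ then $h(0)=0<h(x)$ for all $x>0$; if $\mEt[\boldsymbol{R}]<\infty$, monotone convergence as $x\downarrow0$ gives $\mEt[\me^{-\frac12\boldsymbol{R}x}]\uparrow1$ and $\mEt[\boldsymbol{R}\me^{-\frac12\boldsymbol{R}x}]\uparrow\mEt[\boldsymbol{R}]$, hence $h(x)\to 1/\mEt[\boldsymbol{R}]=h(0)$, and the monotonicity extends to $[0,\infty)$ by continuity.

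For strictness, observe that $\mathrm{Var}_{\tilde{\mathbb{P}}_x}(\boldsymbol{R})=0$ forces $\boldsymbol{R}$ to be $\tilde{\mathbb{P}}_x$-a.s.\ constant, hence $\tilde{\mathbb{P}}$-a.s.\ constant by equivalence of the measures; thus if $\boldsymbol{R}$ is non-constant then $g'(x)<0$ for every $x>0$, $h$ is strictly increasing on $(0,\infty)$, and $h(0)=\lim_{x\downarrow0}h(x)\le h(x/2)<h(x)$. The only delicate point is the interchange of differentiation and expectation together with the boundary behaviour at $x=0$; both can be bypassed by a correlation-inequality argument, which I would mention as an alternative: taking an independent copy $\boldsymbol{R}'$ of $\boldsymbol{R}$, one checks for $0\le x_1<x_2$ that
\begin{align*}
\mEt[\boldsymbol{R}\me^{-\frac12\boldsymbol{R}x_1}]\,\mEt[\me^{-\frac12\boldsymbol{R}x_2}]-\mEt[\boldsymbol{R}\me^{-\frac12\boldsymbol{R}x_2}]\,\mEt[\me^{-\frac12\boldsymbol{R}x_1}]=\frac12\,\mEt\!\left[(\boldsymbol{R}-\boldsymbol{R}')\Big(\me^{-\frac12\boldsymbol{R}x_1-\frac12\boldsymbol{R}'x_2}-\me^{-\frac12\boldsymbol{R}x_2-\frac12\boldsymbol{R}'x_1}\Big)\right],
\end{align*}
and since the exponent difference of the two terms equals $\frac12(x_2-x_1)(\boldsymbol{R}-\boldsymbol{R}')$, the integrand is a product of two quantities of the same sign, hence nonnegative and a.s.\ zero only when $\boldsymbol{R}\equiv\boldsymbol{R}'$, i.e.\ $\boldsymbol{R}$ constant; dividing through by the positive denominators yields $h(x_1)\le h(x_2)$, strictly unless $\boldsymbol{R}$ is constant. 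I expect the boundary/measurability bookkeeping (and, in the differentiation route, the uniform domination) to be the only real friction; the monotonicity itself is essentially the statement that exponential tilting reduces the conditional mean.
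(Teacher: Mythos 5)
Your main argument is essentially the paper's: the paper differentiates $h$ directly and applies the Cauchy--Schwarz (H\"older) inequality to the numerator $\mEt[\me^{-\frac12\boldsymbol{R}x}]\,\mEt[\boldsymbol{R}^2\me^{-\frac12\boldsymbol{R}x}]-\bigl(\mEt[\boldsymbol{R}\me^{-\frac12\boldsymbol{R}x}]\bigr)^2$, with the equality case giving strictness; your identification of this quantity as the variance of $\boldsymbol{R}$ under the Esscher-tilted law is the same inequality in different clothing, so the core of the two proofs coincides. Where you go beyond the paper is in the bookkeeping (justifying differentiation under the expectation on compacts of $(0,\infty)$, and handling $x=0$ separately in the cases $\mEt[\boldsymbol{R}]=\infty$ versus $\mEt[\boldsymbol{R}]<\infty$, which the paper glosses over) and in the alternative symmetrization identity with an independent copy $\boldsymbol{R}'$: that route is correct, yields the monotonicity directly between any two points $0\le x_1<x_2$ without any differentiation or domination argument, and gives the strictness characterization for free, so it is arguably the cleaner of the two; its only caveat is that when $x_1=0$ and $\mEt[\boldsymbol{R}]=\infty$ the identity involves an infinite term and one should instead note $h(0)=0<h(x_2)$ trivially, a point you already cover in the main argument.
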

\begin{proof}
By direct differentiation, we have
\begin{align*}
    h'(x)= \frac{1}{2} \frac{\mEt\left[e^{-\frac{1}{2}\boldsymbol{R}x}\right] \mEt\left[\boldsymbol{R}^2 e^{-\frac{1}{2}\boldsymbol{R}x}\right] - \left(\mEt\left[\boldsymbol{R} e^{-\frac{1}{2}\boldsymbol{R}x}\right]\right)^2}{\left(\mEt\left[\boldsymbol{R} e^{-\frac{1}{2}\boldsymbol{R}x}\right]\right)^2}\quad \text{for all }x>0.
\end{align*}
By the Cauchy-Schwartz inequality,
\begin{align}\label{ineq:holder}
    \left(\mEt\left[\boldsymbol{R} e^{-\frac{1}{2}\boldsymbol{R}x}\right]\right)^2
    \leq\mEt\left[e^{-\frac{1}{2}\boldsymbol{R}x}\right] \mEt\left[\boldsymbol{R}^2 e^{-\frac{1}{2}\boldsymbol{R}x}\right].
\end{align}
 Hence, $h'\ge0$ and $h$ is increasing. The equality in \eqref{ineq:holder} holds if and only if $\mEt\left[e^{-\frac{1}{2}\boldsymbol{R}x}\right]\boldsymbol{R}^2 e^{-\frac{1}{2}\boldsymbol{R}x}=e^{-\frac{1}{2}\boldsymbol{R}x}\mEt\left[\boldsymbol{R}^2 e^{-\frac{1}{2}\boldsymbol{R}x}\right]$, which is equivalent to $\boldsymbol{R}$ following a single-point distribution. Therefore, $h$ is strictly increasing unless $\boldsymbol{R}$ is a constant.
 \end{proof}

\begin{lemma}[Asymptotic Analysis]\label{lemmaA.3}
Let $\boldsymbol{R}$ be a non-negative random variable and let $r_0 \triangleq \operatorname{essinf} \boldsymbol{R}$. Then
$\lim\limits_{x \to \infty} h(x) = \frac{1}{r_0}$,
where we adopt the convention $\frac{1}{0} = \infty$.
\begin{proof} Let $\boldsymbol{X} \triangleq \boldsymbol{R} - r_0$. Then 
\begin{align}\label{eq:h and Q}
\frac{1}{h(x)} = \frac{\mEt[\boldsymbol{R} e^{-\frac{1}{2}\boldsymbol{R} x}]}{\mEt[e^{-\frac{1}{2}\boldsymbol{R} x}]} = \frac{\mEt[(r_0 + \boldsymbol{X}) e^{-\frac{1}{2}(r_0 + \boldsymbol{X}) x}]}{\mEt[e^{-\frac{1}{2}(r_0 + \boldsymbol{X}) x}]} = r_0 + Q(x),
\end{align}
where 
$Q(x) \triangleq \frac{\mEt[\boldsymbol{X} e^{-\frac{1}{2}\boldsymbol{X} x}]}{\mEt[e^{-\frac{1}{2}\boldsymbol{X} x}]}$.
 By $\operatorname{essinf} \boldsymbol{X} = 0$, we have $p_\epsilon \triangleq \tilde{\mathbb{P}}(0 \le \boldsymbol{X} \le \epsilon) > 0$ for any $\epsilon>0$. Therefore,
\begin{align*}
Q(x)\leq & \frac{\mEt[\boldsymbol{X} e^{-\frac{1}{2}\boldsymbol{X} x} \mathbb{I}_{\{\boldsymbol{X}\leq \epsilon\}}+\mEt[\boldsymbol{X} e^{-\frac{1}{2}\boldsymbol{X} x} \mathbb{I}_{\{\boldsymbol{X} > \epsilon\}}]}{\mEt[e^{-\frac{1}{2}\boldsymbol{X} x} \mathbb{I}_{\{\boldsymbol{X} \le  \epsilon\}}]}\\
 \leq &\epsilon+ \frac{\mEt[\boldsymbol{X} e^{-\frac{1}{2}\boldsymbol{X} x} \mathbb{I}_{\{\boldsymbol{X} > \epsilon\}}]}{p_\epsilon e^{-\frac{1}{2}\epsilon x}} =\epsilon+ \frac{\mEt[\boldsymbol{X} e^{-\frac{1}{2}(\boldsymbol{X}-\epsilon) x} \mathbb{I}_{\{\boldsymbol{X} > \epsilon\}}]}{p_\epsilon} .
\end{align*}
Note that for any $x \ge 1$, $\boldsymbol{X} e^{-\frac{1}{2}(\boldsymbol{X}-\epsilon) x} \mathbb{I}_{\{\boldsymbol{X} > \epsilon\}}$ is dominated by $\boldsymbol{X} e^{-\frac{1}{2}(\boldsymbol{X}-\epsilon)} \mathbb{I}_{\{\boldsymbol{X} > \epsilon\}}\in L^\infty$. 
Hence, by the dominated convergence theorem, $\limsup\limits_{x \to \infty} Q(x) \le \epsilon$. Because $\epsilon$ is arbitrary, we have $\lim\limits_{x \to \infty} Q(x) = 0$.  By \eqref{eq:h and Q}, the convergence result for $h$ follows immediately.
\end{proof}
\end{lemma}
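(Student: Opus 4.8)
The plan is to study the reciprocal $1/h(x)=\tilde{\mathbb{E}}[\boldsymbol{R}\me^{-\frac{1}{2}\boldsymbol{R}x}]/\tilde{\mathbb{E}}[\me^{-\frac{1}{2}\boldsymbol{R}x}]$, which is the expectation of $\boldsymbol{R}$ under the exponentially tilted law $\md\tilde{\mathbb{P}}_x\propto\me^{-\frac{1}{2}\boldsymbol{R}x}\,\md\tilde{\mathbb{P}}$. Heuristically, as $x\to\infty$ this tilting forces the mass towards the essential infimum $r_0$, so the tilted mean should converge to $r_0$; establishing $\lim_{x\to\infty}h(x)=1/r_0$ is then equivalent to $\lim_{x\to\infty}1/h(x)=r_0$, with the convention $1/0=\infty$ covering the case $r_0=0$ (where the claim reads $1/h(x)\to0$, i.e.\ $h(x)\to\infty$).

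First I would center the variable: set $\boldsymbol{X}\triangleq\boldsymbol{R}-r_0\ge0$, so that $\operatorname{essinf}\boldsymbol{X}=0$, and cancel the common factor $\me^{-\frac{1}{2} r_0 x}$ in the numerator and denominator to get $1/h(x)=r_0+Q(x)$ with $Q(x)=\tilde{\mathbb{E}}[\boldsymbol{X}\me^{-\frac{1}{2}\boldsymbol{X}x}]/\tilde{\mathbb{E}}[\me^{-\frac{1}{2}\boldsymbol{X}x}]$. Since $\boldsymbol{X}\ge0$ we have $Q(x)\ge0$ for free, so it suffices to show $\limsup_{x\to\infty}Q(x)\le0$. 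Fix $\epsilon>0$; because $\operatorname{essinf}\boldsymbol{X}=0$, the mass $p_\epsilon\triangleq\tilde{\mathbb{P}}(\boldsymbol{X}\le\epsilon)$ is strictly positive. I would then split the numerator of $Q(x)$ over $\{\boldsymbol{X}\le\epsilon\}$ and $\{\boldsymbol{X}>\epsilon\}$: on the first event use $\boldsymbol{X}\me^{-\frac{1}{2}\boldsymbol{X}x}\le\epsilon\me^{-\frac{1}{2}\boldsymbol{X}x}$, which contributes at most $\epsilon$ times the denominator; and bound the denominator from below by $\tilde{\mathbb{E}}[\me^{-\frac{1}{2}\boldsymbol{X}x}\mathbb{I}_{\{\boldsymbol{X}\le\epsilon\}}]\ge p_\epsilon\me^{-\frac{1}{2}\epsilon x}$. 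This yields $Q(x)\le\epsilon+p_\epsilon^{-1}\tilde{\mathbb{E}}[\boldsymbol{X}\me^{-\frac{1}{2}(\boldsymbol{X}-\epsilon)x}\mathbb{I}_{\{\boldsymbol{X}>\epsilon\}}]$.

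It then remains to send $x\to\infty$ in the last expectation, and this is the only place where genuine care is needed, since $\boldsymbol{R}$ (hence $\boldsymbol{X}$) is not assumed integrable, so one cannot invoke integrability of $\boldsymbol{X}$ directly. The resolution is to observe that for $x\ge1$ the integrand is dominated on $\{\boldsymbol{X}>\epsilon\}$ by $\boldsymbol{X}\me^{-\frac{1}{2}(\boldsymbol{X}-\epsilon)}=\me^{\epsilon/2}\,\boldsymbol{X}\me^{-\boldsymbol{X}/2}$, which is a bounded (hence integrable) function of $\boldsymbol{X}$ no matter how heavy the tail of $\boldsymbol{R}$ is, while the integrand tends to $0$ pointwise on $\{\boldsymbol{X}>\epsilon\}$; dominated convergence then gives $\tilde{\mathbb{E}}[\boldsymbol{X}\me^{-\frac{1}{2}(\boldsymbol{X}-\epsilon)x}\mathbb{I}_{\{\boldsymbol{X}>\epsilon\}}]\to0$. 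Hence $\limsup_{x\to\infty}Q(x)\le\epsilon$, and letting $\epsilon\downarrow0$ gives $Q(x)\to0$, i.e.\ $1/h(x)\to r_0$, which is the assertion (including the stated convention when $r_0=0$). I expect this uniform-domination step — the one that removes any moment hypothesis on $\boldsymbol{R}$ — to be the main (if mild) obstacle; everything else is a routine tilting/concentration estimate.
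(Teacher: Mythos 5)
Your proposal is correct and follows essentially the same route as the paper's proof: the same centering $\boldsymbol{X}=\boldsymbol{R}-r_0$ and identity $1/h(x)=r_0+Q(x)$, the same split of the numerator over $\{\boldsymbol{X}\le\epsilon\}$ and $\{\boldsymbol{X}>\epsilon\}$ with the lower bound $p_\epsilon \me^{-\frac{1}{2}\epsilon x}$ on the denominator, and the same dominated-convergence step using the bounded majorant $\boldsymbol{X}\me^{-\frac{1}{2}(\boldsymbol{X}-\epsilon)}$ on $\{\boldsymbol{X}>\epsilon\}$ to avoid any moment assumption on $\boldsymbol{R}$. No changes needed.
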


\begin{lemma}[Uniform Convergence]\label{lma:h_ntoh}
Let $\{\boldsymbol{R}_n\}_{n\ge 1}$ be a sequence of nonnegative random variables that converges in distribution to $\boldsymbol{R}$ and $\lim\limits_{n \to \infty} \mEt[\boldsymbol{R}_n] = \mEt[\boldsymbol{R}] \in (0, \infty)$.
Then, on every compact subset of $[0, \infty)$, the sequence $\{h_{\boldsymbol{R}_n}\}_{n\ge1}$ converges uniformly to $h_{\boldsymbol{R}}$.

\begin{proof}
Fix a compact subset $\mathcal K\subset[0,\infty)$.

We first show that $\mEt[e^{-\frac{1}{2}\boldsymbol{R}_n x}] \to \mEt[e^{-\frac{1}{2}\boldsymbol{R} x}]$ uniformly on $\mathcal K$.
Observe that
\begin{align*}
\left|\dfrac{\md\left(\mEt[e^{-\frac{1}{2}\boldsymbol{R}_n x}]\right)}{\md x} \right| = \left| -\frac{1}{2}\mEt[\boldsymbol{R}_n e^{-\frac{1}{2}\boldsymbol{R}_n x}] \right| \le \frac{1}{2}\mEt[\boldsymbol{R}_n]\quad \text{for all}\,\,x\geq0.
\end{align*}
The sequence $\{\mEt[\boldsymbol{R}_n]\}_{n\geq 1}$ is bounded because $\mEt[\boldsymbol{R}_n] \to \mEt[\boldsymbol{R}]$. Hence, the family of functions $\{\mEt[e^{-\frac{1}{2}\boldsymbol{R}_n x}]\}_{n\geq1}$ is equicontinuous on $\mathcal{K}$. Moreover, we have $\mEt[e^{-\frac{1}{2}\boldsymbol{R}_n x}] \to \mEt[e^{-\frac{1}{2}\boldsymbol{R} x}]$ for any $x\geq0$, because $\boldsymbol{R}_n \xrightarrow{d} \boldsymbol{R}$ and the function
$r\mapsto e^{-\frac12 r x}$ is bounded and continuous. It follows that $\mEt[e^{-\frac{1}{2}\boldsymbol{R}_n x}] \to \mEt[e^{-\frac{1}{2}\boldsymbol{R} x}]$ uniformly on $\mathcal K$;  see, e.g., \citet[Exercise 7.16]{rudin1976principles}.
 
Now we show the uniform convergence of $\{\mEt[\boldsymbol{R}_n e^{-\frac{1}{2}\boldsymbol{R}_n x}]\}_{n\ge1}$ on $\mathcal K$. By
$\boldsymbol{R}_n \xrightarrow{d} \boldsymbol{R}$ and $\mEt[\boldsymbol{R}_n] \to \mEt[\boldsymbol{R}] < \infty$,  we know that the family $\{\boldsymbol{R}_n\}_{n\geq 1}$ is uniformly integrable. Consequently, for any $\epsilon>0$, there exists $M>0$ such that $F_{\boldsymbol{R}}$ is continuous at $M$, $\displaystyle\sup_n \mEt[\boldsymbol{R}_n \mathbb{I}_{\{\boldsymbol{R}_n > M\}}]<\frac{\epsilon}{3}$, and $\displaystyle\mEt[\boldsymbol{R} \mathbb{I}_{\{\boldsymbol{R} > M\}}]<\frac{\epsilon}{3}$. Let $f_n^M(x) \triangleq \mEt[\boldsymbol{R}_n e^{-\frac{1}{2}\boldsymbol{R}_n x} \mathbb{I}_{\{\boldsymbol{R}_n \le M\}}]$. Then $f_n^M$ is continuously differentiable and its derivative is uniformly bounded by $\frac12 M^2$, which implies that the family $\{f_n^M\}_{n\geq 1}$ is equicontinuous on $\mathcal{K}$. Furthermore, because $\boldsymbol{R}_n \xrightarrow{d} \boldsymbol{R}$, we have the pointwise convergence $f_n^M(x)\to f^M(x) \triangleq\mEt[\boldsymbol{R} e^{-\frac{1}{2}\boldsymbol{R} x} \mathbb{I}_{\{\boldsymbol{R} \le M\}}]$; see, e.g., \citet[Theorem A.43]{follmer2016stochastic}. As a result, $f_n^M$ converges uniformly to $f^M$ on $\mathcal K$. Hence, for all sufficiently large $n$,
$$\sup_{x \in \mathcal{K}} |f_n^M(x) - f^M(x)| < \frac{\epsilon}{3}.$$ Observe that
\begin{align*}
\left|\mEt[\boldsymbol{R}_n e^{-\frac{1}{2}\boldsymbol{R}_n x}] - \mEt[\boldsymbol{R} e^{-\frac{1}{2}\boldsymbol{R} x}]\right| \le |f_n^M(x) - f^M(x)| + \mEt[\boldsymbol{R}_n \mathbb{I}_{\{\boldsymbol{R}_n > M\}}] + \mEt[\boldsymbol{R} \mathbb{I}_{\{\boldsymbol{R} > M\}}].
\end{align*}
Therefore, for $n$ large enough, $$\sup_{x \in \mathcal{K}} \left|\mEt[\boldsymbol{R}_n e^{-\frac{1}{2}\boldsymbol{R}_n x}] - \mEt[\boldsymbol{R} e^{-\frac{1}{2}\boldsymbol{R} x}]\right| < \epsilon,$$
 which implies the uniform convergence of $\{\mEt[\boldsymbol{R}_n e^{-\frac{1}{2}\boldsymbol{R}_n x}]\}_{n\ge1}$ on $\mathcal K$. 
 
 Finally, the function $\mEt[\boldsymbol{R} e^{-\frac{1}{2}\boldsymbol{R} x}]$ is bounded away from zero on $\mathcal{K}$, given that $\mEt[\boldsymbol{R}]>0$. Therefore,  $\{h_{\boldsymbol{R}_n}\}_{n\ge1}$ converges uniformly to $h_{\boldsymbol{R}}$ on $\mathcal K$.
\end{proof}
\end{lemma}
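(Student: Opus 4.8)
The plan is to treat the numerator $N_n(x)\triangleq\mEt[e^{-\frac12\boldsymbol{R}_n x}]$ and the denominator $D_n(x)\triangleq\mEt[\boldsymbol{R}_n e^{-\frac12\boldsymbol{R}_n x}]$ of $h_{\boldsymbol{R}_n}=N_n/D_n$ separately: first show that each converges uniformly to its $\boldsymbol{R}$-counterpart $N$, $D$ on an arbitrary fixed compact set $\mathcal{K}\subset[0,\infty)$, and then recombine through the quotient once $D$ is seen to stay bounded away from $0$ on $\mathcal{K}$. In both cases the mechanism is the same: pointwise convergence (coming from $\boldsymbol{R}_n\xrightarrow{d}\boldsymbol{R}$) together with equicontinuity of the family on $\mathcal{K}$ (coming from a uniform bound on the derivative) upgrades to uniform convergence, as in \citet[Exercise 7.16]{rudin1976principles}.

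For the numerator, since $r\mapsto e^{-\frac12 rx}$ is bounded and continuous, weak convergence gives $N_n(x)\to N(x)$ for every fixed $x\ge 0$. Differentiating under the expectation, $|N_n'(x)|=\tfrac12\mEt[\boldsymbol{R}_n e^{-\frac12\boldsymbol{R}_n x}]\le\tfrac12\mEt[\boldsymbol{R}_n]$, and $\{\mEt[\boldsymbol{R}_n]\}$ is bounded because it converges; hence $\{N_n\}$ is uniformly Lipschitz, in particular equicontinuous, on $\mathcal{K}$, and therefore $N_n\to N$ uniformly on $\mathcal{K}$.

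The hard part is the denominator: $r\mapsto re^{-\frac12 rx}$ is unbounded (it is essentially $r$ near $x=0$), so plain weak convergence does not control $D_n$. The remedy is that $\boldsymbol{R}_n\xrightarrow{d}\boldsymbol{R}$ together with $\mEt[\boldsymbol{R}_n]\to\mEt[\boldsymbol{R}]<\infty$ forces $\{\boldsymbol{R}_n\}$ to be uniformly integrable. Given $\epsilon>0$, I would choose a continuity point $M$ of $F_{\boldsymbol{R}}$ with $\sup_n\mEt[\boldsymbol{R}_n\mathbb{I}_{\{\boldsymbol{R}_n>M\}}]<\epsilon/3$ and $\mEt[\boldsymbol{R}\mathbb{I}_{\{\boldsymbol{R}>M\}}]<\epsilon/3$, then truncate: put $f_n^M(x)\triangleq\mEt[\boldsymbol{R}_n e^{-\frac12\boldsymbol{R}_n x}\mathbb{I}_{\{\boldsymbol{R}_n\le M\}}]$ and $f^M$ analogously. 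The truncated integrand is bounded and has a single jump at $r=M$, a continuity point of the limit law, so $f_n^M(x)\to f^M(x)$ pointwise (e.g.\ \citet[Theorem A.43]{follmer2016stochastic}), while $|(f_n^M)'(x)|\le\tfrac12 M^2$ gives equicontinuity on $\mathcal{K}$, hence $f_n^M\to f^M$ uniformly there. Since $0\le re^{-\frac12 rx}\le r$, the triangle inequality yields, for $x\in\mathcal{K}$,
\[
|D_n(x)-D(x)|\le|f_n^M(x)-f^M(x)|+\mEt[\boldsymbol{R}_n\mathbb{I}_{\{\boldsymbol{R}_n>M\}}]+\mEt[\boldsymbol{R}\mathbb{I}_{\{\boldsymbol{R}>M\}}],
\]
so $\sup_{x\in\mathcal{K}}|D_n(x)-D(x)|<\epsilon$ for all large $n$.

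Finally, since $\mEt[\boldsymbol{R}]>0$ and $\mathcal{K}$ is bounded, $D(x)=\mEt[\boldsymbol{R}e^{-\frac12\boldsymbol{R}x}]\ge\mEt[\boldsymbol{R}e^{-\frac12\boldsymbol{R}\max\mathcal{K}}]>0$ for every $x\in\mathcal{K}$, so $D$ is bounded below on $\mathcal{K}$ by some $c>0$, and the uniform convergence $D_n\to D$ forces $D_n\ge c/2$ on $\mathcal{K}$ for all large $n$. Writing
\[
h_{\boldsymbol{R}_n}-h_{\boldsymbol{R}}=\frac{N_n-N}{D_n}+\frac{N\,(D-D_n)}{D_n D}
\]
and using $N\le 1$ on $\mathcal{K}$, the uniform convergences of $N_n$ and $D_n$ give $\sup_{x\in\mathcal{K}}|h_{\boldsymbol{R}_n}(x)-h_{\boldsymbol{R}}(x)|\to 0$. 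The only genuinely delicate step is the control of the denominator near $x=0$, handled by the uniform-integrability/truncation argument; the rest is the routine ``pointwise plus equicontinuous on a compact set implies uniform'' reasoning.
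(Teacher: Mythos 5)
Your proposal is correct and follows essentially the same route as the paper's proof: uniform convergence of $\mEt[e^{-\frac12\boldsymbol{R}_n x}]$ via weak convergence plus equicontinuity, uniform convergence of $\mEt[\boldsymbol{R}_n e^{-\frac12\boldsymbol{R}_n x}]$ via uniform integrability and truncation at a continuity point of $F_{\boldsymbol{R}}$, and then the quotient argument using that the denominator is bounded away from zero on the compact set. Your explicit quotient decomposition at the end merely spells out a step the paper leaves implicit.
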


The reverse hazard rate order can be characterized via expectation ratios, as established by \cite{caperaa1988tail} (see also \citet[Theorem 1.B.50]{SS2007}). 

\begin{lemma}\label{lemma5.4}
Let ${\boldsymbol{R}_1}$ and ${\boldsymbol{R}_2}$ be two random variables. Then ${\boldsymbol{R}_1} \succeq_{rh} {\boldsymbol{R}_2}$ if and only if
\begin{align*}
\frac{\mEt\left[u({\boldsymbol{R}_1})w({\boldsymbol{R}_1})\right]}{\mEt[w({\boldsymbol{R}_1})]} \ge \frac{\mEt[u({\boldsymbol{R}_2})w({\boldsymbol{R}_2})]}{\mEt[w({\boldsymbol{R}_2})]}
\end{align*}
for all functions $u$ and $w$ such that the expectations exist, $w$ is positive and decreasing, and $u$ is  increasing.
\end{lemma}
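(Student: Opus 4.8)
The plan is to prove the two implications of the equivalence separately. Throughout I use the definition of $\boldsymbol{R_1}\succeq_{rh}\boldsymbol{R_2}$ in the form $F_{\boldsymbol{R_2}}(\gamma_1)F_{\boldsymbol{R_1}}(\gamma_2)\ge F_{\boldsymbol{R_2}}(\gamma_2)F_{\boldsymbol{R_1}}(\gamma_1)$ for all $\gamma_1\le\gamma_2$ (in the range where $F_{\boldsymbol{R_1}}>0$), together with its immediate consequence $F_{\boldsymbol{R_2}}\ge F_{\boldsymbol{R_1}}$ pointwise, i.e. $\boldsymbol{R_1}\succeq_1\boldsymbol{R_2}$ (the ratio is decreasing and tends to $1$).

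\emph{Necessity.} The key reduction is that $\mEt[u(\boldsymbol{R})w(\boldsymbol{R})]/\mEt[w(\boldsymbol{R})]$ is the expectation of $u$ under the $w$-tilted law $\md\tilde{\mathbb P}\propto w\,\md\mathbb P$ of $\boldsymbol{R}$; hence the asserted inequality for all increasing $u$ is equivalent to first-order dominance of the $w$-tilted law of $\boldsymbol{R_1}$ over that of $\boldsymbol{R_2}$, and for this it suffices to take $u=\mathbb{I}_{(\gamma,\infty)}$ for every $\gamma\ge0$. Putting $P_i\triangleq\mEt[w(\boldsymbol{R_i})\mathbb{I}_{\{\boldsymbol{R_i}\le\gamma\}}]$ and $Q_i\triangleq\mEt[w(\boldsymbol{R_i})\mathbb{I}_{\{\boldsymbol{R_i}>\gamma\}}]$, the target inequality becomes, after cross-multiplication, simply $P_1Q_2\le P_2Q_1$. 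I would then represent a positive, continuous, decreasing $w$ as $w(r)=\int_{(r,\infty]}\mu(\md t)$ for a nonnegative measure $\mu$ (atomless on $(0,\infty)$, with $\mu(\{\infty\})=w(\infty)$), and integrate in $t$ first to obtain the decomposition $P_i=\alpha_i+F_{\boldsymbol{R_i}}(\gamma)\mu_{>}$, $Q_i=\beta_i-F_{\boldsymbol{R_i}}(\gamma)\mu_{>}$, where $\alpha_i=\int_{(0,\gamma]}F_{\boldsymbol{R_i}}(t^-)\mu(\md t)$, $\beta_i=\int_{(\gamma,\infty]}F_{\boldsymbol{R_i}}(t^-)\mu(\md t)$, $\mu_{>}=\mu((\gamma,\infty])$. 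Integrating the defining inequalities of $\succeq_{rh}$ against $\mu$ over $\{t\le\gamma\}$ and over $\{t>\gamma\}$ yields $\alpha_2 F_{\boldsymbol{R_1}}(\gamma)\ge\alpha_1 F_{\boldsymbol{R_2}}(\gamma)$ and $\beta_1 F_{\boldsymbol{R_2}}(\gamma)\ge\beta_2 F_{\boldsymbol{R_1}}(\gamma)$. Writing these two inequalities as equalities with nonnegative slacks $s,u$ and expanding, $P_2Q_1-P_1Q_2$ collapses to $\frac{uy}{x}P_1+\frac{sx}{y}Q_2+su$ with $x=F_{\boldsymbol{R_1}}(\gamma)$, $y=F_{\boldsymbol{R_2}}(\gamma)$, which is manifestly $\ge0$; the degenerate cases $F_{\boldsymbol{R_1}}(\gamma)=0$ are trivial since then $P_1=0$. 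The inequality for an arbitrary increasing $u$—in particular a continuous strictly increasing one—follows by monotone approximation.

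\emph{Sufficiency.} I would argue the converse at the level of distribution functions: for $s<t$ with $F_{\boldsymbol{R_1}}(s)>0$, choose continuous strictly increasing $u_n\to\mathbb{I}_{(s,\infty)}$ and positive continuous decreasing $w_n\to\mathbb{I}_{[0,t]}+\epsilon$, plug them into the hypothesis, pass to the limit by dominated convergence, and let $\epsilon\downarrow0$; the surviving inequality rearranges to $F_{\boldsymbol{R_2}}(s)/F_{\boldsymbol{R_1}}(s)\ge F_{\boldsymbol{R_2}}(t)/F_{\boldsymbol{R_1}}(t)$, which is $\boldsymbol{R_1}\succeq_{rh}\boldsymbol{R_2}$. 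The main obstacle is the concluding algebraic step of the necessity part: the two ratio inequalities supplied by $\succeq_{rh}$ act in opposite senses on the ``below-$\gamma$'' part $\alpha_i$ and the ``above-$\gamma$'' part $\beta_i$, so a naive termwise comparison of $P_2Q_1$ with $P_1Q_2$ fails; one has to recombine the two inequalities—via the slacks—to expose $P_2Q_1-P_1Q_2$ as a sum of nonnegative products. Everything else (the mixture representation of $w$, exchanging $\mEt$ with the $\mu$-integral, the single-point and boundary checks, the limiting arguments) is routine. Alternatively, the whole statement follows directly from the weighted-distribution characterizations of $\succeq_{rh}$ in \citet{caperaa1988tail} (see also \citet[Theorem 1.B.50]{SS2007}) and \citet{wang2024weighted}.
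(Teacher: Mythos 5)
The paper offers no proof of this lemma at all: it is imported verbatim from \citet{caperaa1988tail} (see also \citet[Theorem 1.B.50]{SS2007}), so any self-contained argument is by construction a different route. Your proof is essentially correct and is a reasonable elementary derivation. The reduction of the necessity direction to $P_1Q_2\le P_2Q_1$ via the $w$-tilted laws and indicator test functions, the mixture representation $w(r)=\int_{(r,\infty]}\mu(\md t)$ with Tonelli, the two integrated ratio inequalities $\alpha_2 F_{\boldsymbol{R_1}}(\gamma)\ge\alpha_1 F_{\boldsymbol{R_2}}(\gamma)$ and $\beta_1 F_{\boldsymbol{R_2}}(\gamma)\ge\beta_2 F_{\boldsymbol{R_1}}(\gamma)$, and the limiting argument for sufficiency all go through. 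One correctable slip: with $s=\alpha_2x-\alpha_1y$, $u=\beta_1y-\beta_2x$, $x=F_{\boldsymbol{R_1}}(\gamma)$, $y=F_{\boldsymbol{R_2}}(\gamma)$, the identity is
\begin{align*}
P_2Q_1-P_1Q_2=\frac{u}{x}\,P_1+\frac{s}{y}\,Q_2+\frac{su}{xy},
\end{align*}
not $\frac{uy}{x}P_1+\frac{sx}{y}Q_2+su$ as you wrote; the corrected version is still manifestly nonnegative, so the conclusion is unaffected. You should also note explicitly that under $\succeq_{rh}$ the case $x>0$, $y=0$ cannot occur (the decreasing ratio would then be identically zero to the right of $\gamma$, contradicting $F_{\boldsymbol{R_2}}\to1$), so the division by $y$ is legitimate. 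What your approach buys is a self-contained, measure-theoretically explicit proof; what the paper's citation buys is brevity and the full strength of Capéraà's result (which also covers the converse with weaker regularity on $u$ and $w$). If the authors wanted the paper self-contained, your argument, with the coefficient corrected, would serve.
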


The next lemma shows that $h_{\boldsymbol{R}}$ is ``decreasing" w.r.t. $\boldsymbol{R}$ (under the reverse hazard rate order of $\boldsymbol{R}$).

\begin{lemma}\label{proposition5.7} 
Let $\boldsymbol{R}_1$ and $\boldsymbol{R}_2$ be two non-negative random variables with $F_{{\boldsymbol{R}_1}} \neq F_{{\boldsymbol{R}_2}}$. If $\boldsymbol{R}_1 \succeq_{rh}\boldsymbol{R}_2 $, then $h_1(x) < h_2(x)$ for all $x\ge 0$,
where $h_i=h_{\boldsymbol{R}_i}$ ($i=1, 2$).

\begin{proof}
Fix $x \ge 0$. Let $w(\gamma) = \exp(-\frac{1}{2}\gamma x)$ and $u(\gamma) = \gamma$. Applying Lemma \ref{lemma5.4} yields
\begin{align} \label{eq:ratio_ineq}
    \frac{\mEt[{\boldsymbol{R}_1} w({\boldsymbol{R}_1})]}{\mEt[w({\boldsymbol{R}_1})]} \ge \frac{\mEt[{\boldsymbol{R}_2} w({\boldsymbol{R}_2})]}{\mEt[w({\boldsymbol{R}_2})]},
\end{align}
which implies that $h_1(x) \leq h_2(x)$.
To show the strict inequality, suppose on the contrary that the equality holds in (\ref{eq:ratio_ineq}). Let $\tilde{F}_i$ ($i=1,2$) be the probability distribution functions defined by $\md \tilde{F}_i(\gamma) = \frac{w(\gamma)\md F_{\boldsymbol{R}_i}(\gamma)}{\mEt[w(\boldsymbol{R}_i)]}$. 
Then $\int_0^\infty \gamma \md \tilde{F}_1(\gamma)=\int_0^\infty \gamma \md \tilde{F}_2(\gamma)$.
On the other hand,  a combination of $\boldsymbol{R}_1 \succeq_{rh} \boldsymbol{R}_2$, Lemma \ref{lemma5.4}, and the definition of $\tilde{F}_i$ yields $\tilde{F}_1 \succeq_1 \tilde{F}_2$. 
Therefore, 
 $\tilde{F}_1 = \tilde{F}_2$. 
 By $w(\cdot) > 0$, we have $F_{{\boldsymbol{R}_1}} = F_{{\boldsymbol{R}_2}}$, which contradicts the assumption that $F_{{\boldsymbol{R}_1}} \neq F_{{\boldsymbol{R}_2}}$. Thus, the inequality must be strict.
\end{proof}
\end{lemma}

\section{Proofs}\label{appendix B}

\subsection{Proof of Proposition \ref{proposition3.10}}\label{proof of proposition3.10}
\paragraph{(1)} Assume that the sequence $\{\Lambda_n\}_{n \ge 1}$ strictly increases to $H(\infty)$. First, we have that $\Lambda_n < H(\infty)$ for all $n \ge 1$, which guarantees the existence of the equilibrium.
Then, for each $n$, $v_n(0)$ is well-defined and uniquely determined by $v_n(0) = H^{-1}(\Lambda_n)$. Consequently, the convergence $\Lambda_n \to H(\infty)$ implies that $\displaystyle\lim_{n \to \infty} v_n(0)= \infty$.
Based on Theorem \ref{theorem3.3}, we have $a_n(0) = h(v_n(0))\lambda_0$. Using the asymptotic property of $h$ established in Lemma \ref{lemmaA.3}, i.e., $\displaystyle\lim_{x \to \infty} h(x) = \frac{1}{r_0}$, we obtain
\begin{align*}
\lim_{n \to \infty} a_n(0) = \lambda_0 \lim_{v_n(0) \to \infty} h(v_n(0)) = \frac{\lambda_0}{r_0}.
\end{align*}
Furthermore, $\{v_n(0)\}$ is strictly increasing as $\{\Lambda_n\}$ is strictly increasing. Combined with the monotonicity of $h$ (see Lemma \ref{h:mono}), it follows that $|a_n(0)| = |\lambda_0|h(v_n(0))$ increases to $\frac{|\lambda_0|}{r_0}$.  

\paragraph{(2)} Assume that the sequence $\{\Lambda_n\}_{n \ge 1}$ strictly decreases to $0$. By $H(\infty) > 0$,  $\Lambda_n < H(\infty)$ for all sufficiently large $n$. Then the equilibrium exists for all sufficiently large $n$.
For such $n$, we have $v_n(0) = H^{-1}(\Lambda_n)$. Because $H(0)=0$ and $H^{-1}$ is continuous, we have
\begin{align*}
\lim_{n \to \infty} v_n(0) = H^{-1}(0) = 0.
\end{align*}
 Recall that $\displaystyle h(0) = \frac{1}{\mEt[\boldsymbol{R}]}$. 
Taking the limit of $a_n(0)$ yields
\begin{align*}
\lim_{n \to \infty} a_n(0) = \lambda_0 \lim_{v_n(0) \to 0} h(v_n(0)) = \frac{\lambda_0}{\mEt[\boldsymbol{R}]}.
\end{align*}
Finally,  $\{v_n(0)\}$ is strictly decreasing as $\{\Lambda_n\}$ is strictly decreasing. The strict monotonicity of $h$ implies that $|a_n(0)|$ decreases to $\displaystyle \frac{|\lambda_0|}{\mEt[\boldsymbol{R}]}$.

\subsection{Proof of Theorem \ref{thm:converge:distribution}}\label{proof:thm:converge:distribution}

For notational simplicity, let $h_n=h_{\boldsymbol{R}_n}$, $H_n=H_{\boldsymbol{R}_n}$, and $v_n=v_{\boldsymbol{R}_n}$ ($n\ge1$).
By Lemma \ref{lma:h_ntoh}, the sequence
$\{h_n\}_{n\ge1}$ converges uniformly to $h$ on every compact subset of $[0,\infty)$. Because the limit function $h$ is strictly positive, it follows that $\{H_n\}_{n\ge1}$ converges uniformly to $H$ on every compact subset of $[0,\infty)$ as well.  By Lemma \ref{lemmaB.1} below, the sequence of the inverse functions $\{H_n^{-1}\}_{n\ge1}$ converges uniformly to $H^{-1}$ on $[0,\Lambda(0)]$.

We now see the uniform convergence of $\{v_n\}_{n\ge1}$ to $v$ on $[0,T]$:
\begin{align*}
\sup_{t \in [0, T]} |v_n(t) - v(t)| = \sup_{t \in [0, T]} |H_n^{-1}(\Lambda(t)) - H^{-1}(\Lambda(t))| \le \sup_{y \in [0, \Lambda(0)]} |H_n^{-1}(y) - H^{-1}(y)| \to 0.
\end{align*}

Next, we show the uniform convergence of the composite sequence $h_n(v_n(\cdot))$. Indeed, by the triangle inequality,
\begin{align}\label{eq:a.1}
\sup_{t \in [0, T]} |h_n(v_n(t)) - h(v(t))| \le \sup_{t \in [0, T]} |h_n(v_n(t)) - h(v_n(t))| + \sup_{t \in [0, T]} |h(v_n(t)) - h(v(t))|.
\end{align}
Let $\displaystyle\mathcal{K} \triangleq \left[0,\max_{t\in[0,T]}|v(t)|+1\right]$. Then  $v_n(t)\in \mathcal{K}$ for all $t\in[0,T]$  and all sufficiently large $n$.
Therefore, the uniform convergence of $\{h_n\}_{n\ge1}$ on $\mathcal{K}$ implies $\displaystyle\sup_{t \in [0, T]} |h_n(v_n(t)) - h(v_n(t))| \to 0$. Moreover, $h$ is continuous and therefore uniformly continuous on the compact set $\mathcal{K}$, which combined with the fact that $\sup\limits_{t \in [0, T]} |v_n(t) - v(t)| \to 0$ implies $\sup\limits_{t \in [0, T]} |h(v_n(t)) - h(v(t))| \to 0$. Consequently, by \eqref{eq:a.1}, we have
\begin{align}\label{eq:h_uniform}
\lim_{n \to \infty} \sup_{t \in [0, T]} |h_n(v_n(t)) - h(v(t))| = 0.
\end{align}

Finally, recalling that $a_n(t) = h_n(v_n(t))\lambda(t)$ and $a(t) = h(v(t))\lambda(t)$, we get from \eqref{eq:h_uniform} that $\lim\limits_{n \to \infty} a_n(t) = a(t)$  for every $t\in[0,T)$.
Furthermore, if  $\lambda \in L^\infty([0, T))$, then the above convergence is uniform on $[0,T)$. 

\begin{lemma}\label{lemmaB.1}

Let $H$ and $H_n$ ($n\ge 1$) be strictly increasing and continuous functions defined on $[0, \infty)$ with $H(0)=H_n(0)=0$ for all $n\ge 1$. 
Suppose that $\{H_n\}_{n\ge 1}$ converges uniformly to $H$ on every compact subset of $[0, \infty)$. 
Then, for any bounded interval $[0, z] \subset [0, H(\infty))$, there exists an integer $N$ such that for all $n \ge N_0$, the inverse function $H_n^{-1}$ is well-defined on $[0, z]$, and the sequence $\{H_n^{-1}\}_{n\ge N_0}$ converges uniformly to $H^{-1}$ on $[0, z]$.
\begin{proof} 
Fix $[0, z]\subset[0,H(\infty))$ and a constant $M>H^{-1}(z)$. By the uniform convergence of $\{H_n\}_{n\ge1}$ on $[0,M]$, there exists $N_0\ge1$ such that $H_n(M)>z$ for all $n\ge N_0$. Because each $H_n$ is strictly increasing and continuous with $H_n(0)=0$, it follows that $[0, z] \subset [0, H_n(M)] = H_n([0, M])$ for all $n\geq N_0$. Thus, for all $n \ge N_0$, the inverse function $H_n^{-1}$ is well-defined on $[0, z]$, and moreover, for any $x \in [0, z]$, we have $H_n^{-1}(x) \in [0, M]$ and $H^{-1}(x)\in[0,M]$.

Next, because $H^{-1}$ is continuous on the compact interval $[0, H(M)]$, it is uniformly continuous. Thus, for any $\epsilon > 0$, there exists  a $\delta > 0$ such that for any $x', x'' \in [0, H(M)]$ with $|x' - x''| < \delta$, we have 
\begin{align}\label{eq:unif_cont}
    |H^{-1}(x') - H^{-1}(x'')| < \epsilon. 
\end{align}
By uniform convergence of $H_n \to H$ on $[0, M]$, there exists  a $N \ge N_0$ such that for all $n \ge N$ and all $y \in [0, M]$,
\begin{align*}
    |H_n(y) - H(y)| < \delta.
\end{align*}
For any $x \in [0, z]$, let $x_n = H_n^{-1}(x)$. Then,  $x_n \in [0, M]$ and 
\begin{align*}
    |H(x_n) - x| = |H(x_n) - H_n(x_n)| < \delta.
\end{align*}
Because $x, H(x_n) \in [0, H(M)]$ and $x_n = H^{-1}(H(x_n))$, the inequality (\ref{eq:unif_cont}) yields
\begin{align*}
 |H_n^{-1}(x)-H^{-1}(x)|=   |x_n - H^{-1}(x)| = |H^{-1}(H(x_n)) - H^{-1}(x)| < \epsilon.
\end{align*}
As this holds uniformly for all $x \in [0, z]$, the convergence is uniform.
\end{proof}
\end{lemma}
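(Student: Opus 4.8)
The plan is to reduce the statement to a single monotone-interval bookkeeping argument together with the classical fact that uniform convergence of increasing functions passes to their inverses via uniform continuity of the limiting inverse. First I would fix a bounded interval $[0,z]\subset[0,H(\infty))$ and choose a constant $M$ large enough that $H(M)>z$; this is possible because $H$ is continuous, strictly increasing, and vanishes at $0$, hence maps $[0,\infty)$ bijectively onto $[0,H(\infty))$, so $H^{-1}(z)$ exists and any $M>H^{-1}(z)$ works. Applying the hypothesis that $H_n\to H$ uniformly on the compact set $[0,M]$, I can pick $N_0$ so that $H_n(M)>z$ for all $n\ge N_0$. Since each $H_n$ is continuous, strictly increasing, and satisfies $H_n(0)=0$, its image $H_n([0,M])$ equals $[0,H_n(M)]\supset[0,z]$; therefore $H_n^{-1}$ is well-defined on $[0,z]$ and takes values in $[0,M]$ for every $n\ge N_0$.

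For the uniform convergence of the inverses, I would exploit the uniform continuity of $H^{-1}$ on the compact interval $[0,H(M)]$, which contains $[0,z]$. Given $\epsilon>0$, choose $\delta>0$ from uniform continuity so that $|x'-x''|<\delta$ implies $|H^{-1}(x')-H^{-1}(x'')|<\epsilon$ for $x',x''\in[0,H(M)]$; then, using $H_n\to H$ uniformly on $[0,M]$, pick $N\ge N_0$ with $\sup_{y\in[0,M]}|H_n(y)-H(y)|<\delta$ for all $n\ge N$. For $x\in[0,z]$ set $x_n\triangleq H_n^{-1}(x)\in[0,M]$; then $|H(x_n)-x|=|H(x_n)-H_n(x_n)|<\delta$, and since both $x$ and $H(x_n)$ lie in $[0,H(M)]$, uniform continuity gives $|x_n-H^{-1}(x)|=|H^{-1}(H(x_n))-H^{-1}(x)|<\epsilon$. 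As this bound does not depend on $x\in[0,z]$, the convergence $H_n^{-1}\to H^{-1}$ is uniform on $[0,z]$.

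I do not expect any genuine obstacle here; the argument is elementary. The only point demanding care is bookkeeping the nested intervals so that the domains match: one must guarantee that $x_n=H_n^{-1}(x)$ stays inside $[0,M]$ so that $H(x_n)$ lands in the interval $[0,H(M)]$ on which $H^{-1}$ is uniformly continuous, and that the threshold $N_0$ is fixed before the $\delta$-dependent $N$ so that $H_n^{-1}$ is already defined on $[0,z]$ when the estimate is invoked. A secondary subtlety is the case $H(\infty)=\infty$, which is harmless: then every finite $z$ is admissible and $H^{-1}$ is globally defined, so the identical choice of $M$ and the same estimates apply verbatim.
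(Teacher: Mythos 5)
Your proof is correct and follows essentially the same route as the paper's: the same choice of $M>H^{-1}(z)$ and threshold $N_0$ guaranteeing $H_n(M)>z$, the same use of uniform continuity of $H^{-1}$ on $[0,H(M)]$, and the same estimate $|H(x_n)-x|=|H(x_n)-H_n(x_n)|<\delta$ with $x_n=H_n^{-1}(x)$. No gaps; the remarks on domain bookkeeping and the case $H(\infty)=\infty$ are accurate.
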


\subsection{Proof of Proposition \ref{proposition5.2}}\label{proof of proposition5.2}
We first establish a technical lemma.
\begin{lemma} \label{lemma5.3}
    Let $\boldsymbol{R}_1$ and $\boldsymbol{R}_2$ be two non-negative random variables satisfying Conditions (c1) and (c2) in Proposition \ref{proposition5.2}.  Following the notation in the proof of Lemma \ref{lemmaA.3}, let
    \begin{align}\label{eq:Xi:Qi}
    \boldsymbol{X}_i = \boldsymbol{R}_i - r_0\text{ and }
    Q_i(x) = \frac{\mEt[\boldsymbol{X}_i e^{-\frac{1}{2}\boldsymbol{X}_i x}]}{\mEt[e^{-\frac{1}{2}\boldsymbol{X}_i x}]},\quad x\geq 0.
    \end{align}
    Then we have the following assertions.
    \begin{enumerate}[label=(\arabic*), font=\upshape]
        \item For each $i\in\{1,2\}$, there exists a constant $M_i>0$ such that $Q_i(x) \le M_i e^{-\frac{1}{2}\delta_i x}$ for  $x\geq 0$. In particular, $Q_i(x) \to 0$ as $x \to \infty$.
        \item For each $i\in\{1,2\}$, the integral $K_i=\displaystyle\int_{0}^{\infty} (h_i^{-2}(s) - r_0^2) \md s$ is finite.
        \item For any $\varepsilon > 0$, there exists a constant $C > 0$ such that $\dfrac{Q_1(x_1)}{Q_2(x_2)} \le C\dfrac{e^{-\frac{1}{2}\delta_1 x_1}}{ e^{-\frac{1}{2}(\delta_2 + \varepsilon) x_2}}$ for any $x_1, x_2 \ge 0$.
        
    \end{enumerate}
\begin{proof}\
\paragraph{(1)}
For $i\in\{1,2\}$.
By Condition (c1) in Proposition~\ref{proposition5.2}, we have $ \tilde{\mathbb{P}} \mathbb(\boldsymbol{X}_i=0)=p_i$. The dominated convergence theorem then implies $\lim\limits_{x \to \infty} \mEt[e^{-\frac{1}{2}\boldsymbol{X}_i x}] = p_i\in(0,1)$. Hence, there exists $m_i > 0$ such that $\mEt[e^{-\frac{1}{2}\boldsymbol{X}_i x}]\geq m_i$ for any $x\geq0$. 
    Moreover, by Condition (c2) in Proposition \ref{proposition5.2}, the support of $\boldsymbol{X}_i$ on $(0, \infty)$ is contained in $[\delta_i, \infty)$. As such,
    \begin{align*}
    \mEt[\boldsymbol{X}_i e^{-\frac{1}{2}\boldsymbol{X}_i x}] = \int_{\delta_i}^{\infty} \gamma e^{-\frac{1}{2}\gamma x} \md F_{\boldsymbol{X}_i}(\gamma) \le e^{-\frac{1}{2}\delta_i x} \int_{\delta_i}^{\infty} \gamma \md F_{\boldsymbol{X}_i}(\gamma) = e^{-\frac{1}{2}\delta_i x} \mEt[\boldsymbol{X}_i \mathbb{I}_{\{\boldsymbol{X}_i \ge \delta_i\}}].
    \end{align*}
    Thus the desired conclusion  follows by taking $M_i\triangleq \frac{\mEt[\boldsymbol{X}_i \mathbb{I}_{\{\boldsymbol{X}_i \ge \delta_i\}}]}{m_i}$.

\paragraph{(2)}
   For $i\in\{1,2\}$, based on \eqref{eq:h and Q}, the conclusion follows immediately from Assertion (1) and the fact that $\delta_i>0$.

\paragraph{(3)}
    For any $\varepsilon > 0$. By the definition of the support, we have $p_{\varepsilon,2} \triangleq P(\delta_2 \le \boldsymbol{X_2} < \delta_2 + \varepsilon) > 0$. Restricting the integration to this interval yields
    \begin{align*}
    \mEt[\boldsymbol{X_2} e^{-\frac{1}{2}\boldsymbol{X_2} x}] \ge \int_{\delta_2}^{\delta_2 + \varepsilon} x e^{-\frac{1}{2}x \gamma} \md F_{\boldsymbol{X_2}}(\gamma) \ge \delta_2 e^{-\frac{1}{2}(\delta_2 + \varepsilon)x} p_{\varepsilon,2}.
    \end{align*}
   Combining this with Assertion (1) gives the desired bound with $C\triangleq \frac{M_1}{\delta_2 p_{\varepsilon,2}}$.
    \end{proof}
\end{lemma}
Now we return to prove Proposition \ref{proposition5.2}.

By Assertion (1) of Lemma~\ref{lemma5.3} and \eqref{eq:h and Q}, we have $\displaystyle\lim_{x\to\infty} h_i(x) = \frac{1}{r_0}$. Because $r_0 > 0$, it follows that $\int_{0}^{\infty} h_i^{-2}(x) \md x=\infty$, which implies $H_i(\infty) = \infty$. Consequently, Theorem~\ref{theorem3.3}(1) guarantees the existence and uniqueness of the deterministic equilibrium strategy corresponding to $\boldsymbol{R}_i$ for any deterministic and right-continuous market price of risk $\lambda(\cdot)$ in $L^2([0,T))$.

First, we establish the asymptotic relationship between the inverse functions $H_1^{-1}(\Lambda)$ and $H_2^{-1}(\Lambda)$ for large scalar $\Lambda$.  
Because $H_i(\infty) = \infty$, it follows that  $H_i^{-1}(\Lambda) \to \infty$ as $\Lambda \to \infty$. Moreover, the identity $\Lambda = H_i(H_i^{-1}(\Lambda))$ yields
\begin{align}\label{eq:Psi(0)}
\Lambda=\int_{0}^{H_i^{-1}(\Lambda)} (r_0^2 + (h_i^{-2}(x) - r_0^2)) \md x = r_0^2 H_i^{-1}(\Lambda) + \int_{0}^{H_i^{-1}(\Lambda)} (h_i^{-2}(x) - r_0^2) \md x.
\end{align}
Define the tail integral $\Theta_i(u) \triangleq \int_{u}^{\infty} (h_i^{-2}(x) - r_0^2) \md x$, which satisfies $\displaystyle\lim_{u \to \infty} \Theta_i(u) = 0$ based on Lemma \ref{lemma5.3}(2). Then \eqref{eq:Psi(0)} can be rewritten as
\begin{align}\label{eq:v(0)}
\Lambda = r_0^2 H_i^{-1}(\Lambda) + K_i - \Theta_i(H_i^{-1}(\Lambda)).
\end{align}
Equating (\ref{eq:v(0)}) for $i=1$ and $i=2$ gives
\begin{align*}
H_1^{-1}(\Lambda) = H_2^{-1}(\Lambda) + \dfrac{K_2 - K_1}{r_0^2} + \dfrac{\Theta_2(H_2^{-1}(\Lambda)) - \Theta_1(H_1^{-1}(\Lambda))}{r_0^2}.
\end{align*}
Let $\Delta K \triangleq \dfrac{K_2 - K_1}{r_0^2}$ and  $\xi(\Lambda) \triangleq \dfrac{\Theta_2(H_2^{-1}(\Lambda)) - \Theta_1(H_1^{-1}(\Lambda))}{r_0^2}$. Because $H_i^{-1}(\Lambda) \to \infty$ as $\Lambda \to \infty$, the tail difference term $\xi(\Lambda)$ tends to 0 as $\Lambda \to \infty$.

Second, we compare $h_1(H_1^{-1}(\Lambda))$ and $h_2(H_2^{-1}(\Lambda))$. Because $h_i(s) = (r_0 + Q_i(s))^{-1}$, this reduces to comparing $Q_1(H_1^{-1}(\Lambda))$ and $Q_2(H_2^{-1}(\Lambda))$. Let $\varepsilon > 0$ be a constant satisfying $\delta_1 - \delta_2 - \varepsilon > 0$.
Using the bound established in Lemma~\ref{lemma5.3}(3), we obtain
\begin{align*}
\dfrac{Q_1(H_1^{-1}(\Lambda))}{Q_2(H_2^{-1}(\Lambda))} \le C \dfrac{ e^{-\frac{1}{2}\delta_1 H_1^{-1}(\Lambda)}}{ e^{-\frac{1}{2}(\delta_2 + \varepsilon) H_2^{-1}(\Lambda)}}.
\end{align*}
Substituting $H_1^{-1}(\Lambda) = H_2^{-1}(\Lambda) + \Delta K + \xi(\Lambda)$ into the exponent yields
\begin{align*}
\dfrac{Q_1(H_1^{-1}(\Lambda))}{Q_2(H_2^{-1}(\Lambda))}\le C e^{-\frac{1}{2}\delta_1 (\Delta K + \xi(\Lambda))} \cdot e^{-\frac{1}{2}(\delta_1 - \delta_2 - \varepsilon) H_2^{-1}(\Lambda)}.
\end{align*}
As $\Lambda \to \infty$, we have $\xi(\Lambda) \to 0$, so the first exponential term converges to a finite and positive constant. Moreover, $H_2^{-1}(\Lambda) \to \infty$, so the second term tends to $0$. Hence, the ratio $\frac{Q_1(H_1^{-1}(\Lambda))}{Q_2(H_2^{-1}(\Lambda))}$ tends to 0 as $\Lambda \to \infty$.
It follows that there exists a constant $\overline \Lambda > 0$ such that, for all $\Lambda > \overline \Lambda$, we have $Q_1(H_1^{-1}(\Lambda)) < Q_2(H_2^{-1}(\Lambda))$ and thus
\begin{align}\label{eq:h_inequality}
h_1(H_1^{-1}(\Lambda)) > h_2(H_2^{-1}(\Lambda)) \quad \text{for all } \Lambda > \overline \Lambda.
\end{align}

Next, we compare the risk exposure magnitudes at time $0$. For any market price of risk $\lambda(\cdot)$ satisfying $\lambda(0)\neq 0$ and $\Lambda(0) \triangleq \int_0^T |\lambda(t)|^2 \, \md t > \overline{\Lambda}$. We have $v_i(0) = H_i^{-1}(\Lambda(0))$ according to Theorem~\ref{theorem3.3}(1). Applying \eqref{eq:h_inequality} with $\Lambda = \Lambda(0)$, we obtain $h_1(v_1(0)) > h_2(v_2(0))$ because $\Lambda(0) > \overline{\Lambda}$. Then, we conclude that $|a_1(0)| > |a_2(0)|$ because the risk exposure magnitude satisfies $|a_i(0)| = h_i(v_i(0))|\lambda(0)|$ by (\ref{eq:equilibrium:g}).

Finally, because $v_i(T)=0$, we have $h_i(v_i(T))=\dfrac{1}{\mEt[R_i]}$. The assumption $\mEt[R_1]>\mEt[R_2]$ then implies $h_1(0)<h_{2}(0)$. Combined with $\lambda(T-)\ne0$, we obtain
$|a_1(T-)|<|a_2(T-)|$.
\subsection{Proof of Proposition \ref{proposition5.3}}\label{sec:proof:croos}

\paragraph{Step 1:} Fix a right-continuous and deterministic function $\lambda(\cdot) \in L^2([0,T))$.
We show that there exists at most one $t^*\in(0,T)$ such that $|a_1(t^*)|=|a_2(t^*|$.  Indeed, let $t^* \in (0,T)$ be arbitrary and satisfy $|a_1(t^*)|=|a_2(t^*|$.  Let $\mathcal{D}(t) \triangleq Q_1(v_1(t)) - Q_2(v_2(t))$. Then, $\mathcal{D}(t^*)=0$.  We now investigate the derivative of $\mathcal{D}(\cdot)$ at $t^*$. 
Recalling the expression for $Q_i(x)$ in \eqref{eq:Xi:Qi}, we have
\begin{align*}
Q_i(x) = \frac{\mEt[\boldsymbol{X}_i e^{-\boldsymbol{X}_i \frac{x}{2}}]}{\mEt[e^{-\boldsymbol{X}_i \frac{x}{2}}]} = \frac{\delta_i(1-p_i)e^{-\delta_i \frac{x}{2}}}{p_i + (1-p_i)e^{-\delta_i \frac{x}{2}}}, \quad x\in[0,\infty).
\end{align*}
Note that
\begin{align*}
Q_i'(x) &= \frac{-\frac{\delta_i^2}{2}(1-p_i)e^{-\delta_i \frac{x}{2}} \left( p_i + (1-p_i)e^{-\delta_i \frac{x}{2}} \right) - \delta_i(1-p_i)e^{-\delta_i \frac{x}{2}} \left( -\frac{\delta_i}{2}(1-p_i)e^{-\delta_i \frac{x}{2}} \right)}{\left( p_i + (1-p_i)e^{-\delta_i \frac{x}{2}} \right)^2} \\
 &= -\frac{\delta_i}{2} \left( \frac{\delta_i(1-p_i)e^{-\delta_i \frac{x}{2}}}{p_i + (1-p_i)e^{-\delta_i \frac{x}{2}}} \right) + \frac{1}{2} \left( \frac{\delta_i(1-p_i)e^{-\delta_i \frac{x}{2}}}{p_i + (1-p_i)e^{-\delta_i \frac{x}{2}}} \right)^2 \\
&= -\frac{\delta_i}{2} Q_i(x) + \frac{1}{2} Q_i(x)^2 \\
&= \frac{1}{2} Q_i(x) (Q_i(x) - \delta_i).
\end{align*}
Differentiating $\mathcal{D}(t)$ and using $v_i'(t) = -|\lambda(t)|^2 (r_0 + Q_i(v_i(t)))^{-2}$ yields
\begin{align*}
\mathcal{D}' &= Q_1'(v_1)v_1' - Q_2'(v_2)v_2' \\
&= \frac{1}{2}Q_1(v_1)(Q_1(v_1)-\delta_1) \left( \frac{-|\lambda|^2}{(r_0+Q_1(v_1))^2} \right) - \frac{1}{2}Q_2(v_2)(Q_2(v_2)-\delta_2) \left( \frac{-|\lambda|^2}{(r_0+Q_2(v_2))^2} \right) \\
&= \frac{|\lambda|^2}{2} \left[ \frac{Q_2(v_2)(Q_2(v_2)-\delta_2)}{(r_0+Q_2(v_2))^2} - \frac{Q_1(v_1)(Q_1(v_1)-\delta_1)}{(r_0+Q_1(v_1))^2} \right].
\end{align*}
For any $t^* \in (0, T)$ such that $\mathcal{D}(t^*) = 0$, we have $Q \triangleq Q_1(v_1(t^*)) = Q_2(v_2(t^*)) > 0$. 
Therefore,
\begin{align*}
\mathcal{D}'(t^*) = \frac{|\lambda(t^*)|^2 Q}{2(r_0+Q)^2} (\delta_1 - \delta_2)>0.
\end{align*}
Thus, there exists a unique $t^*\in(0,T)$ such that $\mathcal{D}(t^*)=0$.

\paragraph{Step 2:}
Define
\begin{align}
    \mathcal{Y}(\Lambda) \triangleq h_1(H_1^{-1}(\Lambda)) - h_2(H_2^{-1}(\Lambda)),\quad \Lambda \in [0, \infty).
\end{align}
We show there exists  a unique $\widehat{\Lambda}\in(0,\infty)$ such that $\mathcal{Y}(\widehat{\Lambda})=0$.  At $\Lambda = 0$, we have $H_i^{-1}(0) = 0$ and $h_i(0) = \dfrac{1}{\mEt[\boldsymbol{R}_i]}$. The assumption $\mEt[\boldsymbol{R}_1] > \mEt[\boldsymbol{R}_2]$ implies $h_1(0) < h_2(0)$, and thus $\mathcal{Y}(0) < 0$. Conversely, as $\Lambda \to \infty$, \eqref{eq:h_inequality} implies that $h_1(H_1^{-1}(\Lambda)) > h_2(H_2^{-1}(\Lambda))$. Therefore, $\mathcal{Y}(\Lambda) > 0$ for sufficiently large $\Lambda$.
By the mean value theorem, there exists at least one $\widehat{\Lambda} \in (0, \infty)$ such that $\mathcal{Y}(\widehat{\Lambda}) = 0$.

We next prove the uniqueness of $\widehat{\Lambda}$.
Suppose, by contradiction, that there exist $0 < \widehat{\Lambda}_1 < \widehat{\Lambda}_2 < \infty$ such that
$\mathcal{Y}(\widehat\Lambda_i) = 0(i=1,2)$.  
Choose a constant market price of risk $\lambda(t) \equiv \lambda_0 \neq 0$ such that
$\Lambda(0) = |\lambda_0|^{2}T > \widehat{\Lambda}_2$.  
Then there exist $0 < t_1 < t_2 < T$ such that $\Lambda(t_i) = \widehat{\Lambda}_i$.
Consequently, $\mathcal{D}(t_1) = \mathcal{D}(t_2) = 0$, which contradicts the conclusion in Step~1.
Therefore, $\widehat{\Lambda}$ is unique.

\paragraph{Step 3:}
We now apply the conclusions in Steps 1 and 2 to prove the three assertions:
\paragraph{(1)}
First, assume $\displaystyle\int_0^T |\lambda(t)|^2 \md t < \widehat{\Lambda}$. We have $\Lambda(t) \in [0, \Lambda(0)] \subset [0, \widehat{\Lambda})$ for all $t \in [0, T)$. Within this interval, $\mathcal{Y}(\Lambda(t)) < 0$, which implies $h_1(v_1(t)) < h_2(v_2(t))$. By \eqref{eq:equilibrium:g}, $|a_1(t)| < |a_2(t)|$ for all $t \in [0, T)$.

\paragraph{(2)}
Next, assume $\displaystyle\int_0^T |\lambda(t)|^2 \md t > \widehat{\Lambda}$. By Proposition \ref{proposition5.2}, we have $|a_1(T-)| < |a_2(T-)|$ . 
Moreover, because $\Lambda(0) > \widehat{\Lambda}$, it follows that $\mathcal{Y}(\Lambda(0)) > 0$, and hence $|a_1(0)| > |a_2(0)|$. Combining this with the result of Step~1, we conclude that there exists a unique time $t^* \in (0,T)$ such that $|a_1(t^*)|=|a_2(t^*)|$. The remaining conclusions then follow immediately.

\paragraph{(3)}
Finally, assume $\displaystyle\int_0^T |\lambda(t)|^2 \md t = \widehat{\Lambda}$.
Then $\Lambda(0) = \widehat{\Lambda}$, and hence $\mathcal{Y}(\Lambda(0)) = 0$, which implies $|a_1(0)| = |a_2(0)|$.
For any $t \in (0, T)$, we have $\Lambda(t) < \widehat{\Lambda}$.
Thus $\mathcal{Y}(\Lambda(t)) < 0$, and hence $|a_1(t)| < |a_2(t)|$ for all $t \in (0, T)$.
\subsection{Proof of Proposition \ref{proposition5.5}}\label{append:t^*P}

Throughout the proof, for $i=1,2$ and a fixed $p\in I$, we write $a_{i,p}$, $Q_{i,p}$, etc., to denote the corresponding functions in Propositions \ref{proposition5.2} and \ref{proposition5.3} corresponding to a given $p\in I$.

We first show that $I$ is an open set. Indeed, define $\mathcal{D}(t, p) \triangleq Q_{1,p}(v_{1,p}(t)) - Q_{2,p}(v_{2,p}(t))$. The crossing time $t^*(p)$ is characterized by $\mathcal{D}(t^*(p), p) = 0$. From the proof of Proposition \ref{proposition5.3}, we know that $\frac{\partial \mathcal{D}}{\partial t}(t^*(p), p) > 0$. Hence, by the implicit function theorem, for each $p\in I$ there exists an open neighborhood in $(0,1)$ on which $t^*(\cdot)$ is uniquely defined and continuously differentiable. This implies that $I$ is an open set.

Recall that $Q_{i,p}(x) = \dfrac{\delta_i (1-p) e^{-\delta_i \frac{x}{2}}}{p + (1-p) e^{-\delta_i \frac{x}{2}}}$, which is strictly decreasing in $x$. Define $U_i(p) \triangleq Q_{i,p}(0) =\delta_i(1-p)$ and $G_{i}(q) \triangleq \dfrac{(r_0+q)^2}{q(\delta_i-q)}$ for $p\in I$ and $q\in (0,\delta_i)$. Then
\begin{align}\label{eq:t^*P}
    \Lambda(t) = \int_0^{v_{i,p}(t)} (r_0 + Q_{i,p}(x))^2 \, \md x = 2 \int_{Q_{i,p}(v_{i,p}(t))}^{U_i(p)} G_{i}(q) \, \md q,
\end{align}
where the second equality arises from the change of variable $q = Q_{i,p}(x)$, which yields $\md x = \dfrac{2}{q(q-\delta_i)} \, \md q$.
For $p\in I$, set $\Lambda^*(p) \triangleq \Lambda(t^*(p))$ and $Q(p) \triangleq Q_{1,p}(v_{1,p}(t^*(p))) = Q_{2,p}(v_{2,p}(t^*(p)))$. Equation \eqref{eq:t^*P} then implies
\begin{align}\label{eq:Psi}
    \Lambda^*(p) = 2 \int_{Q(p)}^{U_1(p)} G_1(q) \, \md q = 2 \int_{Q(p)}^{U_2(p)} G_2(q) \, \md q.
\end{align}
Differentiating \eqref{eq:Psi} w.r.t. $p$ and using $U_i'(p)=-\delta_i$, we obtain
\begin{align}\label{eq:Q'}
    \dfrac{1}{2} (\Lambda^*)'(p) = -\delta_1 G_1(U_1(p)) - G_1(Q(p)) Q'(p)=-\delta_2 G_2(U_2(p)) - G_2(Q(p)) Q'(p).
\end{align}
Because $\delta_1 - Q(p) > \delta_2 - Q(p) > 0$, it follows that $G_1(Q(p)) < G_2(Q(p))$. 
Solving the second equality in \eqref{eq:Q'} for $Q'(p)$ and substituting into the first yields
\begin{align}\label{differential}
    \dfrac{1}{2} (\Lambda^*)'(p) = \dfrac{\delta_2 G_2(U_2(p)) G_1(Q(p)) -\delta_1 G_1(U_1(p)) G_2(Q(p))}{G_2(Q(p)) - G_1(Q(p))}.
\end{align}
Next, a direct computation shows that
\begin{align*}
    \dfrac{\delta_1 G_1(U_1(p))}{G_1(Q(p))} &= \dfrac{(r_0 + \delta_1(1-p))^2}{ p(1-p)\delta_1} \cdot \dfrac{Q(p)(\delta_1-Q(p))}{(r_0+Q(p))^2} \\
    &= \dfrac{Q(p)}{p(1-p)(r_0+Q(p))^2} (r_0 + \delta_1(1-p))^2 \left(1-\dfrac{Q(p)}{\delta_1}\right).
\end{align*}
Similarly,
\begin{align*}
    \dfrac{\delta_2 G_2(U_2(p))}{G_2(Q(p))} &= \dfrac{Q(p)}{p(1-p)(r_0+Q(p))^2} (r_0 + \delta_2(1-p))^2 \left(1-\dfrac{Q(p)}{\delta_2}\right).
\end{align*}
Because $\delta_1 > \delta_2$, it holds that $(r_0 + \delta_1(1-p))^2 > (r_0 + \delta_2(1-p))^2$ and $1 - \dfrac{Q(p)}{\delta_1} > 1 - \dfrac{Q(p)}{\delta_2}$. Therefore, $\dfrac{\delta_1 G_1(U_1(p))}{G_1(Q(p))} > \dfrac{\delta_2 G_2(U_2(p))}{G_2(Q(p))}$. 
Consequently, $(\Lambda^*)'(p) < 0$.  Recall that $\Lambda^*(p) = \Lambda(t^*(p))$. Differentiating both sides w.r.t. $p$ yields
\begin{align*}
    (\Lambda^*)'(p) = \Lambda'(t^*(p)) \frac{\md t^*}{\md p}(p) = -|\lambda(t^*(p))|^2 \frac{\md t^*}{\md p}(p),
\end{align*}
Therefore,
\begin{align*}
    \frac{\md t^*}{\md p}(p) = -\frac{(\Lambda^*)'(p)}{|\lambda(t^*(p))|^2} > 0 \quad \text{for all } p \in I.
\end{align*}

\subsection{Proof of Proposition \ref{proposition5.6}}\label{append:t_1*p}

We retain the notation from the proof of Proposition \ref{proposition5.5}.
Define
$\mathcal{D}(t, p_1, p_2) \triangleq Q_{1, p_1}(v_{1, p_1}(t)) - Q_{2, p_2}(v_{2, p_2}(t))$. By definition, the crossing time $t^*(p_1,p_2)$ satisfies $\mathcal{D}(t^*(p_1,p_2), p_1, p_2) = 0$.
From the proof of Proposition \ref{proposition5.3}, we know that $\frac{\partial \mathcal{D}}{\partial t}(t^{*}(p_1,p_2),p_1,p_2) > 0$.
Therefore, by the implicit function theorem, for any $(p_1,p_2)\in O$, there exists an open neighborhood in $(0,1)^2$ on which $t^*(p_1,p_2)$ is uniquely defined and continuously differentiable. In particular, $O$ is an open set and $t^*$ is differentiable in $O$.

For any $(p_1, p_2) \in O$, define $\Lambda^*(p_1, p_2) \triangleq \Lambda(t^*(p_1, p_2))$ and  $Q(p_1, p_2) \triangleq Q_{1,p_1}(v_{1,p_1}(t^*(p_1,p_2))) = Q_{2,p_2}(v_{2,p_2}(t^*(p_1,p_2)))$.
Analogous to \eqref{eq:Psi}, we obtain
\begin{align}\label{eq:Psi'}
    \Lambda^*(p_1, p_2) = 2\int_{Q(p_1, p_2)}^{U_1(p_1)} G_1(q) \, \md q = 2\int_{Q(p_1, p_2)}^{U_2(p_2)} G_2(q) \, \md q.
\end{align}
Differentiating \eqref{eq:Psi'} w.r.t. $p_1$, and using $\dfrac{\partial U_1}{\partial p_1} = -\delta_1$ and $\dfrac{\partial U_2}{\partial p_1} = 0$, we obtain

\begin{align}
    \dfrac{1}{2} \dfrac{\partial\Lambda^*}{\partial p_1} = -\delta_1 G_1(U_1(p_1)) - G_1(Q(p_1, p_2)) \dfrac{\partial Q(p_1, p_2)}{\partial p_1}= - G_2(Q(p_1, p_2)) \dfrac{\partial Q(p_1, p_2)}{\partial p_1}. \label{eq:sys1_b}
\end{align}
 As shown in the proof of Proposition \ref{proposition5.5}, $\delta_1 > \delta_2$ implies $G_1(Q(p_1, p_2)) < G_2(Q(p_1, p_2))$. 
Solving for $\displaystyle\frac{\partial Q(p_1, p_2)}{\partial p_1}$ from the second equality of \eqref{eq:sys1_b} and  substituting into the first yields
\begin{align*}
     \dfrac{1}{2}\dfrac{\partial\Lambda^*}{\partial p_1} = \dfrac{- \delta_1 G_1(U_1(p_1)) G_2(Q(p_1, p_2))}{G_2(Q(p_1, p_2)) - G_1(Q(p_1, p_2))}<0.
\end{align*}
 Applying the Chain Rule to the identity $\Lambda^*(p_1, p_2) = \Lambda(t^*(p_1, p_2))$ yields
\begin{align*}
    \dfrac{\partial \Lambda^*}{\partial p_1} = \Lambda'(t^*) \dfrac{\partial t^*}{\partial p_1} = -|\lambda(t^*)|^2 \dfrac{\partial t^*}{\partial p_1}.
\end{align*}
Consequently, $\frac{\partial t^*}{\partial p_1}>0$.
Similarly, we obtain $\frac{\partial t^*}{\partial p_2}<0$.

\subsection{Proof of Proposition \ref{proposition5.8}}\label{Append:rh}
From Lemma \ref{proposition5.7}, we know that $h_1(x) < h_2(x)$ for all  $x\ge 0$. it follows that $H_1(y) > H_2(y)$ for all $y > 0$, and in particular, $H_1(\infty) \ge H_2(\infty)>\Lambda(0)$.  This guarantees that the deterministic equilibrium strategy associated with each ${\boldsymbol{R}_i}$ exists and is unique. Moreover, the inverse functions of $H_i$ ($i=1, 2$) satisfy $H_1^{-1}(z) < H_2^{-1}(z)$ for all $z > 0$. From Theorem \ref{theorem3.3}(1), we have $v_i(t) = H_i^{-1}(\Lambda(t))$ ($i=1, 2$) . Consequently,  the monotonicity of $H_i^{-1}$ ($i=1, 2$) implies $v_1(t) \leq v_2(t)$. By (\ref{eq:equilibrium:g}), it holds that $|a_i(t)| = h_i(v_i(t)) |\lambda(t)|$ ($i=1, 2$). Using Lemma \ref{proposition5.7}, we further obtain $h_1(v_1(t)) \leq h_2(v_1(t))\le h_2(v_2(t))$. Therefore,  $|a_1(t)| \le |a_2(t)|$. If $\lambda(t) \neq 0$, then $v_1(t) > 0$ and  $h_1(v_1(t)) < h_2(v_1(t)) \le h_2(v_2(t))$, which yields the strict inequality: $|a_1(t)| < |a_2(t)|$. 

\subsection{Proof of Proposition \ref{proposition5.10}}\label{append:convex}
Let $h(x)$ and $v(t)$ (resp.\ $h_i(x)$ and $v_i(t)$) denote the functions associated with $\boldsymbol{R}$ (resp.\ $\boldsymbol{R}_i$).
From the definition of $h$ and the fact that $\boldsymbol{R}_1,\cdots,\boldsymbol{R}_n$ are i.i.d., we have
\begin{align}
&\frac{1}{h(x)} = \frac{\mEt[\boldsymbol{R} e^{-\frac{1}{2}\boldsymbol{R} x}]}{\mEt[e^{-\frac{1}{2}\boldsymbol{R} x}]} = \frac{\mEt\left[ \displaystyle\left(\sum_{i=1}^n w_i \boldsymbol{R}_i\right) \prod_{k=1}^n e^{-\frac{1}{2} w_k \boldsymbol{R}_k x} \right]}{\displaystyle\mEt\left[ \prod_{k=1}^n e^{-\frac{1}{2} w_k \boldsymbol{R}_k x} \right]}\nonumber\\
=&\frac{\displaystyle\sum_{i=1}^n w_i \mEt\left[ \boldsymbol{R}_i e^{-\frac{1}{2} w_i \boldsymbol{R}_i x} \prod_{k \neq i} e^{-\frac{1}{2} w_k \boldsymbol{R}_k x} \right]}{\displaystyle\prod_{k=1}^n \mEt\left[ e^{-\frac{1}{2} w_k \boldsymbol{R}_k x} \right]}=\sum_{i=1}^n w_i \frac{\mEt[\boldsymbol{R}_i e^{-\frac{1}{2}w_i \boldsymbol{R}_i x}]}{\mEt[e^{-\frac{1}{2}w_i \boldsymbol{R}_i x}]}=\sum_{i=1}^n w_i \frac{1}{h_1(w_i x)}.\label{h:sum}
\end{align}
By Lemma \ref{h:mono}, we have 
$\dfrac{1}{h_1(w_i x)} \geq \dfrac{1}{h_1(x)}$ for any $w_i\in(0,1)$ and $x\geq 0$,  which combined with \eqref{h:sum} yields
$h(x) \leq h_1(x)$ and $H(\infty)\ge H_1(\infty)>\Lambda(0)$.  Consequently, the equilibrium strategy associated with $\boldsymbol{R}$ also exists and is unique.
Analogous to the argument in Proposition \ref{proposition5.8}, we further obtain 
$v(t) \leq v_1(t)$. Because $h(x)$ is increasing in $x$, it follows that
$$ |a(t)| = h(v(t))|\lambda(t)| \leq h(v_1(t))|\lambda(t)| \leq h_1(v_1(t))|\lambda(t)| = |a_1(t)|. $$
If $\boldsymbol{R}_1$ is non-degenerate, then by Lemma~\ref{h:mono}, $h_1$ is strictly increasing, which in turn yields the strict inequality $h(x)<h_1(x)$ for  any $x>0$. Combined with the condition $\lambda(t)\neq 0$, we obtain $|a(t)| < |a_1(t)|$.

\subsection{Proof of Proposition \ref{prop:lln}}\label{proof:prop:lln}
Let $\tilde h_n(\cdot)$ and $\tilde v_n(\cdot)$ be the functions associated with the risk aversion $\boldsymbol{Y}_n$.
From \eqref{h:sum}, we have
\begin{align*}
\tilde h_n(x) = h_1\left(\frac{x}{n}\right)\leq h_1(x).
\end{align*}
Then we have $H(\infty)\ge H_1(\infty)>\Lambda(0)$ and the deterministic equilibrium strategy of $\boldsymbol{Y}_n$ exists and is unique. Moreover, by Lemma~\ref{proposition5.7}, for any fixed $x\ge 0$, the sequence $\{\tilde h_n(x)\}_{n\ge1}$ is decreasing in $n$. Because $\tilde H_n(y) = \displaystyle\int_0^y \tilde h_n^{-2}(z) dz$ and $\tilde v_n(t) = \tilde H_n^{-1}(\Lambda(t))$, the monotonicity of $\tilde h_n$ implies that $\tilde v_n(t)$ is  decreasing in $n$. Consequently, for all $t\in[0,T)$,
\begin{align*}
\tilde h_{n+1}(\tilde v_{n+1}(t)) \leq \tilde h_n(\tilde v_{n+1}(t)) \leq \tilde h_n(\tilde v_n(t)),
\end{align*}
which yields
\begin{align*}
|\tilde a_{n+1}(t)| \le |\tilde a_n(t)| \quad \text{ for all }t \in [0, T).
\end{align*}
By the strong law of large numbers, $\boldsymbol{Y}_n\to\mu$ almost surely as $n\to\infty$. Hence, $\boldsymbol{Y}_n$ converges in distribution to the degenerate random variable $\boldsymbol{Y}\equiv\mu$. Furthermore, the assumptions of Theorem~\ref{thm:converge:distribution} are satisfied because $\mEt[\boldsymbol{Y}_n]=\mu$ for all $n$. We therefore conclude that
\begin{align*}
\lim_{n \to \infty} |\tilde a_n(t)| = \frac{|\lambda(t)|}{\mu}.
\end{align*} 
Combining this limit with the monotonicity established above yields \eqref{conv:an}.
\end{appendices}

\renewcommand{\baselinestretch}{1}\normalsize
\bibliographystyle{abbrvnat}
\bibliography{sample}

\end{document}